\newif\iftechrep
  \providecommand\BibTeX{{%
    \normalfont B\kern-0.5em{\scshape i\kern-0.25em b}\kern-0.8em\TeX}}}
\DeclarePairedDelimiter{\floor}{\lfloor}{\rfloor}
\DeclarePairedDelimiter{\ceil}{\lceil}{\rceil}
\algrenewcommand\textproc{}% Used to be \textsc
\algnewcommand\algorithmicforeach{\textbf{for each}}
\newcommand{\yes}{\ding{51}}
\newcommand{\no}{\ding{55}}
\tikzset{process/.style={circle,draw,inner sep=0pt, minimum size=2em}}
\definecolor{olive}{rgb}{0.3, 0.4, .1}
\definecolor{pinegreen}{cmyk}{0.92,0,0.59,0.25}
\newtheorem{definition}{Definition}
\newtheorem{theorem}{Theorem}[section]
\newtheorem{corollary}[theorem]{Corollary}
\newtheorem{lemma}[theorem]{Lemma}
\providecommand{\customgenericname}{}
\newcommand{\newcustomtheorem}[2]{%
  \newenvironment{#1}[1]
  {%
   \renewcommand\customgenericname{#2}%
   \renewcommand\theinnercustomgeneric{##1}%
   \innercustomgeneric
  }
  {\endinnercustomgeneric}
}
\tikzstyle{cblue}=[circle, draw, thin,fill=cyan!20, scale=0.8]
\tikzstyle{qgre}=[rectangle, draw, thin,fill=green!20, scale=0.8]
\tikzstyle{rpath}=[ultra thick, opacity=0.4]
\tikzstyle{legend_isps}=[rectangle, rounded corners, thin, 
\tikzstyle{legend_overlay}=[rectangle, rounded corners, thin,
\tikzstyle{legend_phytop}=[rectangle, rounded corners, thin,
\tikzstyle{legend_general}=[rectangle, rounded corners, thin,
\tikzstyle{decision} = [diamond, draw, fill=green!20, 
\tikzstyle{block} = [rectangle, draw, fill=green!20, 
\tikzstyle{block2} = [rectangle, fill=green!20, 
\tikzstyle{line} = [draw, -latex']
\tikzstyle{cloud} = [draw, ellipse,fill=purple!20, node distance=3cm,
\tikzset{node/.style={circle,draw,inner sep=0pt, minimum size=2em}}
\tikzset{
  treenode/.style = {align=center, inner sep=0pt, text centered,
    font=\sffamily},
  arn_n/.style = {treenode, circle, white, font=\sffamily\bfseries, draw=black,
    fill=black, text width=1.5em},% arbre rouge noir, noeud noir
  arn_r/.style = {treenode, circle, red, draw=red, 
    text width=1.5em, very thick},% arbre rouge noir, noeud rouge
  arn_x/.style = {treenode, rectangle, draw=black,
    minimum width=0.5em, minimum height=0.5em}% arbre rouge noir, nil
}
\tikzset{main node/.style={circle,fill=blue!20,draw,inner sep=0pt, minimum size=2em},
}
\tikzset{node message/.style={circle split,fill=blue!20,draw,inner sep=0pt},
}
\tikzset{node home/.style={star,star points=7,star point ratio=0.8,fill=green!20,draw,minimum size=0.5cm,inner sep=0pt},}
\tikzset{man op/.style={ % requires library shapes.geometric
        draw,
        trapezium,
        shape border rotate=180,
        text width=2cm,
        align=center,
      },}
\tikzset{main node/.style={circle,fill=blue!20,draw,minimum size=1cm,inner sep=0pt},
}
\tikzset{node message/.style={circle split,fill=blue!20,draw,minimum size=0.5cm,inner sep=0pt},
}
\tikzset{node home/.style={star,star points=7,star point ratio=0.8,fill=green!20,draw,minimum size=0.5cm,inner sep=0pt},}
\begin{document}
\thispagestyle{plain}
\pagestyle{plain}

\newcommand{\BVbcast}{\ensuremath{\scriptsize \lit{bv-broadcast}}}
\newcommand{\AUX}{\text{\sc aux}\xspace}
\newcommand{\ECHO}{\text{\sc echo}\xspace}
\newcommand{\POF}{\text{\sc pof}\xspace}
\newcommand{\EST}{\text{\sc est}\xspace}
\newcommand{\COORD}{\text{\sc coord}\xspace}
\newcommand{\MSG}{\text{\sc msg}\xspace}
\newcommand{\BVALECHO}{\text{\sc bvecho}\xspace}
\newcommand{\BVALREADY}{\text{\sc bvready}\xspace}
\newcommand{\INIT}{\text{\sc init}\xspace}
\newcommand{\READY}{\text{\sc ready}\xspace}
\newcommand{\CERT}{\text{\sc cert}\xspace}
\algnewcommand{\LeftComment}[1]{\Statex \(\triangleright\) #1}

\newcommand{\arcomm}[1]{\todo[color=green,bordercolor=black,linecolor=black]{\textsf{\scriptsize\linespread{1}\selectfont ARP: #1}}}
\newcommand{\arcommin}[1]{\todo[inline,color=green,bordercolor=green,linecolor=green]{\textsf{ARP: #1}}}
\newcommand{\system}{[system name] }

\newcommand{\boxedtext}[1]{\fbox{\scriptsize\bfseries\textsf{#1}}}

\newcommand{\greenremark}[2]{
   \textcolor{pinegreen}{\boxedtext{#1}
      {\small$\blacktriangleright$\emph{\textsl{#2}}$\blacktriangleleft$}
    }}

 \definecolor{burntorange}{rgb}{0.8, 0.33, 0.0}
  \newcommand{\changeremark}[2]{
   \textcolor{burntorange}{\boxedtext{#1}
      {\small$\blacktriangleright$\emph{\textsl{#2}}$\blacktriangleleft$}
}}
\newcommand{\myremark}[2]{
   \textcolor{pinegreen}{\boxedtext{#1}
      {\small$\blacktriangleright$\emph{\textsl{#2}}$\blacktriangleleft$}
    }}
  \newcommand{\myremarknew}[2]{
   \textcolor{violet}{\boxedtext{#1}
      {\small$\blacktriangleright$\emph{\textsl{#2}}$\blacktriangleleft$}
}}
\newcommand{\NewRemark}[2]{
   \textcolor{violet}{
      {\small$\blacktriangleright$\emph{\textsl{#2}}$\blacktriangleleft$}
}}

\newcommand{\redremark}[2]{
   \textcolor{red}{\boxedtext{#1}
      {\small$\blacktriangleright$\emph{\textsl{#2}}$\blacktriangleleft$}
    }}
  
  \newcommand\ARP[1]{\myremark{ARP}{#1}}
  \newcommand\ARPN[1]{\myremarknew{ARPN}{#1}}
\newcommand\vincent[1]{\redremark{VG}{#1}}
\newcommand{\warning}[1]{\redremark{\fontencoding{U}\fontfamily{futs}\selectfont\char 66\relax}{#1}}
\newcommand\NEW[1]{\NewRemark{NEW}{#1}}
\newcommand\TODO[1]{\greenremark{TODO}{#1}}
\newcommand\CHANGE[1]{\changeremark{CHANGE(?)}{#1}}

%-------------------------------------------------------------------------------

\newcommand{\cref}[1]{{\S\ref{#1}}}
\newcommand{\protocol}{Basilic\xspace} %Inspired from the large cannon that played a key-role in winning against the Byzantines (fall of Constantinople)
\newcommand{\myproperty}{active accountability\xspace} %Inspired from the large cannon that played a key-role in winning against the Byzantines (fall of Constantinople)
\newcommand{\Myproperty}{Active accountability\xspace} %Inspired from the large cannon that played a key-role in winning against the Byzantines (fall of Constantinople)
\newcommand{\mypropertyadj}{actively accountable\xspace} %Inspired from the large cannon that played a key-role in winning against the Byzantines (fall of Constantinople)
\newcommand{\Mypropertyadj}{Actively accountable\xspace} %Inspired from the large cannon that played a key-role in winning against the Byzantines (fall of Constantinople)
\newcommand{\problem}{actively accountable consensus\xspace}

\newcommand{\mypar}[1]{\paragraph{#1}}
\newcommand{\markerthree}{\raisebox{0.5pt}{\tikz{\node[scale=0.3,regular polygon, regular polygon sides=3,fill=blue!70,rotate=0](){};}}}

\newenvironment{smallenum}{
\begin{enumerate}[%
leftmargin=6pt,
labelsep=2pt,
rightmargin=0pt,
labelwidth=2pt,
itemindent=2pt,
listparindent=2pt,
topsep=4pt plus 2pt minus 4pt,
partopsep=4pt,
itemsep=0pt,
parsep=-6pt
]
    \setlength{\parskip}{-1pt}
}{\end{enumerate}}

\newenvironment{smallitem}{
\begin{itemize}[%
leftmargin=6pt,
labelsep=2pt,
rightmargin=0pt,
labelwidth=2pt,
itemindent=2pt,
listparindent=2pt,
topsep=4pt plus 2pt minus 4pt,
partopsep=4pt,
itemsep=0pt,
parsep=-6pt
]
    \setlength{\parskip}{-1pt}
}{\end{itemize}}

%don't want date printed
\date{}

% make title bold and 14 pt font (Latex default is non-bold, 16 pt)
%\title{Blockchain Is Dead, Long Live Blockchain!}
%\title{Exploiting Accountability to Strengthen Blockchain Fault Tolerance}
\title{Basilic: Resilient Optimal Consensus Protocols With Benign and Deceitful Faults}
% \author{Submission \#117}
% \author{Submission \#pap551s1}
% \IEEEoverridecommandlockouts
% \makeatletter\def\@IEEEpubidpullup{6.5\baselineskip}\makeatother
% \IEEEpubid{\parbox{\columnwidth}{
%     IPDPS
%   }
%     \hspace{\columnsep}\makebox[\columnwidth]{}}
\author{{\rm Alejandro Ranchal-Pedrosa}\\
  \small{University of Sydney} \\
%\textit% name of organization (of Aff.)\\
  \small{Sydney, Australia} \\
  \small{alejandro.ranchalpedrosa@sydney.edu.au}
  \and
  {\rm Vincent Gramoli}\\
  \small{University of Sydney}\\
  \small{Sydney, Australia} \\
  \small{vincent.gramoli@sydney.edu.au}}
\maketitle
\begin{abstract}
The problem of Byzantine consensus has been key to designing secure distributed systems.
However, it is particularly difficult, mainly due to the presence of Byzantine processes
that act arbitrarily and the unknown message delays in general networks.
Although it is well known that both safety and liveness are at risk as soon as $n/3$ Byzantine processes fail,
very few works attempted to characterize precisely the faults that produce safety violations 
from the faults that produce termination violations.
%Given a consensus protocol solution, as soon as $n/3$ Byzantine processes fail 
%then either safety is at risk in that processes may disagree or liveness is at risk
%in that the protocol may not terminate. Despite efforts to distinguish omission from 
%commission faults, very few works attempted to characterize faults that produce safety violations 
%from faults that produce termination violations.

In this paper, we present a new lower bound on the solvability of the consensus problem by 
distinguishing deceitful faults violating safety and benign faults violating termination
from the more general Byzantine faults,
in what we call the Byzantine-deceitful-benign fault model. 
We show that one cannot solve consensus if $n\leq 3t+d+2q$ with $t$ Byzantine processes, 
$d$ deceitful processes, and $q$ benign processes.

In addition, we show that this bound is tight by presenting the Basilic class of consensus protocols
that solve consensus when $n > 3t+d+2q$.
These protocols differ in the number of processes from which they wait to receive messages 
before progressing. Each of these protocols is thus better suited for some applications depending on the 
predominance of benign or deceitful faults.

Finally, we study the fault tolerance of the Basilic class of consensus protocols 
in the context of blockchains that need to solve the weaker
problem of eventual consensus. We demonstrate that Basilic solves this problem with only $n > 2t+d+q$, 
hence demonstrating how it can strengthen blockchain security. \\

\end{abstract}
% \begin{IEEEkeywords}
%   agreement, consensus, fault tolerance, accountability, %alive-but-corrupt, 
%   Byzantine-deceitful-benign,
%   eventual consensus
% \end{IEEEkeywords}
\section{Introduction}

The problem of Byzantine consensus has been key to designing secure distributed systems~\cite{singh2009zeno,CKL09,KWQ12,LVC16}. 
This problem is particularly difficult to solve because a Byzantine participant acts arbitrarily~\cite{LSP82} and message delays are generally unpredictable~\cite{DLS88}. 
Any consensus protocol would fail in this general setting if the number of Byzantine participants is $t\geq n/3$~\cite{DLS88}, where $n$ is the total number of participants.
In some executions, $\lceil n/3\rceil$ Byzantine participants can either prevent the termination of the consensus protocol by stopping or by sending unintelligible messages.
In other executions, $\lceil n/3\rceil$ can violate the agreement property of the consensus protocol by sending conflicting messages.
%Note that conflicting messages could be viewed as a protocol-specific subclass of mutant messages~\cite{} or equivocation~\cite{} that leads to disagreement.

Interestingly, various research efforts were devoted to increase the fault tolerance of consensus protocols in closed networks (e.g., datacenters) 
by distinguishing the type of failures~\cite{CKL09,KWQ12,LVC16,LLM19}.
Some works overcome the $t<n/3$ bound by tolerating a greater number of omission than commission faults~\cite{singh2009zeno,CKL09}. 
These works are naturally well-suited for closed networks where processes are protected from intrusions by a firewall: their processes are supposedly 
more likely to crash than to be corrupted by a malicious adversary. 
In this sense, these protocols favor tolerating a greater number of faults for liveness than for safety.

Unfortunately, fewer research efforts were devoted to explore the fault tolerance of consensus protocols in open networks (e.g., blockchains).
In such settings, participants are likely to cause a disagreement if they can steal valuable assets.
This is surprising given that attacks are commonplace in blockchain systems as illustrated by 
the recent losses of $\mathdollar 70,000$\footnote{\href{https://news.bitcoin.com/bitcoin-gold-51-attacked-network-loses-70000-in-double-spends/}{https://news.bitcoin.com/bitcoin-gold-51-attacked-network-loses-70000-in-double-spends/}} and $\$18$ million\footnote{\href{https://news.bitcoin.com/bitcoin-gold-hacked-for-18-million/}{https://news.bitcoin.com/bitcoin-gold-hacked-for-18-million/}} in Bitcoin Gold, and of $\mathdollar 5.6$ million in Ethereum Classic\footnote{\href{https://news.bitcoin.com/5-6-million-stolen-as-etc-team-finally-acknowledge-the-51-attack-on-network/}{https://news.bitcoin.com/5-6-million-stolen-as-etc-team-finally-acknowledge-the-51-attack-on-network/}}.
Comparatively, some blockchain participants, called miners, are typically monitored continuously so as to ensure they provide some rewards to their owners, hence making it less likely to prevent termination.
To our knowledge, only alive-but-corrupt (abc) processes~\cite{MNR19} characterize the processes that violate consensus safety. 
Unfortunately, abc processes are restricted to only try to cause a
disagreement if the coalition size is sufficiently large to succeed at
the attempt, which is impossible to predict in blockchain systems.

\subsection{Our Results}

In this paper, we present a new lower bound on the solvability of the Byzantine consensus problem 
by precisely exploring these two additional types of faults (that either prevent termination or agreement when $t\geq n/3$).
Our lower bound states that there is no protocol solving consensus in the partially synchronous model~\cite{DLS88} if $n\leq 3t+d+2q$ with $t$ Byzantine processes, 
$d$ deceitful processes, and $q$ benign processes.
These different types of processes define the \emph{Byzantine-deceitful-benign (BDB) failure model} and are characterized by the faults they commit.
First, a \emph{deceitful} process is a process that sends some conflicting messages 
(messages that contribute to a violation of agreement) during its execution. 
Second, a \emph{benign} process is a faulty process that never sends any conflicting messages, contributing to non-termination.
For example, a benign process can crash or send stale messages, or even equivocate as long as its
messages have no effect on the agreement property. These two faults
lie at the core of the consensus problem, as the property of validity
can be locally checked for correctness by correct process, while
termination and agreement can be violated in the presence of enough
malicious processes.Compared to abc
faults, we do not impose the restriction on deceitful processes to
know whether their attack will succeed. This means that while a
protocol might tolerate $d<n/3$ abc faults along with $q<n/3$ benign
faults, it would not necessarily tolerate $d<n/3$ deceitful faults
along with $q<n/3$ benign faults. The contrary direction however
always hold.

% \vincent{Do we consider that a failed attempt to violate safety involve conflicting messages?}\ARP{yes, if those messages contribute to a disagreement, then they are conflicting, even if they end up not managing to cause a disagreement on themselves}
%Similar to the alive-but-corrupt model~\cite{MNR19}, our fault model is protocol-specific: being deceitful means executing instructions to make a particular 
%consensus protocol violate agreement.
% \ARP{I would remove the following sentence, it can make things confusing. Deceitful processes are not Byzantine processes, nor the other way around, but Byzantine processes could deviate exactly like a deceitful (but not only)} Finally, the deceitful and benign processes are Byzantine processes by definition. 
% \vincent{Does it mean that if t-d-q=1, d=1, q=1 then n=5 is not sufficient? it has to be larger than 6?}\ARP{I would not write $t-d-q=1$, but rather just $t=1$. $t$ is referred to Byzantine processes that can deviate in any way, and I think that it would be confusing to consider that deceitful (resp. benign) processes count for $d$ (resp. $q$) and for $t$ as well}

Furthermore, we show that this lower bound is tight, in that we present the Basilic\footnote{The name ``\protocol''
is inspired from the \protocol cannon that Ottomans used to break
through the walls of Constantinople. Much like the cannon, our
\protocol protocol provides a tool to break through the classical
bounds of Byzantine fault tolerance.} 
class of protocols that solves consensus with $n> 3t+d+2q$.
Basilic builds upon recent advances in the context of accountability~\cite{civit2021} by taking into account key messages only if they are cryptographically 
signed by their sender. If they are properly signed, the recipient stores these messages and progresses in the consensus protocol execution.
Recipients also cross-check the messages they received with other recipients, based on the assumption that signatures cannot be forged.
Once conflicting messages are detected, 
they constitute an undeniable proof of fraud to exclude the faulty sender before continuing the protocol execution. 
Thanks to this exclusion, \protocol satisfies a new property, \textit{\myproperty}, which guarantees that deceitful processes can not prevent termination. 

Basilic is a class of consensus protocols, each parameterized by a different \emph{voting threshold} or the number of distinct processes from which a process 
receives messages in order to progress.
For a voting threshold of $h\in(n/2,n]$, \protocol satisfies termination if $h\leq n-q-t$, and agreement if $h>\frac{d+t+n}{2}$. This means that for just one threshold, say $h=2n/3$, \protocol tolerates multiple combinations of faulty processes: it can tolerate $t<n/3,\,q=0$ and $d=0$; but also $t=0,\,q<n/3$ and $d<n/3$; or even $t<n/6,\,q<n/6$ and $d<n/6$. This voting threshold can be modified by an application in order to tolerate any combination of $t$ Byzantine, $d$ deceitful and $b$ benign processes satisfying $n>3t+d+2q$. %The generalization of \protocol to any voting threshold $h$ allows us to define \protocol in order to tolerate more faults for liveness than for safety, or vice versa, by increasing or decreasing the voting threshold $h$.
The generalization of \protocol to any voting threshold $h$ thus allows us to pick the best suited protocol depending on the application requirements. 
If, on the one hand, the application runs in a closed network (e.g., datacenter) dominated by benign processes, then the threshold will be lowered to ensure termination. 
If, on the other hand, the application runs in an open network (e.g., blockchain) dominated by deceitful processes, then the threshold will be raised to ensure agreement.

We illustrate in Figure~\ref{fig:fig1} the new resilient optimal
bounds that \protocol tolerates if there are only deceitful and benign
processes (i.e., for $t=0$), %\vincent{doesn't it contradict the definition that a benign is a Byzantine?}\ARP{a benign fault is not exactly a Byzantine, it is a Byzantine that is restricted to not send conflicting messages}
compared to the classic Byzantine
fault-tolerant (BFT) bound~\cite{DLS88}. We prove that these bounds
are resilient optimal in the Byzantine-deceitful-benign failure
model.
We observe that compared to state-of-the-art accountable consensus protocols,
\protocol satisfies \myproperty and tolerates a greater number of
faults, while maintaining the same time, message and bit complexities in synchronous periods.

\begin{figure}[tp]
  \center
  \includegraphics[width=.7\textwidth]{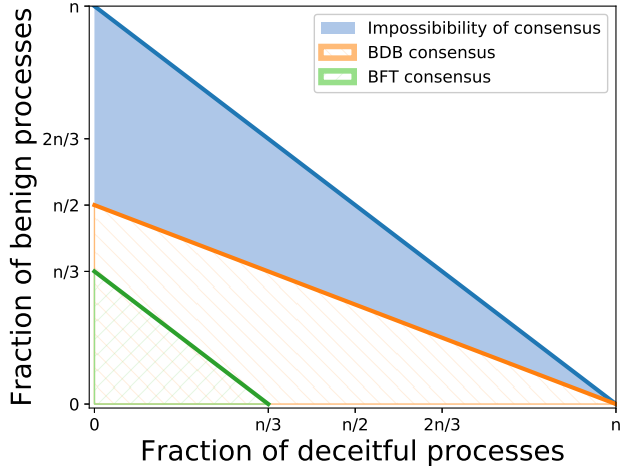}
  \caption{The green area represents the bound for BFT consensus, where $t<n/3$ and thus the same for $d,\,q$, i.e., $d+q<n/3$. The orange area is the new fault tolerance in the Byzantine-deceitful-benign (BDB) failure model, where $d<n-2q$ and $q<n/2$ (for $t=0$). In blue, the area where it is impossible to solve consensus in the BDB model.}
  \label{fig:fig1}
\end{figure}

Finally, we show that our impossibility results can be bypassed when considering a
 weaker variant of the consensus problem particularly appealing for blockchains, called the eventual consensus problem~\cite{Dubois2015} and denoted the $\Diamond$-consensus problem. We show that the \protocol class of protocols also solves $\Diamond$-consensus 
when $n>2t+d+q$, which improves our previous impossibility results by a $t$ and $q$ additive factors.
We refer to the variant of \protocol that solves both consensus and $\Diamond$-consensus as $\Diamond$-\protocol.
In particular, given a voting threshold $h$ that defines a protocol
$\sigma$ from within the $\Diamond$-\protocol class, then $\sigma$ solves
$\Diamond$-consensus if $d+t<h$ for safety and $h\leq n-q-t$ for
liveness.  
%\vincent{I don't
%understand this last sentence.}\ARP{I suppose you do not understand it
%because of a removed preceding paragraph explaining the initial
%threshold. Should be understandable now}

% Finally, we also show that \protocol solves eventual consensus against
% an adversary controlling $t,\,d$ and $q$ Byzantine, deceitful and
% benign faults, respectively, such that $2t+d+q<n$. We call this
% variant $\Diamond$-\protocol. In particular, besides solving consensus for the
% aforementioned bounds, 
% $\Diamond$-\protocol solves eventual consensus if
% $d+t<h_0$ for safety and $h_0\leq n-q-t$ for liveness, where $t,\,d,$
% and $q$, being $h_0$ the initial threshold.

\subsection{Roadmap}
The rest of the paper is structured as follows. In
Section~\ref{sec:model}, we present the model and define the problem.
In Section~\ref{sec:imp}, we present our impossibility result in the
Byzantine-deceitful-benign model while we 
prove that \protocol protocol solves the consensus problem in Section~\ref{sec:prot} and analyze its complexities in Section~\ref{sec:comps}. In Section~\ref{sec:ec}, we introduce and prove \protocol's bounds for the eventual consensus problem.
Finally, we present the related
work in Section~\ref{sec:relw}, and we finally conclude in
Section~\ref{sec:con}.
\section{Model \& Problem}
\label{sec:model}
We consider a committee as a set $N=\{p_0, ..., p_{n-1}\}$ of $|N|=n$ processes. These processes communicate in a partially synchronous network, meaning there is a known bound $\Delta$ on the communication delay that will hold after an unknown Global Stabilization Time (GST)~\cite{DLS88}. Processes communicate through standard all-to-all reliable and authenticated communication channels~\cite{kuznesov2021}, meaning that messages can not be duplicated, forged or lost, but they can be reordered.

\paragraph{Cryptography} We assume a public-key infrastructure (PKI)
in which each party has a public key and a private key, and any
party’s public key is known to all~\cite{xue2021}. As with other protocols that use this standard assumption~\cite{xue2021,abraham2021reach}, we do not require the use of revocation lists (we will remove processes from the committee, but not from their keys from the PKI). We refer to
$\lambda$ as the security parameter, i.e., the number of bits of the
keys. As our claims and proofs require cryptography, they hold except
with $\epsilon(\lambda)$ negligible
probability~\cite{backes2003reliable}.  We formalize negligible
functions measured in the security parameter $\lambda$, which are
those functions that decrease asymptotically faster than the inverse
of any polynomial. Formally, a function $\epsilon(\kappa)$ is
negligible if for all $c>0$ there exists a $\kappa_0$ such that
$\epsilon(\kappa) < 1/\kappa^c$ for all $\kappa >
\kappa_0$~\cite{backes2003reliable}.

\paragraph{Consensus}
A protocol executed by a committee of processes solves the consensus
problem if the following three properties are satisfied by the
protocol:
\begin{itemize}
\item {\bf Termination.} Every non-faulty process eventually decides on a value.
\item {\bf Agreement.} No two non-faulty processes decide on different values.
\item {\bf Validity.} If all non-faulty processes propose the same value, no other value can be decided.
\end{itemize}

\paragraph{Conflicting messages} In order to detect faulty processes,
these have to send distinct messages to different processes where they
were expected to broadcast the same message to different processes~\cite{abraham2021}, we
refer to these messages as conflicting. Given a protocol $\sigma$, we say that a message, or set of messages, $m$ sent by process $p$ \textit{conforms} to an execution $\sigma_E$ of the protocol $\sigma$, if $\sigma_E$ belongs to the set of all possible executions where $p$ sent $m$ and $p$ is a non-faulty process. Also, a faulty process $p$ sending two messages $m,m'$ \textit{contributes} to a disagreement if there is an execution $\sigma_E$ of $\sigma$ such that (i) sufficiently many faulty processes sending $m,\,m'$ (and possibly more messages) to a disjoint subset of non-faulty processes, one to each, leads to a disagreement, and (ii) $\sigma_E$ does not lead to a disagreement without $p$ sending $m,\,m'$. Two messages $m,\, m'$ are \textit{conflicting} with respect to $\sigma$ if:
\begin{enumerate}
\item $m,\,m'$ individually conform to algorithm $\sigma$
  for some execution $\sigma_E$, $\sigma_{E'}$, respectively, $\sigma_E\neq \sigma_{E'}$,
\item there is no execution $\sigma_{E''}$ of $\sigma$
such that both messages together conform to $\sigma_{E''}$, and
\item if $p$ sending $m,m'$ to a disjoint subset of non-faulty 
processes, one to each, contributes to a disagreement.
\end{enumerate}

When combined in one message and signed by the sender, conflicting
messages constitute a proof of a process being faulty with the purpose
of causing a disagreement. We speak of this proof as a
\textit{proof-of-fraud} (PoF). An example of two conflicting messages
is a faulty process sending two different proposals for the same round
(the proposer should only propose one value per round).

Our definition of conflicting messages
differs from previous similar concepts in that conflicting messages
allow for any process $p$ to verify if two messages are conflicting: a
non-faulty process can always construct a PoF from two conflicting
messages alone, but it cannot do so with all mutant
messages~\cite{KLM03}, as $p$ would need to also learn the entire
execution, or with messages sent from an equivocating
process~\cite{CFAR12}, as these do not necessarily contribute to
disagreeing.

\paragraph{Send, receive and deliver.} Messages can
be sent and received, but we also consider broadcast primitives that
contain two functions: a broadcast function that allows process $p_i$ to
send messages to multiple channels accross the network, and a deliver
function that is invoked at the very end of the broadcast primitive to
indicate that the recipient of the message has received and processed
the message to be sent. There could be however multiple message
exchanges before the delivery can happen. As we will specify some of
these broadcast primitives, we attach the name of the protocol as a
prefix to the broadcast and deliver function to refer to a message
broadcast or delivered using that protocol, such as AARB-broadcast,
AARB-deliver, ABV-broadcast and ABV-deliver, as we detail later in
this paper.

\paragraph{Fault model} There are three mutually exclusive classes of faulty processes: Byzantine,
deceitful and benign~\cite{ranchal2020blockchain}, in what we refer to
as the \textit{Byzantine-deceitful-benign} (BDB) failure
model. Each faulty process belongs to only one of these classes. Byzantine, deceitful and benign processes are characterized by
the faults they can commit. A fault is \textit{deceitful} if it
contributes to breaking agreement, in that it sends conflicting
messages violating the protocol in order to lead two or more
partitions of processes to a disagreement. We allow deceitful
processes to constantly keep sending conflicting messages, even if
they do not succeed at causing a disagreement, but instead their
deceitful behavior prevents termination. As deceitful processes model
processes that try to break agreement, we assume also that a deceitful
fault does not send conflicting messages for rounds or phases of the
protocol that it has already terminated at the time that it sends the
messages. Deceitful processes can alternate between sending
conflicting messages and following the protocol, but cannot deviate in
any other way. A \textit{benign} fault is any fault that does not ever
send conflicting messages. Hence, benign faults cover only faults that
can break termination, e.g. by crashing, sending stale messages, etc.

As usual, Byzantine processes can act arbitrarily. Thus, Byzantine
processes can commit benign or deceitful faults, but they can also
commit faults that are neither deceitful nor benign. A fault that
sends conflicting messages and crashes afterwards is, by these
definitions, neither benign nor deceitful. We denote $t,\,d,$ and $q$
as the number of Byzantine, deceitful, and benign processes,
respectively. We assume that the
adversary is static, in that the adversary can choose up to $t$
Byzantine, $d$ deceitful and $q$ benign processes at the start of the
protocol, known only to the adversary.

In order to distinguish benign (resp. deceitful) processes
from Byzantine processes that commit a benign (resp. deceitful) fault
during a particular execution of a protocol, we formalize
fault tolerance in the BDB model. Let $E_\sigma(t,d,q)$ denote the set
of all possible executions of a protocol $\sigma$ given that there are
up to $t$ Byzantine, $d$ deceitful and $q$ benign processes. We say
that a protocol $\sigma$ for a particular problem $P$ is
\textit{$(t,d,q)$-fault-tolerant} if $\sigma$ solves $P$ for all
executions $\sigma_E\in E_\sigma(t,d,q)$. We abuse notation by
speaking of a $(t,d,q)$-fault-tolerant protocol $\sigma$ as a protocol
that tolerates $t,\,d$ and $q$ Byzantine, deceitful and benign
processes, respectively.

% We can then distinguish the $q$ benign (resp. $d$ deceitful) from the
% $t$ Byzantine by the set of all possible executions of a protocol
% $\sigma$ once the adversary chooses the faulty processes: Given an
% execution $\sigma_E$: (i) a deceitful process in $\sigma_E$ may only
% have committed a deceitful fault or no fault at all, (ii) a benign
% fault in $\sigma_E$ may only have committed a benign fault or no fault
% at all, and (iii) a Byzantine fault in $\sigma_E$ may have committed
% any arbitrary fault or no fault at all.

Note that, given a protocol $\sigma$, then $E_\sigma(0,d+k,q)\subset
E_\sigma(k,d,q)$ by definition. Thus, if $\sigma$ is
$(k,d,q)$-fault-tolerant then $\sigma$ is $(0,d+k,q)$-fault tolerant, and also $(0,d,q+k)$-fault-tolerant. However, the contrary is not
necessarily true: a protocol $\sigma$ that is
$(0,d+k,q)$-fault-tolerant is not necessarily $(k,d,q)$-fault
tolerant, as $E_\sigma(k,d,q) \nsubseteq E_\sigma(0,d+k,q)$, because
Byzantine participants can commit more faults than deceitful or
benign. Finally, a process is \textit{non-faulty} if it is neither Byzantine, nor deceitful, nor benign. Non-faulty processes follow the
protocol.

Compared to commission and omission faults, notice that not all
commission faults contribute to causing disagreements. For example,
some commission faults broadcast an invalid message that can be
discarded. In our BDB model, this type of fault would categorize as
benign, and not deceitful, since invalid messages never contribute to
a disagreement, but can instead prevent termination (by only sending
invalid messages that are discarded). All omission faults are however
benign faults, while the contrary is also not true (as per the same
aforementioned example). Compared to the alive-but-corrupt failure
model, deceitful faults are not restricted to only contribute to a
disagreement if they know the disagreement will succeed, but instead 
we let them try forever, even if they do not succeed. Also, the
alive-but-corrupt failure model does not define benign faults.

We believe thus the BDB model to be better-suited for consensus, as it
establishes a clear difference in the types of faults depending on the
type of property that the fault jeopardizes (agreement for deceitful,
termination for benign), without restricting the behavior of these faults
to the cases where they are certain that they will cause a disagreement.

% \ARP{note just as reminder: no need to require that all disjoint partitions
% contain at least one non-faulty process: since $d<n/3$ and $q<n/3$, and
% thus $d+q<h(d_r)$, all non-faulty partitions of all non-faulty processes that reach the threshold
% contain by construction at least one non-faulty process}

\section{Impossibility Results}
\label{sec:imp}

In this section, we extend Dwork et al.'s impossibility results~\cite{DLS88} on the
number of processes necessary to solve the Byzantine
consensus problem with partial synchrony by adding deceitful and benign processes. First, we prove
in Section~\ref{sec:impgen} lower bounds on the size of the
committee of any consensus protocol. Then,  we prove in
Section~\ref{sec:impthre} some lower bounds depending on the
voting threshold of that protocol, which we define in the same
section.

\subsection{Impossibility of consensus in the BDB model}
\label{sec:impgen}
First, we consider the case where $t=0$, i.e., there are only
deceitful and benign processes. In particular, we show in
Lemma~\ref{lem:imp} that if a protocol solves consensus then it
tolerates at most $d<n-2q$ deceitful processes and $q< n/2$ benign
processes. The intuition for the proof is the same from the classical
impossibility proof of consensus in partial synchrony in the presence
of $t_0+1$ Byzantine processes. Lemma~\ref{lem:imp} extends the the
BDB model the classical lower bound for the BFT model~\cite{DLS88}, by
tolerating a stronger adversary than the classical bound (e.g. an
adversary causing $d=\ceil{n/3}-1$ deceitful faults and
$q=\ceil{n/3}-1$ benign faults). By contradiction, we show that in the
presence of a greater number of faulty processes than bounded by
Lemma~\ref{lem:imp}, in some executions all processes would either not
terminate, or not satisfy agreement, if maintaining validity.

\begin{lemma}
  \label{lem:imp}
  Let a protocol $\sigma$ and let $\sigma$ solve consensus for all executions $\sigma_E\in E_\sigma(0,d,q)$ for some $d,q>0$. Then, $d< n-2q$ and $q<n/2$.
\end{lemma}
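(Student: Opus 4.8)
The plan is to adapt the classical partitioning argument of Dwork--Lynch--Stockmeyer to the BDB model, treating deceitful processes as the agents that fabricate a "split view" and benign processes as the agents that withhold messages. First I would prove the benign bound $q < n/2$: suppose $q \ge n/2$ and split the committee into two halves $A$ and $B$ with $|A|, |B| \le q$. Run an execution $\sigma_0$ where everyone proposes $0$ and the $B$-side processes are benign and silent until the non-faulty $A$-side processes decide $0$ (possible by validity and termination, since $A$ alone sees a legal run in which all participating non-faulty processes propose $0$). Symmetrically build $\sigma_1$ with roles swapped, decision $1$. Then splice: an execution in which $A$ and $B$ are on opposite sides of a long message-delay partition (legal before GST), the $B$-side behaving toward $A$ as the silent benign processes of $\sigma_0$ and the $A$-side behaving toward $B$ as those of $\sigma_1$; this is indistinguishable from $\sigma_0$ to $A$ and from $\sigma_1$ to $B$, so $A$ decides $0$, $B$ decides $1$, contradicting agreement. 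Note only $q$ processes need be faulty on each side, and since $|A|,|B| \le q$ this is within budget.

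Next I would prove $d < n - 2q$. Assume $d \ge n - 2q$, i.e. $n \le d + 2q$, and partition $N$ into three groups $C_0, C_1, D$ with $|C_0|, |C_1| \le q$ and $|D| \le d$. The idea is that $D$ is a block of deceitful processes that runs the protocol honestly but presents the "proposed $0$" view to $C_0$ and the "proposed $1$" view to $C_1$ — this is exactly what conflicting messages allow, and by the definition of deceitful faults in the model this behavior is permitted (they may keep equivocating forever). Meanwhile $C_1$ plays benign/silent toward $C_0$ and $C_0$ plays benign/silent toward $C_1$, while the message channels between $C_0$ and $C_1$ are simply slow (pre-GST). Construct $\sigma_0$: all of $C_0$ and $D$ effectively propose $0$; $C_1$ is silent toward $C_0$; the non-faulty processes in $C_0$ see a run consistent with an all-$0$ proposal among the processes they hear from, so by validity + termination they decide $0$. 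Construct $\sigma_1$ symmetrically with $C_1$ deciding $1$. Then the spliced execution $\sigma^*$ has $D$ deceitful ($\le d$ of them, sending $0$-flavored messages to $C_0$ and $1$-flavored to $C_1$), $C_1$ benign toward $C_0$ and $C_0$ benign toward $C_1$ — but wait, each of $C_0$ and $C_1$ can be counted once against the $q$-budget only if they are simultaneously "benign toward the other side." Since $|C_0| \le q$ can be the benign set in $\sigma_1$ and $|C_1|\le q$ the benign set in $\sigma_0$, and in $\sigma^*$ we need the processes that behave benignly; the accounting works because in $\sigma^*$ the genuinely faulty-classified processes are $D$ (deceitful) plus one of $C_0, C_1$ (benign), and we get to choose which — both choices are $\le d$ and $\le q$ respectively. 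Indistinguishability: $C_0$ cannot tell $\sigma^*$ from $\sigma_0$ (it hears $D$'s $0$-messages and hears nothing from $C_1$), so it decides $0$; $C_1$ cannot tell $\sigma^*$ from $\sigma_1$, so it decides $1$; this violates agreement.

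The main obstacle I anticipate is the bookkeeping that makes the fault counts come out to exactly $d$ and $q$ rather than off by one, together with justifying that the deceitful block $D$'s equivocating behavior genuinely "conforms" to the protocol in the technical sense of the Conflicting-messages definition given earlier — i.e. that the two flavored message-streams $D$ sends each individually conform to some execution, and that they contribute to the disagreement in $\sigma^*$. One must also be careful that the silent/benign behavior and the pre-GST message delays are consistent with a single admissible execution in the partially synchronous model (all suppressed messages must eventually be delivered after GST, but by then the non-faulty processes in each side have already decided, so liveness of the channels is not contradicted). I would package the two halves as two applications of one partitioning lemma to keep the argument short, and cite \cite{DLS88} for the underlying indistinguishability scaffolding. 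A symmetric remark: the strictness of both inequalities ($d < n-2q$, not $\le$) comes precisely from the fact that the contradiction is derived under $n \le d + 2q$ and under $n \le 2q$ respectively.
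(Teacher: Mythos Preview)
Your approach is essentially the same as the paper's: a three-scenario DLS-style partition argument, first for $q<n/2$ (two groups) and then for $d<n-2q$ (three groups $C_0,C_1,D$ with the middle block equivocating). The scenarios $\sigma_0,\sigma_1,\sigma^*$ you describe match the paper's scenarios A, B, C almost verbatim.

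There is, however, a genuine confusion in your fault accounting for the spliced execution that you should fix. In $\sigma^*$ (and likewise in the spliced run for the $q<n/2$ half), the processes in $C_0$ and $C_1$ must both be \emph{non-faulty}; the only faulty processes in $\sigma^*$ are those in $D$ (deceitful, $|D|\le d$), and in the $q<n/2$ splice there are \emph{no} faulty processes at all. The mutual silence between the two sides is accounted for entirely by pre-GST message delay, not by classifying either side as benign. Your sentence ``in $\sigma^*$ the genuinely faulty-classified processes are $D$ (deceitful) plus one of $C_0,C_1$ (benign)'' is wrong and in fact self-defeating: agreement is only required among non-faulty processes, so if you label $C_0$ benign, its decision $0$ no longer contradicts $C_1$'s decision $1$. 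Your own concluding indistinguishability paragraph already implicitly assumes both sides are non-faulty (otherwise ``this violates agreement'' does not follow), so simply delete the ``but wait'' accounting passage and state plainly that $\sigma^*\in E_\sigma(0,d,q)$ because it has $|D|\le d$ deceitful and $0\le q$ benign processes. The paper's scenario C does exactly this: ``processes in $P$ and $Q$ are non-faulty, and processes in $R$ are deceitful, the messages sent from processes in $Q$ are delivered by processes in $P$ at a time greater than $\max(T_A,T_B)$.''
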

\begin{proof}

  First, we show $q<n/2$ by contradiction, as done by previous work
for omission faults~\cite{DLS88}. Suppose $q\geq n/2,\,d=0$ and
consider processes are divided into a disjoint partition $P,Q$ such
that $P$ contains between $1$ and $q$ processes and $Q$ contains
$n-|P|$. First, consider scenario A: all processes in $P$ are benign
and the rest correct, and all processes in $Q$ propose value
$0$. Then, by validity all processes in $Q$ decide $0$. Then, consider
scenario B: all processes in $Q$ are benign and the rest correct, and
all processes in $P$ propose value $1$. Then, by validity all
processes in $P$ decide $1$. Now consider scenario C: no process is
benign, and processes in $P$ propose all $0$ while processes in $Q$
propose all value $1$. For processes in $P$ scenario C is
indistinguishable from scenario A, while for processes in $Q$ scenario
B is indistinguishable from scenario A. This yields a
contradiction.

It follows that $q<n/2$. Hence, for $n=2$, and since $q<1$, it is
immediate that for $d\geq 2$ it is impossible to solve consensus. As
such, we have left to consider $d\geq n-2q$ with $n\geq 3$. We will
prove this by contradiction.

Consider processes are divided into three disjoint partitions $P, Q, R$, such
that $P$ and $Q$ contain between $1$ and $q$ processes each, and $R$
contains between $1$ and $d$. First consider the following scenario A:
processes in $P$ and $R$ are non-faulty and propose value $0$, and
processes in $Q$ are benign. It follows that $P\cup R$ must decide
value $0$ at some time $T_A$, for if they decided $1$ there would be a
scenario in which processes in $Q$ are non-faulty and also propose $0$,
but messages sent from processes in $Q$ are delivered at a time
greater than $T_A$, having processes in $P\cup R$ already decided
$1$. This would break the validity property. Also, they must decide
some value to satisfy termination tolerating $q$ benign faults.

Consider now scenario B: processes in $P$ are benign, and processes in
$R$ and $Q$ are non-faulty and propose value $1$. By the same approach,
$R\cup Q$ decide $1$ at a time $T_B$.

Now consider scenario C: processes in $P$ and $Q$ are non-faulty, and
processes in $R$ are deceitful, the messages sent from
processes in $Q$ are delivered by processes in $P$ at a time greater
than $\max(T_A,T_B)$, and the same for messages sent from processes in
$P$ to processes in $Q$. Then, for processes in $P$ this scenario is
identical to scenario A, deciding $0$, while for processes in $Q$ this
is identical to scenario B, deciding $1$, which leads to a
disagreement. This yields a contradiction.
\end{proof}

\begin{corollary}[Impossibility of consensus with $t=0$]
  \label{cor:imp}
  It is impossible for a consensus protocol $\sigma$ to tolerate $d$ deceitful and $q$ benign processes if $d\geq n-2q$ or $q\geq n/2$.
\end{corollary}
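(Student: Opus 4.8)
The plan is to obtain Corollary~\ref{cor:imp} as the contrapositive of Lemma~\ref{lem:imp}, spending a little extra care only on the degenerate cases $d=0$ or $q=0$, which the lemma's hypothesis ``$d,q>0$'' formally excludes. Assume, for contradiction, that some consensus protocol $\sigma$ is $(0,d,q)$-fault-tolerant (i.e.\ solves consensus for every execution in $E_\sigma(0,d,q)$) while $d\ge n-2q$ or $q\ge n/2$.

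First I would record the monotonicity of fault tolerance that is already implicit in the model section: since $E_\sigma(0,d',q')\subseteq E_\sigma(0,d,q)$ whenever $d'\le d$ and $q'\le q$ (``up to'' $d$ means $0,1,\dots,d$), the protocol $\sigma$ is also $(0,d',q')$-fault-tolerant for every such pair. This lets me replace $(d,q)$ by a convenient \emph{boundary} pair that still witnesses the supposed violation: if $q\ge n/2$, keep $q'=q$ and take $d'=\min(d,1)$; if instead $q<n/2$ and $d\ge n-2q$ (which forces $d\ge 1$, and forces $q\ge 1$ unless every process is deceitful), keep $q'=q$ and take $d'=n-2q\ge 1$. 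I then split into cases. When $d'=1$ and $q'\ge n/2$, or when $d'=n-2q'\ge1$ with $q'\ge1$, we have $d',q'>0$, so Lemma~\ref{lem:imp} applies and gives $q'<n/2$ and $d'<n-2q'$, contradicting the choice of $(d',q')$ in either case. The only remaining situation is $q\ge n/2$ with $d=0$ (the case $q=0$, $d\ge n$ being the vacuous ``all processes faulty'' one): here I would simply invoke the benign-only argument inside the proof of Lemma~\ref{lem:imp} --- scenarios A, B, C with $d=0$ and the partition $P\cup Q=N$, $|P|\in[1,q]$ --- which already derives a contradiction from $q\ge n/2$ using no deceitful process at all. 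In every case we reach a contradiction, so no such $\sigma$ exists.

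I expect the only delicate point to be exactly this bookkeeping between the strict-positivity hypothesis of Lemma~\ref{lem:imp} and the broader disjunction ``$d\ge n-2q$ or $q\ge n/2$'' of the corollary; all of the genuine consensus-theoretic content is contained in Lemma~\ref{lem:imp} and is not re-proved here.
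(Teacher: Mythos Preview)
Your proof is correct and follows the same route as the paper: both obtain the corollary as the contrapositive of Lemma~\ref{lem:imp}, together with the definition of $(0,d,q)$-fault-tolerance. The paper's own proof is a single sentence (``immediate from Lemma~\ref{lem:imp}'') and does not address the strict-positivity hypothesis $d,q>0$ at all, so your extra bookkeeping on the degenerate cases is in fact more careful than what appears in the paper.
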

\begin{proof}
  This is immediate from Lemma~\ref{lem:imp} since $\sigma$ is $(0,d,q)$-fault-tolerant if $\sigma$ solves $P$ for all
executions $\sigma_E\in E_\sigma(0,d,q)$. 
\end{proof}

We prove the impossibility result of Theorem~\ref{thm:imp} by extending the result of Corollary~\ref{cor:imp}: it is
impossible to solve consensus in the presence of $t$ Byzantine, $q$
benign and $d$ deceitful processes unless $n>3t+d+2q$.
\begin{theorem}[Impossibility of consensus]
  \label{thm:imp}
      It is impossible for a consensus protocol to tolerate $t$ Byzantine, $d$ deceitful and $q$ benign processes if $n\leq 3t+d+2q$.
% \ARP{  Let $\sigma$ be a protocol that solves \mypropertyadj consensus. Then $\sigma$
% tolerates at most $n>3t+d+2q$ faults.}
\end{theorem}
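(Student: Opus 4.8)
The plan is to lift the three–scenario indistinguishability argument behind Lemma~\ref{lem:imp} to the presence of Byzantine processes, exploiting the fact that a single Byzantine process can simultaneously behave \emph{deceitfully} (sending conflicting messages) towards one partition and \emph{benignly} (staying silent) towards another; this is exactly what makes each Byzantine process ``cost'' $3 = 1+2$ in the bound, on top of the $d+2q$ already charged to deceitful and benign processes. Corollary~\ref{cor:imp} settles the case $t=0$, so I would assume $t\ge 1$. For $n\ge 3$ I would fix a partition of the committee into three non-empty groups $X_0$, $X_1$ of sizes $a_0,a_1\le t+q$ and $Y$ of size $c\le t+d$ with $a_0+a_1+c=n$; such a partition exists precisely because $n\le 3t+d+2q$ and $n\ge 3$. (For $n\le 2$ with $t\ge 1$ the statement is the classical impossibility of agreement when at least half the processes are Byzantine.) Assume for contradiction a $(t,d,q)$-fault-tolerant protocol $\sigma$.

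I would then consider three executions, mirroring scenarios A, B and C of Lemma~\ref{lem:imp}. In scenario A, every process of $X_0\cup Y$ is non-faulty and proposes $0$, while the $a_1$ processes of $X_1$ are faulty and silent, realized as $\min(t,a_1)$ Byzantine plus the remaining $\max(0,a_1-t)\le q$ benign — within the $(t,d,q)$ budget. By termination $X_0$ eventually decides, and by indistinguishability from the execution in which $X_1$ is instead correct but arbitrarily slow and also proposes $0$ (where validity forces the decision $0$), $X_0$ decides $0$, say by time $T_A$. Scenario B is symmetric: $X_1\cup Y$ non-faulty proposing $1$, $X_0$ silenced the same way, and $X_1$ decides $1$ by some time $T_B$. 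In scenario C, $X_0$ and $X_1$ are non-faulty and propose $0$ and $1$ respectively, while the $c$ processes of $Y$ are faulty — $\min(t,c)$ Byzantine and the remaining $c-t\le d$ deceitful — and send to $X_0$ exactly the messages they sent as correct $0$-proposers in scenario A, and to $X_1$ exactly the messages they sent as correct $1$-proposers in scenario B; by the definition of conflicting messages this is a legal deceitful (hence a fortiori Byzantine) behaviour, exactly as the group $R$ behaves in the proof of Lemma~\ref{lem:imp}. Delaying every message exchanged directly between $X_0$ and $X_1$ past $\max(T_A,T_B)$ (allowed before GST), scenario C becomes indistinguishable from scenario A for every process of $X_0$ and from scenario B for every process of $X_1$, so $X_0$ decides $0$ and $X_1$ decides $1$, violating agreement and contradicting that $\sigma$ is $(t,d,q)$-fault-tolerant.

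The hard part will not be the scenario bookkeeping but checking that $Y$'s composite behaviour in scenario C is admissible in the BDB model: its deceitful members must send only the prescribed conflicting messages and otherwise follow the protocol, and must not send conflicting messages for phases they have already locally terminated, so one has to argue, as for the group $R$ in Lemma~\ref{lem:imp}, that routing the real messages received from $X_0$ into the scenario-A simulation and those from $X_1$ into the scenario-B simulation never forces an illegal deviation. The second point to state carefully is that the three fault assignments must each respect the $(t,d,q)$ budget only within their own scenario, never jointly — and that it is precisely the double use of the Byzantine budget (silencing in A and B, straddling in C), together with $q$ benign for silencing and $d$ deceitful for straddling, that matches $n=3t+d+2q$ and hence makes the bound tight, as witnessed later by the \protocol protocols. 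Feasibility of the partition for $n\ge 3$ and the reduction to $t\ge 1$ via Corollary~\ref{cor:imp} are routine.
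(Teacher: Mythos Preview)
Your proposal is correct and is essentially the same three-scenario indistinguishability argument the paper uses: the paper's proof simply says ``analogous to Lemma~\ref{lem:imp}'' and replaces $q$ by $q+t$ (for the silent groups in scenarios A and B) and $d$ by $d+t$ (for the equivocating group in scenario C), which is exactly your partition with $|X_0|,|X_1|\le q+t$ and $|Y|\le d+t$ and your ``double use of the Byzantine budget''. Your write-up is considerably more explicit than the paper's two-line sketch --- in particular you spell out the per-scenario fault assignments and the BDB-admissibility of $Y$'s behaviour in scenario~C, and you handle the small-$n$ edge cases --- but the underlying idea is identical.
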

\begin{proof}
  The proof is analogous to that of Lemma~\ref{lem:imp} since $\sigma$
is $(t,d,q)$-fault-tolerant if $\sigma$ solves $P$ for all executions
$\sigma_E\in E_\sigma(t,d,q)$, and $E_\sigma(0,d+t,q)\in E_\sigma(t,d,q)$
and $E_\sigma(0,d,q+t)\in E_\sigma(t,d,q)$ by definition. Thus, the bounds
in this case become $d+t\geq n-2(q+t)$ and $q+t\geq n/2$, which
results in $n+t+d< 2n-2q-2t\iff n>3t+d+2q$.
\end{proof}

\subsection{Impossibility bounds per voting threshold}
\label{sec:impthre}
The proofs for the impossibility results of Section~\ref{sec:impgen}
(and for the classical impossibility results~\cite{DLS88}) derive a
trade-off between agreement and termination. In some scenarios,
processes must be able to terminate without delivering messages from a
number of processes that may commit benign faults. In other scenarios,
processes must be able to deliver messages from enough processes
before terminating in order to make sure that no disagreement caused
by deceitful faults is possible. We prove in this section the
impossibility results depending on this trade-off.

A protocol that satisfies both
agreement and termination in partial synchrony must thus state a
threshold that represents the number of processes from which to
deliver messages in order to be able to terminate without compromising
agreement. If this threshold
is either too small to satisfy
agreement, or too large to satisfy termination, then the protocol does
not solve consensus. We refer to this threshold as the \textit{voting
threshold}, and denote it with $h$. Typically, this threshold is
$h=n-t_0= \ceil{\frac{2n}{3}}$ to tolerate $t_0=\ceil{\frac{n}{3}}-1$
Byzantine faults~\cite{crain2018dbft,civit2021, KADC07, YMR19,
ranchal2020blockchain}. We prove however in Lemma~\ref{lem:impagr} and
Corollary~\ref{cor:impagr} that $h>\frac{d+t+n}{2}$ with $h\in(n/2,n]$
for safety.

\begin{lemma}[Impossibility of Agreement ($t=0$)]
  \label{lem:impagr}
  Let $\sigma$ be a protocol with voting threshold $h\in(n/2,n]$ that satisfies agreement. Then $\sigma$ tolerates at most $d<2h-n$ deceitful processes.
\end{lemma}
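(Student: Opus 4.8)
The plan is to prove the contrapositive: assuming $\sigma$ has voting threshold $h\in(n/2,n]$ and tolerates $d\geq 2h-n$ deceitful processes, I will build an execution in which two non-faulty processes decide differently, contradicting agreement. The structure mirrors the three-scenario argument of Lemma~\ref{lem:imp}, with the $d$ deceitful processes playing the role of the ``middle'' group $R$ that lies in two directions at once, and with the voting threshold $h$ governing when a non-faulty process is forced to decide.

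First I would fix the partition. Since $d\geq 2h-n$ and $h>n/2$ (so $d\geq 1$), and since there are $n-d$ non-faulty processes, I can split them into two disjoint non-empty sets $P$ and $Q$ with $|P|+d\geq h$ and $|Q|+d\geq h$: this is possible precisely because $(n-d)+2d=n+d\geq 2h$ (the corner cases with fewer than two non-faulty processes, where agreement is vacuous, as well as $h=n$, are set aside). Let $D$ be the set of $d$ deceitful processes; it will be the common part of the two decision quorums $P\cup D$ and $Q\cup D$. This is exactly where $h>n/2$ is used: any two sets of size $\geq h$ must intersect, and here the intersection is forced to be $D$.

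Then I would run the three scenarios. In Scenario A the processes in $P\cup D$ are non-faulty and propose $0$ while those in $Q$ are benign (they crash at the outset); since $|P\cup D|\geq h$, the processes in $P$ reach their voting threshold using only messages from within $P\cup D$ and must terminate, and by validity they decide $0$, say by time $T_A$. Symmetrically, in Scenario B the processes in $Q\cup D$ are non-faulty and propose $1$, those in $P$ are benign, and the processes in $Q$ decide $1$ by some time $T_B$. In Scenario C the processes in $P$ and $Q$ are all non-faulty, $P$ proposes $0$ and $Q$ proposes $1$, all messages exchanged between $P$ and $Q$ are delayed beyond $\max(T_A,T_B)$ (admissible before GST), and each process in $D$ is deceitful, sending to $P$ exactly the messages it sent in Scenario A and to $Q$ exactly the messages it sent in Scenario B. These two message sets individually conform to $\sigma$ (to Scenarios A and B respectively), no single execution contains both, and — as Scenario C itself witnesses — delivering them to the disjoint sets $P$ and $Q$ contributes to a disagreement; hence they are conflicting in the sense of the paper's definition, so this is a legitimate deceitful behavior. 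Now Scenario C is indistinguishable from Scenario A for every process in $P$ and from Scenario B for every process in $Q$, so $P$ decides $0$ and $Q$ decides $1$, a disagreement among non-faulty processes, contradicting agreement. Therefore $d<2h-n$.

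The step I expect to need the most care is the justification that $D$'s behavior in Scenario C is a valid deceitful behavior: one must check the three conditions of the definition of conflicting messages, in particular that there is genuinely no execution of $\sigma$ in which a single correct process emits both message sets — this is where the assumption that $\sigma$ satisfies agreement, combined with the fact that $P$ and $Q$ each independently reach the threshold and decide, is really used — and that the deceitful processes never need to retract messages for a round they have already completed. The other routine but delicate point is the partial-synchrony bookkeeping, ensuring that all message delays invoked in Scenarios A–C occur before GST so that no liveness guarantee of $\sigma$ is contradicted along the way.
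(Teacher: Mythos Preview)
Your proof is correct. The paper, however, takes a much more compressed route: rather than building three explicit scenarios, it runs a bare quorum-intersection count. It posits two disjoint sets $A,B$ of non-faulty processes that decide different values; since each must reach the threshold with the help of the $d$ deceitful processes, one has $|A|+d\geq h$ and $|B|+d\geq h$, and summing together with $|A|+|B|\leq n-d$ gives $n+d\geq 2h$, i.e.\ $d\geq 2h-n$. That is the entire argument; the justification of $h>n/2$ is dispatched in one sentence by pointing back to Lemma~\ref{lem:imp}.

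Your construction is essentially the unpacking of that inequality into an explicit attack, in the template of Lemma~\ref{lem:imp} rather than of the paper's own proof of this lemma. What you gain is rigor in the impossibility direction: you actually exhibit an execution with $d\geq 2h-n$ deceitful processes in which agreement fails, whereas the paper's counting, read literally, only shows that a disagreement \emph{requires} $d\geq 2h-n$ and leaves the converse (that this many deceitful processes suffice) implicit in the notion of voting threshold. You also surface the appeals to termination and validity in Scenarios~A and~B, and the check that $D$'s behavior in Scenario~C meets the paper's definition of conflicting messages --- all of which the paper absorbs into the informal semantics of ``voting threshold''. The cost is length: the paper's argument is five lines.
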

\begin{proof}
    The bound $h\in(n/2,n]$ derives trivially: if $h\leq n/2$ then two
subsets without any faulty processes can reach the threshold for
different values (Lemma~\ref{lem:imp}).
  We calculate for which cases it is possible to cause a
disagreement. Hence, we have two disjoint partitions of non-faulty processes such
that $|A|+|B|\leq n-d$. Suppose that processes in $A$ and in $B$ decide each a
different decision $v_A,\,v_B,\, v_A\neq v_B$. This means that both $|A|+d\geq h$ and
$|B|+d\geq h$ must hold. Adding them up, we have
$|A|+|B|+2d\geq 2h$ and since $|A|+|B|\leq n-d$ we have
$n+d\geq 2h$ for a disagreement to occur. This means that
if $h>\frac{n+d}{2}$ then it is impossible for $d$ deceitful
processes to cause a disagreement.
\end{proof}
The proof of Lemma~\ref{lem:impagr} can be straightforwardly extended to
include Byzantine processes, resulting in
Corollary~\ref{cor:impagr}. % Additionally, if the protocol satisfies
% \myproperty and updates the voting threshold $h(d_r)$ at runtime with
% the number of detected deceitful faults $d_r$, then the same result
% can also be extended to that of Corollary~\ref{cor:impagrdr}.
\begin{corollary}
  \label{cor:impagr}
  Let $\sigma$ be a protocol with voting threshold $h\in(n/2,n]$ that satisfies agreement. Then $\sigma$ tolerates at most $d+t<2h-n$ deceitful and Byzantine processes.
\end{corollary}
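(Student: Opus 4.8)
The plan is to replay the counting argument from the proof of Lemma~\ref{lem:impagr}, with the single observation that, as far as breaking agreement is concerned, a Byzantine process is at least as powerful as a deceitful one: since a Byzantine process may act arbitrarily, it can in particular send to two disjoint subsets of non-faulty processes the same pair of conflicting messages that a deceitful process would send, and thereby be counted toward the voting threshold of both subsets simultaneously. So for the purpose of constructing a disagreeing execution, the $t$ Byzantine processes play exactly the role that $t$ extra deceitful processes would play, and the whole argument should go through with $d$ replaced by $d+t$.

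Concretely, I would fix a protocol $\sigma$ with voting threshold $h\in(n/2,n]$ that satisfies agreement and argue by contradiction, assuming $d+t\ge 2h-n$. As in Lemma~\ref{lem:imp}, partition the processes so that two disjoint sets $A,B$ of non-faulty processes satisfy $|A|+|B|\le n-d-t$ (any benign processes are placed outside $A\cup B$, which only helps the adversary), let the $d$ deceitful together with the $t$ Byzantine processes equivocate by sending $A$ the messages consistent with deciding a value $v_A$ and sending $B$ the messages consistent with deciding $v_B\ne v_A$, and delay every cross-partition message until after both $A$ and $B$ have decided. Then the run restricted to $A$ is indistinguishable from one in which the processes of $B$ are merely slow, so $A$ decides $v_A$ once it collects $|A|+d+t\ge h$ matching votes; symmetrically $B$ decides $v_B$ once it collects $|B|+d+t\ge h$ matching votes.

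Adding the two threshold inequalities and substituting $|A|+|B|\le n-d-t$ yields $n+d+t\ge 2h$, i.e. $d+t\ge 2h-n$, which is precisely the assumed regime; hence such an execution exists and $A$ and $B$ decide differently, contradicting agreement. Taking the contrapositive gives the claim: any $\sigma$ with threshold $h$ that satisfies agreement must have $d+t<2h-n$. I do not expect a genuine obstacle here — the only care needed is to argue that the two sides of the construction compose, namely that the Byzantine and deceitful processes can be counted toward the threshold by both $A$ and $B$ while the partial-synchrony delays keep each partition's view indistinguishable from a slow-but-correct run of the other, exactly as in Lemma~\ref{lem:imp}; once that is granted, the inequality manipulation is identical to the one in Lemma~\ref{lem:impagr}.
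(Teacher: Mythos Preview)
Your proposal is correct and matches the paper's own treatment: the paper does not give a separate proof of Corollary~\ref{cor:impagr} at all, merely stating that the proof of Lemma~\ref{lem:impagr} ``can be straightforwardly extended to include Byzantine processes,'' which is precisely the observation you make (Byzantine processes can emulate deceitful ones for the purpose of breaking agreement, so replace $d$ by $d+t$ throughout). Your write-up is in fact more careful than the paper's, since you spell out the indistinguishability argument and the contradiction direction explicitly, whereas the paper's proof of Lemma~\ref{lem:impagr} is phrased as a direct threshold calculation.
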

% \begin{corollary}
%   \label{cor:impagrdr}
%   Let $\sigma$ be a protocol for consensus that satisfies \myproperty and agreement. Let $\sigma$ have a voting threshold of $h(d_r)$ that updates as detected deceitful processes are removed from the committee. Then, $\sigma$ tolerates at most $d$ deceitful faults and $t$ Byzantine faults such that $h(d_r)>\frac{n+d+t}{2}-d_r$.
% \end{corollary}

Next, in Lemma~\ref{lem:impter} and Corollary~\ref{cor:impter} we show the analogous results for
the termination property. That is, we show that if a protocol solves
termination while $t=0$, then it tolerates at most $q\leq n-h$ benign
processes, or $q+t\leq n-h$ benign and Byzantine processes.

\begin{lemma}[Impossibility of Termination ($t=0$)]
  \label{lem:impter}
  Let $\sigma$ be a protocol with voting threshold $h$ that satisfies termination. Then $\sigma$ tolerates at most $q\leq n-h$ benign processes.
  % for a voting threshold $h\leq n-q$, $h\in[n/2,n]$.
\end{lemma}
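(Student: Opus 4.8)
The plan is to argue by contradiction using a crash-only execution, mirroring the classical observation that a quorum-based protocol stalls as soon as more than $n-h$ processes are permanently silent. Suppose $\sigma$ satisfies termination while tolerating $q\ge n-h+1$ benign processes. First I would fix a set $B$ of exactly $n-h+1$ processes and consider the execution in which every process in $B$ is benign and crashes before sending any message. This is a legitimate benign fault in the BDB model, since a crashed process never emits conflicting messages (it never emits any message). The remaining set $H=N\setminus B$ then consists of $|H|=h-1$ non-faulty processes, each of which follows $\sigma$ faithfully.

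The core step is to observe that, by the very meaning of the voting threshold, no non-faulty process can decide in this execution. A protocol with voting threshold $h$ is, by definition, one in which a non-faulty process does not decide before it has delivered messages from at least $h$ distinct processes: deciding after delivering from a strictly smaller set would mean the protocol's effective threshold is below $h$, and, as the trade-off discussion preceding the lemma makes explicit, too small a threshold is precisely what breaks agreement (the analogue of Lemma~\ref{lem:imp} / Corollary~\ref{cor:impagr} on the safety side). In the execution above, only the $h-1$ processes of $H$ ever send anything, so every non-faulty process delivers messages from at most $h-1<h$ distinct processes, hence never decides. This contradicts the assumed termination of $\sigma$ under $q$ benign faults. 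Therefore $q\le n-h$, which is Lemma~\ref{lem:impter}.

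I expect the main obstacle to be exactly this middle step: formally pinning down that a voting threshold of $h$ genuinely forces at least $h$ deliveries before any decision, rather than merely being an artifact of a particular implementation. The cleanest route is to take "$\sigma$ has voting threshold $h$" as meaning precisely this property (consistent with how $h$ was introduced), so that the crash argument applies verbatim; the alternative is to derive it from a short disjoint-quorum indistinguishability argument, at the cost of extra bookkeeping on partition sizes and on which processes are silent versus active. Once the $t=0$ case is in place, the extension to general $t$ in Corollary~\ref{cor:impter} is immediate: additionally let $t$ Byzantine processes crash (a Byzantine process may commit a benign fault), so that $n-q-t<h$ processes ever speak, giving $q+t\le n-h$.
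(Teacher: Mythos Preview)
Your proposal is correct and follows essentially the same approach as the paper: both argue that if $q>n-h$ then only $n-q<h$ processes ever cast votes, so the voting threshold $h$ can never be reached and termination fails. The paper's proof is a terse two-line version of your more explicit crash-execution argument, and your discussion of the meaning of ``voting threshold $h$'' makes explicit an assumption the paper leaves implicit.
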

\begin{proof}
  % Since non-faulty processes will eventually identify
  % and remove deceitful processes (by the property of active
  % accountability), and since they update their thresholds $h(d_r)$ and
  % committee sizes $n$ to account for the removed $d_r$ deceitful
  % processes, they do not prevent termination. As a result, 
  If $n-q$ benign processes are less than $h$, then termination is not
  guaranteed, since in this case termination would require the votes from some benign processes. This is impossible if
  $h\leq n-q$, as it guarantees that the threshold is lower than
  all processes minus the
  $q$ benign processes.
\end{proof}

\begin{corollary}
  \label{cor:impter}
  Let $\sigma$ be a protocol with voting threshold $h$ that satisfies termination. Then, $\sigma$ tolerates at most $q+t\leq n-h$ benign and Byzantine processes. % for a voting threshold $h\leq n-q-t$,$h\in[n/2,n]$.
\end{corollary}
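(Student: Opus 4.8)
The plan is to reduce Corollary~\ref{cor:impter} to Lemma~\ref{lem:impter} using the observation, already made in Section~\ref{sec:model}, that a Byzantine process can reproduce any behavior of a benign process (in particular it may crash or simply withhold its votes). Formally, for every protocol $\sigma$ and every $d$ one has $E_\sigma(0,d,q+t)\subseteq E_\sigma(t,d,q)$, so that $(t,d,q)$-fault-tolerance implies $(0,d,q+t)$-fault-tolerance. Hence, if $\sigma$ has voting threshold $h$, satisfies termination, and tolerates $t$ Byzantine and $q$ benign processes, then $\sigma$ in particular satisfies termination in every execution in which those $t$ Byzantine processes behave benignly, i.e.\ $\sigma$ satisfies termination while tolerating $q+t$ benign processes.

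I would then invoke Lemma~\ref{lem:impter} with $q+t$ playing the role of $q$: a protocol with voting threshold $h$ that satisfies termination tolerates at most $q+t\le n-h$ benign processes, which is exactly the claimed inequality $q+t\le n-h$.

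As a cross-check I would also sketch the direct argument obtained by replaying the proof of Lemma~\ref{lem:impter} verbatim: the only feature of the $q$ benign processes used there is that they may fail to send their votes, and each of the $t$ Byzantine processes can do the same; consequently the $n-q-t$ non-faulty processes cannot be required to wait for votes from more than $n-q-t$ distinct senders without risking that an awaited vote never arrives, so the voting threshold must satisfy $h\le n-q-t$.

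I do not expect a genuine obstacle here: the entire content is the model-level fact that Byzantine faults subsume benign faults, stated in Section~\ref{sec:model}. The only subtlety to flag is the direction of the reduction — we use that tolerating $(t,d,q)$ implies tolerating $(0,d,q+t)$, and \emph{not} the converse, which the paper explicitly notes can fail because Byzantine processes can commit strictly more behaviors than benign ones.
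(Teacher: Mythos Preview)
Your proposal is correct and matches the paper's own approach: the corollary is stated without an explicit proof, but the surrounding text (and the parallel treatment of Corollary~\ref{cor:impagr} and Theorem~\ref{thm:imp}) makes clear that the intended argument is exactly the reduction $E_\sigma(0,d,q+t)\subseteq E_\sigma(t,d,q)$ followed by an application of Lemma~\ref{lem:impter} with $q+t$ in place of $q$. Your cross-check via the direct argument is a faithful unpacking of the same idea, and your remark about the direction of the reduction is apt.
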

% \begin{corollary}
%   \label{cor:impterdr}
%   Let $\sigma$ be a protocol for consensus that satisfies \myproperty and termination. Let $\sigma$ have a voting threshold of $h(d_r)$ that updates as detected deceitful processes are removed from the committee. Then, $\sigma$ it tolerates at most $q$ benign and $t$ Byzantine faults satisfying $h(d_r)\leq n-q-t-d_r$, with $h(d_r)=h_0-d_r$ and $h_0\in(n/2,n]$.
% \end{corollary}
Combining the results of corollaries~\ref{cor:impagr} and~\ref{cor:impter}, one can derive an impossibility bound for a consensus protocol given its voting threshold. We show this result in Corollary~\ref{cor:impteragr}.
\begin{corollary}
  \label{cor:impteragr}
  Let $\sigma$ be a protocol that solves the consensus problem with voting threshold $h\in(n/2,n]$. Then, $\sigma$ tolerates at most $d+t<2h-n$ and $q+t\leq n-h$ Byzantine, deceitful and benign processes. % for a voting threshold $h\leq n-q-t$,$h\in[n/2,n]$.
\end{corollary}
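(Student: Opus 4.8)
The plan is simply to combine the two preceding corollaries, since by definition a protocol that solves consensus must satisfy \emph{both} agreement and termination, so every necessary condition established for either property in isolation must hold jointly for the same protocol and the same fault budget.

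Concretely, first I would apply Corollary~\ref{cor:impagr}: as $\sigma$ satisfies agreement with a voting threshold $h\in(n/2,n]$, it tolerates at most $d+t<2h-n$ deceitful and Byzantine processes. Next I would apply Corollary~\ref{cor:impter} to the same protocol with the same threshold $h$: as $\sigma$ satisfies termination, it tolerates at most $q+t\leq n-h$ benign and Byzantine processes. Since both statements constrain the same tuple $(t,d,q)$ tolerated by $\sigma$, any $(t,d,q)$ for which $\sigma$ solves consensus must satisfy $d+t<2h-n$ and $q+t\leq n-h$ simultaneously, which is exactly the claim. Equivalently, in contrapositive form: if either $d+t\geq 2h-n$ or $q+t>n-h$, then $\sigma$ already violates agreement or termination on some execution, and hence does not solve consensus.

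There is essentially no obstacle here; the only point worth making explicit is that the two corollaries are not in competition. The adversary realizing the worst case for agreement (two disjoint non-faulty subsets whose sizes sum to $n-d$, each reaching the threshold by counting the $d$ deceitful votes) need not coincide with the adversary realizing the worst case for termination (a correct coalition of size $n-q$ that falls below the threshold), and the $t$ Byzantine processes may legitimately be counted once in each bound because a Byzantine process is free to behave deceitfully in the first scenario and benignly in the second. Thus the conjunction of the two inequalities is necessary, and — as the positive results for \protocol in Section~\ref{sec:prot} will show — it is also tight.
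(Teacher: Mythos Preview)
Your proposal is correct and matches the paper's approach exactly: the paper does not even spell out a proof for this corollary, stating only that it is obtained by ``combining the results of corollaries~\ref{cor:impagr} and~\ref{cor:impter}.'' Your extra paragraph explaining why the same $t$ may legitimately appear in both bounds is a helpful clarification that the paper leaves implicit.
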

We show in Figure~\ref{fig:fig2} the threshold $h$ to tolerate a
number $d$ of deceitful and $q$ of benign processes. For example, for a
threshold $h=\ceil{\frac{5n}{9}}-1$, we have that $d<\frac{n}{9}$ for
safety and $q<\frac{4n}{9}$ for liveness, with $t=0$. The maximum
number of Byzantine processes tolerated with $d=q=0$ is the minimum of
both bounds, being for example $t<\frac{n}{9}$ for
$h=\ceil{\frac{5n}{9}}-1$. In the remainder of this paper, we assume
the adversary satisfies the resilient optimal bounds of $h<n-q-t$ and $h>\frac{d+t+n}{2}$, given a
particular voting threshold $h$. The result of Theorem~\ref{thm:imp}
holds regardless of the voting threshold. Thus, a protocol that
satisfies both $h<n-q-t$ and $h>\frac{d+t+n}{2}$ can set its voting
threshold $h\in(n/2,n]$ in order to solve consensus for any
combination of $t$ Byzantine, $q$ benign and $d$ deceitful processes, as long as $n>3t+d+2q$ holds.

\begin{figure}[ht] \center
\includegraphics[width=.7\textwidth]{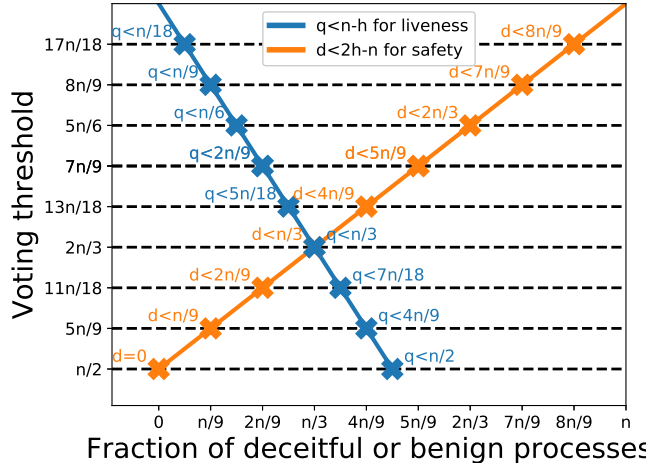}
  \caption{Number of
deceitful processes $d$ and benign processes $q$ tolerated for safety and liveness, respectively, per voting threshold $h$ and with $t=0$ Byzantine processes.}
  \label{fig:fig2}
\end{figure}
  \section{The \protocol Protocol}
  \label{sec:prot}
  % \TODO{add lines to protocol}
  % \ARP{TODO: Introduce model, and speak of \protocol class of protocols.}
  In this section, we introduce the \protocol class of protocols, 
  a class of resilient optimal protocols that solve, for different voting thresholds, the \problem problem in the BDB model. 
  %the first class of protocols that covers all resilient optimal protocols that solve the \problem problem in the BDB model depending on the voting threshold. 
  In particular, all protocols within the \protocol class tolerate $t$ Byzantine, $d$ deceitful and $q$ benign processes satisfying $n>3t+d+2q$, and, given a particular protocol $\sigma(h)$ of the class uniquely defined by a voting threshold $h\in(n/2,n]$, then $\sigma(h)$ tolerates a number $n$ of processes satisfying $d+t<2h-n$ and $q+t\leq n-h$. In this section, we first need to introduce few assumptions and definitions in Section~\ref{sec:model2}. Second, we present the overview of the \protocol protocol in Section~\ref{sec:overview}, and show its components in sections~\ref{sec:aabc},~\ref{sec:aarb}, and~\ref{sec:gen}.
    \subsection{Additional Assumptions}
    \label{sec:model2}
%   Previous works~\cite{civit2021} defined the \textit{accountable consensus} problem by adding the property of accountability to the properties of the consensus problem. We generalize this definition to $k$-accountability: \\
% %\begin{itemize}
% %\item 
% \noindent {\bf $k$-accountability.} 
% If two non-faulty processes output disagreeing decision values, then all non-faulty processes eventually identify at least $k+1$ faulty processes responsible for that disagreement. \vincent{We actually had to change the consensus definition as agreement is conditional upon $t<n/3$, so it would be better to define clearly the k-accountable problem here to make the paper standalone.}
% %\end{itemize}

% The previous definition of accountability~\cite{civit2021} is covered
% in this definition by $k$-accountability for $k=\ceil{n/3}-1$. We
% prove in the appendix that \protocol with voting threshold $h$ is
% $k$-accountable for $k=2h-n$, which means that, for the standard
% $h=2n/3$, \protocol is $k$-accountable with $k=\ceil{n/3}-1$ as well\vincent{I think we say "it solves the accountable byzantine agreement problem" in Polygraph}.
  \paragraph{Adversary} In order to limit the computational power of processes to prevent the adversary from forging keys, we model processes as probabilistic
polynomial-time interactive Turing machines
(ITMs)~\cite{Dumbo2020,cachin2005random,cachin2001}. A process is an
ITM defined by the following protocol: it is activated upon receiving
an incoming message to carry out some computations, update its states,
possibly generate some outgoing messages, and wait for the next
activation. The adversary $\mathcal{A}$ is a probabilistic ITM that
runs in polynomial time (in the number of message bits generated by
non-faulty processes). % We restrict the adversary to control at most
% $t$ Byzantine, $d$ deceitful and $q$ benign processes satisfying
% $n> 3t+d+2q$. We show this bound to be resilient
% optimal in the BDB failure model in Section~\ref{sec:imp}. 

\paragraph{{\expandafter\MakeUppercase\problem} problem}
The accountable consensus problem~\cite{civit2021} includes the property of
accountability in order to provide guarantees in the event that
deceitful and Byzantine processes manage to cause a disagreement. This
property is however insufficient for the purpose of \protocol. We need
an additional property that identifies and removes all deceitful
behavior that prevents termination. Faulty processes can break agreement
in a finite number of conflicting messages, but once they send a pair
of these conflicting messages, they leave a trace that can result in
their exclusion from the system. Our goal is to exploit this trace to make sure that
deceitful processes cannot contribute to breaking liveness. As a result,
we include the property of \myproperty, stating that deceitful faults do not
prevent termination of the protocol.

\begin{definition}[{\expandafter\MakeUppercase\problem} problem]
A protocol $\sigma$ with voting threshold $h$ solves the \problem
problem if the following properties are satisfied:
\begin{itemize}
\item {\bf Termination.} Every non-faulty process eventually decides on a value.
\item {\bf Validity.} If all non-faulty processes propose the same value, no other value can be decided.
\item {\bf Agreement.} If $d+t<2h-n$ then no two non-faulty processes decide on different values.
\item {\bf Accountability.} If two non-faulty processes output
disagreeing decision values, then all non-faulty processes eventually
identify at least $2h-n$ faulty processes responsible for that
disagreement.
\item{\bf \expandafter\MakeUppercase\myproperty.} Deceitful behavior
  does not prevent liveness.
\end{itemize}
\end{definition}
We generalise the previous definition of accountability~\cite{civit2021} by including the voting threshold $h$. That is, the previous definition of accountability is the one we present in this work for the standard voting threshold of $h=2n/3$.
% prove in the appendix that \protocol with voting threshold $h$ is
% $k$-accountable for $k=2h-n$, which means that, for the standard
% $h=2n/3$, \protocol is $k$-accountable with $k=\ceil{n/3}-1$ as well\vincent{I think we say "it solves the accountable byzantine agreement problem" in Polygraph}.
% Previous works~\cite{civit2021} defined the \textit{accountable consensus} problem by adding the property of accountability to the properties of the consensus problem. We generalize this definition to $k$-accountability

% %\begin{itemize}
% %\item 
% \noindent {\bf $k$-accountability.} 
% If two non-faulty processes output disagreeing decision values, then all non-faulty processes eventually identify at least $k+1$ faulty processes responsible for that disagreement. \vincent{We actually had to change the consensus definition as agreement is conditional upon $t<n/3$, so it would be better to define clearly the k-accountable problem here to make the paper standalone.}
% %\end{itemize}

% The previous definition of accountability~\cite{civit2021} is covered
% in this definition by $k$-accountability for $k=\ceil{n/3}-1$. We
% prove in the appendix that \protocol with voting threshold $h$ is
% $k$-accountable for $k=2h-n$, which means that, for the standard
% $h=2n/3$, \protocol is $k$-accountable with $k=\ceil{n/3}-1$ as well\vincent{I think we say "it solves the accountable byzantine agreement problem" in Polygraph}.
% All non-faulty
% processes eventually identify faulty processes that send conflicting
% messages via proofs-of-fraud, if these faulty processes prevent
% termination.\\

\subsection{Basilic Internals}
\label{sec:overview}
%\vincent{The following part is hard to digest, why not giving an overview, and then explaining where each  part is detailed.}

\protocol is a class of consensus protocols, all these protocols follow the same 
pseudocode (Algorithms~\ref{alg:prot}--\ref{alg:gen}) but differ by their voting threshold $h\in(n/2,n]$.
The structures of these protocols follow the classic reduction~\cite{BCG93} from the consensus problem, 
which accepts any ordered set of input values, to the binary consensus problem, which accepts binary
input values.

\paragraph{Basilic Overview}
More specifically, Basilic has at its core the binary consensus
protocol called \textit{\mypropertyadj binary consensus} or AABC for
short (Alg.~\ref{alg:prot}--\ref{alg:helper}) and presented in
Section~\ref{sec:aabc}. We show in Figure~\ref{fig:example} an example
execution with $n=4$ processes in the committee. First each process
$p_i$ selects their input value $v_i$, which they share with everyone
executing an instance of a reliable broadcast protocol called
\textit{\mypropertyadj reliable broadcast} or AARB for short. Then,
processes executed one instance $AABC_i$ of the binary consensus
protocol to decide whether to select their associated input value from
process $p_i$. Finally, processes locally process the minimum input
value from the values whose associated AABC instance output $1$.

This Basilic binary consensus protocol shares similarities with Polygraph~\cite{CGG21}, as it also detects
guilty processes, but goes further, by excluding these detected processes and adjusting its 
voting threshold at runtime to solve consensus even in cases where Polygraph cannot ($ n/3\leq t+q+d < n$). We summarize the comparison of \protocol with the state of the art in Table~\ref{tab:bigtab}.
Finally, the rest of the reduction is depicted in Alg.~\ref{alg:gen} and invokes $n$ \mypropertyadj reliable 
broadcast instances or AARB (Alg.~\ref{alg:arb}) and described in Section~\ref{sec:aarb}, 
followed by $n$ of the aforementioned AABC instances.

\tikzstyle{n}= [circle, fill, minimum size=10pt,inner sep=0pt, outer
sep=0pt] \tikzstyle{mul} = [circle,draw,inner sep=-1pt]
\newcounter{y}
\begin{figure*}[h]
  \hspace{-5em}
  \begin{tikzpicture}[yscale=0.5, xscale=1.2, node distance=0.3cm,
    auto]
    % The strategy is to create nodes with names: N-column-row
    % Input nodes are named N-0-0 ... N-0-15
    % Output nodes are named N-10-0 ... N-10-15

    % Draw inputs

    \def\varn{3}
    \def\varz{-1}
    \def\varf{0}
    \def\vars{2.75}
    \def\vart{5.75}
    \def\varmv{7}
    \def\varfo{9}
    \def\varfif{12}
    \foreach \i in {0,...,\varn}{
      % \node (00-\i) at (\varz,-\i) {$p_\i$};
      \node (0-\i) at (\varf,-\i) {$p_\i: v_\i$};
      % \draw [-latex',dashed] (00-\i) -- (0-\i);
    %  \node (3-\i) at (\varfo,-\i) {\scriptsize $\{1:v_0,\,0:v_1,\,1:v_2,\,0:v_3\}$};
    % \node (4-\i) at (\varfif,-\i){$v_0$};      
      \node (1-\i) at (\vars,-\i) {AARB$_\i:v_\i$};
    }

    \draw [decorate, 
    decoration = {calligraphic brace,
        raise=5pt,
        amplitude=5pt}] (0-0.north)+(-0.25,0) --  (1-0.north)+(-0.25,0)
      node[pos=0.5,above=10pt,black]{reliably broadcast proposals};
      
    \foreach \i/\y in {0/1,1/0,2/1,3/0}{
      \node (2-\i) at (\vart,-\i) {AABC$_\i:\y$};
    }

    \node (3-i) at (\varfo,-1.5) {\scriptsize $\{v_0:1,\,v_1:0,\,v_2:1,\,v_3:0\}$};
    % \node (3-0) at (\varfo,0) {\scriptsize $\{v_0:1,\,v_1:0,\,v_2:1,\,v_3:0\}$};
    \node (4-i) at (\varfif,-1.5){$v_0$};
    
    \draw [decorate, 
    decoration = {calligraphic brace,
        raise=5pt,
        amplitude=5pt}] (1-0.north)+(0.25,0) -- (2-0.north)
      node[pos=0.6,above=10pt,black]{binary consensus decisions};
      
    % \node (MV) at (\varmv,-1.5) {MV output: $1010$};
    \node[] (MV) at (\varmv,-1.5) {};
    \draw[-latex'] (3-i.east) -- node[above, midway]{\scriptsize $min(v_0,v_2)$} (4-i.west);
    \draw[-latex'] (MV.center) -- (3-i.west);

    \draw [decorate, 
    decoration = {calligraphic brace,
        raise=5pt,
        amplitude=5pt}] (2-0.north)+(0.25,0) -- (10.4,0.5)
      node[pos=0.5,above=10pt,black]{bitmask and associated bits};

      \draw [decorate, 
    decoration = {calligraphic brace,
        raise=5pt,
        amplitude=5pt}] (10.5,0.5) -- (12.2,0.5)
      node[pos=0.5,above=10pt,black]{decide one};

      \draw [decorate, 
    decoration = {calligraphic brace,
        raise=5pt,
        amplitude=5pt}] (-0.25,2) -- (12.2,2)
      node[pos=0.5,above=10pt,black]{\protocol's multi-valued consensus};
    
    \foreach \i in {0,...,\varn}{
      % \draw (2-\i.east) -| (3-i.west);        
      \draw (2-\i.east) -| (MV.center);        
      % \draw (3-\i.east) -- node[above, midway]{\scriptsize $min(v_0,v_2)$} (4-\i.west);
      
      \foreach \j in {0,...,\varn}{
        \draw (0-\i.east) -- (1-\j.west);
        \draw (1-\i.east) -- (2-\j.west);

        % \draw [decorate, 
        %  % decoration = {calligraphic brace,
        %  % raise=5pt,
        %  % amplitude=5pt}] (a.west) --  (b.east)
        %  % node[pos=0.45,above=10pt,black]{$A\cap B=\emptyset$, partition of correct};
        
      }
    }

  \end{tikzpicture}
  \caption{\protocol execution example for a committee of
$n=4$. First, each process $p_i$ selects their input value $v_i$,
which they share with everyone executing their respective instance
$AARB_i$ of $AARB$. Then, processes executed one instance $AABC_i$ of
the binary consensus protocol to decide whether to select their
associated input value from process $p_i$. Finally, processes locally
process the minimum input value from the values whose associated AABC
instance output $1$.}
  \label{fig:example}
\end{figure*}
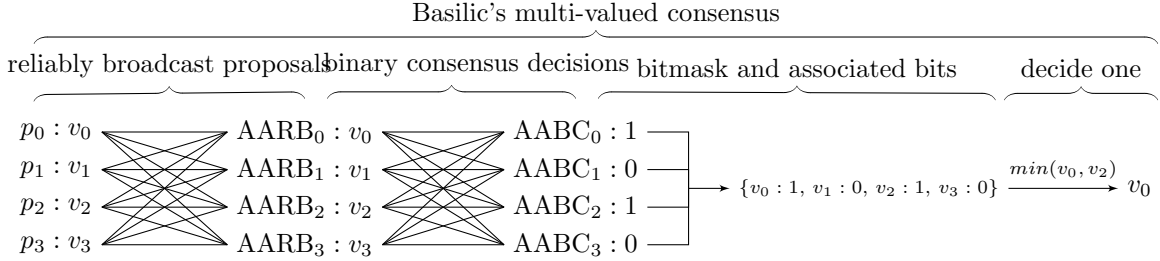
% Basilic first 
% executes $n$ \mypropertyadj reliable broadcast instances, 
% which we define in Section~\ref{sec:aarb}, to exchange $n$ proposals, followed by $n$ \mypropertyadj binary consensus instances,
% defined in Section~\ref{sec:aabc}. Then, the \protocol protocol, which
% we present in Section~\ref{sec:gen}, takes a decision based on the
% delivered proposals and the binary decisions. 

%\paragraph{From multi-valued to binary consensus} 
%   Recall that each protocol of the \protocol class is uniquely defined by a voting threshold $h\in(n/2,n]$. Each of these protocols applies a reduction from a multi-valued consensus protocol, which accepts any ordered set of
% input values, to the problem of binary consensus, that accepts binary
% input values, similar to ACS~\cite{BCG93} or
% Polygraph~\cite{civit2021}, from which \protocol derives. The idea
% to execute $n$ \mypropertyadj reliable broadcast instances, which we define in Section~\ref{sec:aarb}, to exchange $n$
% proposals, followed by $n$ \mypropertyadj binary consensus instances,
% defined in Section~\ref{sec:aabc}. Then, the \protocol protocol, which
% we present in Section~\ref{sec:gen}, takes a decision based on the
% delivered proposals and the binary decisions. 

\mypar{Certificates and transferable authentication} \protocol uses certificates in order to validate or discard a message, and also to detect deceitful processes
by cross-checking certificates. A certificate is a list of previously
delivered and signed messages 
that justifies the content of the message in which the certificate is piggybacked.
%that justifies the content of the message
%that the certificate piggybacks to. 
Thus, non-faulty processes perform
transferable authentication~\cite{CFAR12}. That is, process $p_i$ can
deliver $\ms{msg}$ from $p_j$ by verifying the signature of
$\ms{msg}$, even if $\ms{msg}$ was received from $p_k$, for
$k\neq i\neq j$.

\mypar{Detected deceitful processes} A key novelty of \protocol is to
remove detected deceitful processes from the committee at
runtime. For this reason, we refer to $d_r$ as the number of detected
deceitful processes, and define a voting threshold $h(d_r)$ that
varies with the number of detected deceitful processes. Therefore, processes start \protocol with an
initial voting threshold $h(d_r=0)=h_0$, e.g., $h_0=\ceil{\frac{2n}{3}}$, but then
update the threshold by removing detected deceitful processes,
i.e. $h(d_r)=h_0-d_r$. This way, detected deceitful processes break
neither liveness nor safety, as we will show. Certificates must always
contain $h(d_r)$ signatures from
distinct processes justifying the message (after filtering out up to
$d_r$ signatures from detected deceitful processes), or else they will
be discarded. Recall that the adversary is thus constrained to the bounds from Corollary~\ref{cor:impteragr} depending on the voting threshold. As \protocol uses a threshold that updates at runtime starting from an initial threshold $h(d_r)=h_0-d_r$, we restate these bounds 
% \vincent{I don't understand. Are you recalling the assumption on the maximum number of faulty processes the adversary can control here? Please elaborate.}
applied to the initial threshold $h_0\leq n-q-t$ and $h_0>\frac{d+t+n}{2}$, or to the updated threshold of $h(d_r)<n-q-t-d_r$ and
$h(d_r)>\frac{d+t+n}{2}-d_r$.

% \ARP{Make sure item (4) of rules is consistent with the definition of deceitful faults and of conflicting messages. Make sure evth. correct (it is as of now I think because sending conflicting $\ms{bv\_cet}$ and $BVECHO$ is not really conflicting messages).}

\subsection{The General \protocol Protocol}
\label{sec:gen}
We bring together the $n$ instances of the AABC binary consensus
protocol with the $n$ instances of the AARB reliable broadcast protocol in
Algorithm~\ref{alg:gen}, where we show the general \protocol
protocol. The protocol derives from Polygraph's general protocol~\cite{civit2019techrep,civit2021},
which in turn derives from DBFT's multi-valued consensus protocol~\cite{crain2018dbft}.

Non-faulty processes first start the AARB protocol for which they 
are the source by proposing a value in line~\ref{line:genaarb}. Delivered proposals are stored in an array $\ms{proposals}$ at the index
 corresponding to the source of the proposal. A binary consensus
at index $k$ is started with input value 1 for each index $k$ where a
proposal has been recorded (line~\ref{line:genaabc}). Notice
that we can guarantee to decide $1$ on at most $h(d_r)$ proposals
(line~\ref{line:gendec1}), where $d_r$ can be up to $d$, meaning that,
for the standard threshold $h(d_r)=\ceil{\frac{2n}{3}}-d_r$, the
maximum number of decided proposals is $\ceil{\frac{n}{3}}$, since
$d_r<\frac{n}{3}$. Once non-faulty processes decide 1 on at lest $h(d_r)$ AABC instances, non-faulty processes start the remaining AABC
instances with input value 0 (line~\ref{line:genaabc0}), without
having to wait to AARB-deliver their respective values.

Finally, once all AABC instances have terminated
(line~\ref{line:genaabcter}), non-faulty processes can output a decision. As such, processes take as input a list of AARB-delivered values and their associated index and output a decision selecting the AARB-delivered value with the lowest associated index whose binary consensus with the same index output 1 (line~\ref{line:genuni}).

% This function can be set by the application: if the application accepts the SBC-Validity property, then $\lit{sbc()}$ can be a function that merges all decisions into one (such as uniting them), otherwise, $\lit{sbc()}$ can select one at random, or for example by choosing the value whose source has the minimum identity of them (e.g. $\lit{sbc()}=\min\{k:\ms{bin-decisions}[k]=1\}$).
      %   \State $j=\min\{k:\ms{bin-decisions}[k]=1\}$\label{line:genmin}

% which is
% set to $False$ if the union of the decided proposals is not a valid
% decision by the definition of validity of the desired application. In
% this case, the minimum of the decided proposals is the final decision
% (line~\ref{line:genmin}). Otherwise, if $\ms{SBC}=True$, then the
% union of all the decided proposals is the final decision
% (line~\ref{line:genuni}), satisfying SBC-Validity. \ARP{maybe change and instead of SBC just pass around a function $\lit{merge}$ that specifies what to do?}
\begin{algorithm}[th]
  \caption{The general \protocol with initial threshold $h_0$.}%\ARP{mention in text that after restart processes still allow certificates with removed deceitful processes.}\vincent{done}} 
  \label{alg:gen}
  \smallsize{
    \begin{algorithmic}[1]
    % \algrestore{alg:arb}
      % \small
      % \Statex \rule{0.45\textwidth}{0.4pt}
      \Part{\smallsize $\lit{\protocol-gen-propose_{h_0}(}v_i\lit{)}$}{
        \State $\ms{msgs}\gets \lit{AARB-broadcast(}\EST, \langle v_i,i\rangle)$
        \Comment{Algorithm~\ref{alg:arb}}\label{line:genaarb}
        \Repeat{}
        \SmallIf{$\exists v,k:(\EST,\langle v,k\rangle)\in\ms{msgs}$}{}
        \Comment{proposal AARB-delivered}
        \SmallIf{$\lit{BIN-CONSENSUS}[k]$ not yet invoked}{}\Comment{Algorithm ~\ref{alg:prot}}
        \State \hspace{-0.5em}$\ms{bin-decisions}[k]\gets\lit{BIN-CONSENSUS}[k].\lit{AABC-prop}(1)$\label{line:genaabc}
        \EndSmallIf
        \EndSmallIf
        \EndRepeat
        \Until{$|\ms{bin-decisions}[k]=1|\geq h(d_r)$} \Comment{decide $1$ on at least $h(d_r)$}\label{line:gendec1}\EndUntil
        \For{\textbf{all} $k$ such that $\lit{BIN-CONSENSUS}[k]$ not yet invoked}

        \State $\ms{bin-decisions}[k]\gets\lit{BIN-CONSENSUS}[k].\lit{AABC-prop}(0)$

        % \Comment{start remaining AABC instances}
        \label{line:genaabc0}
        \EndFor
        \WUntil{\textbf{for all} $k,\,\ms{bin-decisions}[k]\neq \bot$}\EndWUntil\label{line:genaabcter}
        % \WUntil{$\exists v:(\EST, \langle v,j\rangle)\in\ms{msgs}_i\,\forall k:\ms{bin-decisions}[k]=1$}\EndWUntil
            
        \State $j\gets \min\{k:\ms{bin-decisions}[k]=1\} \lit{)} $\label{line:genuni1}        
        % min{k : bin-decisions[k] = 1}
        \WUntil{$\exists v:(\EST, \langle v,j\rangle)\in\ms{msgs}$}\EndWUntil
        \State \textbf{decide} $v$\label{line:genuni}        
        % \Comment{decide according to application}
        % \SmallIf{$\ms{SBC}$}{}\Comment{if union of the proposals allowed}
      %   \WUntil{$\exists v:(\EST, \langle v,j\rangle)\in\ms{msgs}_i$\textbf{ for all } $k:\ms{bin-decisions}[k]=1$}\EndWUntil
      %   \State \textbf{decide }$\cup v_k$ \textbf{for all} $k:\ms{bin-decisions}[k]=1$\Comment{decide union}\label{line:genuni}
      %   \EndSmallIf
      %   \SmallElse{}\Comment{otherwise decide only min}
      %   \State $j=\min\{k:\ms{bin-decisions}[k]=1\}$\label{line:genmin}
      % \WUntil{$\exists v:(\EST, \langle v,j\rangle)\in\ms{msgs}_i$}\EndWUntil
      %   \State \textbf{decide }$\{v\}$% \Comment{decide only min}
      %   \EndSmallElse
      }\EndPart
      \algstore{alg:gen}
    \end{algorithmic}
  }
\end{algorithm}
% \begin{corollary}
%   \label{cor:sum}
%   The general \protocol protocol solves \mypropertyadj consensus.  
% %   The general \protocol protocol with voting threshold $h(d_r)$ solves
% % \mypropertyadj consensus against an adversary controlling $t, d, q$
% % Byzantine, deceitful and benign processes, respectively, if it satisfies
% % $h(d_r)>\frac{n+d+t}{2}-d_r$ and $h(d_r)\leq n-q-t-d_r$.
% \end{corollary}

\subsection{\Mypropertyadj Binary Consensus}
\label{sec:aabc}
We show in Algorithm~\ref{alg:prot} the \protocol
\textit{\mypropertyadj binary consensus} (AABC) protocol with initial threshold $h_0\in(n/2,n]$, along with some additional components and functions in Algorithm~\ref{alg:helper}. First, note that all delivered messages are correctly signed 
%\vincent{Can't a wrongly signed Byzantine message be delivered?}\ARP{It would be discarded} 
(as wrongly signed messages are discarded)
and
stored in $\ms{sig\_msgs}$, along with all sent messages (as we
detail in Rule~\ref{item:signed} of Alg.~\ref{alg:prot}).

\begin{algorithm}[htp]
  \caption{\protocol's AABC with initial threshold $h_0$ for $p_i$.%\vincent{It would be nice to outline the differences with Polygraph by having Polygraph origiunal pseudocode in grey.}
  }%\ARP{mention in text that after restart processes still allow certificates with removed deceitful processes.}\vincent{done}} 
  \label{alg:prot}
  \smallsize{
    \begin{algorithmic}[1]
      % \small
      % \Statex \rule{0.45\textwidth}{0.4pt}
      \algrestore{alg:gen}
      \Part{\smallsize $\lit{AABC-prop_{h_0}(}v_i\lit{)}$}{
        \State $\ms{est}\gets v_i$\label{line:estimate}
        \State $\ms{r}\gets 0$
        \State $\ms{timeout}\gets 0$
        \State $\ms{cert}[0]\gets \emptyset$
        \State $\ms{bin\_vals} \gets \emptyset$
        \Repeat{}
        \State $r\gets r+1$
        \State $\ms{timeout}\gets \Delta$\Comment{set timer}
        \State $coord\gets ((r-1)mod\,n)+1$\Comment{rotate coordinator}\label{line:rotcoord}
        \EndRepeat
        % \WUntil{}\EndWUntil
        \Part{$\blacktriangleright$ Phase 1}{
          \State $timer\gets \lit{start-timer}(\ms{timeout})$ \Comment{start timer}
          \State $\lit{abv-broadcast}(\EST[r],\ms{est},\ms{cert}[r-1],i,\ms{bin\_vals})$
          
          % \Comment{binary value broadcast the current estimate}
          \label{line:abv-broadcast}
          \SmallIf{$i=coord$}{}% \Comment{coordinator rebroadcasts the first value}
          \label{line:coordbroad1}
          \WUntil{$\ms{bin\_vals}[r]=\{w\}$}          \EndWUntil
          \State $\lit{broadcast}(\COORD[r],w)$
          \EndSmallIf\label{line:coordbroad2}
          \WUntil{$\ms{bin\_vals}[r]\neq \emptyset \wedge \ms{timer}$ expired} \EndWUntil\label{line:wuntiphase1}
          
        }\EndPart
        \Part{$\blacktriangleright$ Phase 2}{
          \State $\ms{timer}\gets \ms{timeout}$ \Comment{reset timer}
          \SmallIf{$(\COORD[r],w)\in\ms{sig\_msgs} \wedge w\in\ms{bin\_vals}[r]$}{}% 
          \label{line:delcoord}
          \State $\ms{aux}\gets \{w\}$\Comment{prioritize coordinator's value}
          \EndSmallIf
          \SmallElse{$\ms{aux}\gets \ms{bin\_vals}[r]$}\Comment{else use any received value}\label{line:binvalues1}
          \EndSmallElse
          % \State $\ms{signature}\gets \lit{sign}(\ms{aux},r,i)$
          \State $\lit{broadcast}(\ECHO[r],\ms{aux}[r])$\Comment{broadcast signed \ECHO message}\label{line:broadecho}
          \WUntil{($\ms{vals} = \lit{comp-vals}(\ms{sig\_msgs},\ms{bin\_vals},\ms{aux}))\neq \emptyset \wedge \ms{timer}$ expired}\EndWUntil\label{line:callcomputeval}
        }\EndPart
        \Part{$\blacktriangleright$ Decision phase}{
          \SmallIf{$|\ms{vals}|=1$}{$\ms{est}\gets\ms{vals}[0]$}
          \Comment{if only one, adopt as estimate}\label{line:dec1}
          \SmallIf{$\ms{est}=(r\,mod\,2)\,\wedge\,p_i$ not decided before}{}
          \State $\lit{decide}(\ms{est})$ \Comment{if parity matches, decide the estimate}\label{line:decide}
          \EndSmallIf
          \EndSmallIf
          \SmallElse{$\ms{est} \gets (r\,mod\,2)$}\Comment{otherwise, the estimate is the round's parity bit}\label{line:adoptestimate2}
          \EndSmallElse
          \State $\ms{cert}[r]\gets \lit{compute-cert}(\ms{vals},\ms{est},r,\ms{bin\_vals},$ $\ms{sig\_msgs})$% \Comment{compute and broadcast certificate if decision}
          \label{line:dec2}
          
          }\EndPart
        }\EndPart
        \Part{\textbf{Upon} receiving a signed message $\ms{s\_msg}$}{\label{line:dr1}
          \State $\ms{pofs}\gets \lit{check-conflicts}(\ms{\{s\_msg\}},\,\ms{sig\_msgs})$
          \Comment{returns $\emptyset$ or PoFs}\label{lin:smcc}
          \State $\lit{update-committee}(\ms{pofs})$\label{lin:smuc}\Comment{remove fraudsters}
          }\EndPart
          \Part{Upon receiving a certificate $\ms{cert\_msg}$}{
            \State $\ms{pofs}\gets \lit{check-conflicts}(\ms{cert\_msg},\,\ms{sig\_msgs})$          \Comment{returns $\emptyset$ or PoFs} %\Comment{returns $\emptyset$ if no conflict, PoFs otherwise}
            \State $\lit{update-committee}(\ms{pofs})$\Comment{remove fraudsters}
          }\EndPart
          \Part{Upon receiving a list of PoFs $\ms{pofs\_msg}$}{
            \SmallIf{$\lit{verify-pofs}(\ms{pofs\_msg})$}{}\Comment{if proofs are valid then}
            \State $\lit{update-committee}(\ms{pofs\_msg})$
            \Comment{remove fraudsters from committee}
            \EndSmallIf\label{line:dr2}
          }\EndPart
        \Part{Rules}{\label{lin:rul}
          \begin{enumerate}[leftmargin=* ,wide=\parindent]
          \item Every message that is not properly signed by the sender is discarded.
          \item Every message that is sent by $\lit{abv-broadcast}$ without a valid certificate after Round $1$, except for messages with value $1$ in Round $2$, are discarded.
          \item Every signed message received is stored in $\ms{sig\_msgs}$, including messages within certificates.\label{item:signed}
          \item Every time the timer reaches the timeout for a phase, and if that phase cannot be terminated, processes broadcast their current delivered signed messages for that phase (and all messages received for future phases and rounds) and reset the timer for that phase. These messages are added to the local set of messages and cross-checked for PoFs on arrival.\label{item:timer} % \ARP{in order to cross-check for possible deceitful preventing termination (in \ECHO and $cert$ deceitful removes, in \BVALECHO and $bv\_cert$ received messages added up and values added)\textbf{IMPORTANT}}
          \end{enumerate}
         }\EndPart
         \algstore{alg:prot}
    \end{algorithmic}
  }
\end{algorithm}

The \protocol's AABC protocol is divided in two phases, after which a decision is taken. A key difference with Polygraph is that
when a timer for one of the two phases reaches its timeout, if a
process cannot terminate that phase yet, then it broadcasts its set of
signed messages for that phase and resets the timer, as detailed in Rule~\ref{item:timer}. This allows \protocol
to prevent deceitful processes from breaking termination by trying to
cause a disagreement and never succeeding. It is important that
processes wait for this timer before taking a decision for the phase,
since only waiting for that timer guarantees that all sent messages
will be received before the timer reaches its timeout, after GST.
% \vincent{No need to write "we omit it from the pseudocode" as it is effectively part of the pseudocode.}
% For ease of
% exposition, we omit the pseudocode for this broadcast of messages and
% certificates when a timeout is reached, leaving it as a part of the
% set of rules in line~\ref{lin:rul}.
%\vincent{Try to structure the text so that "In the first phase" appears at the same level as "In the second phase". You can use itemize for example or numbers...}
Each process maintains an estimate (line~\ref{line:estimate}), initially given as input, and then proceeds in rounds executing the following phases:
\begin{enumerate}[leftmargin=* ,wide=\parindent]
  \item In the first phase, each process
broadcasts its estimate (given as input) via an accountable binary value reliable
broadcast (ABV-broadcast) (line~\ref{line:abv-broadcast}), which
we present in Algorithm~\ref{alg:helper}, lines~\ref{line:abvb-start}--\ref{line:abvb-end} and discuss in Section~\ref{sec:aabc}.
Decision and
$\lit{abv-broadcast}$ messages are discarded unless they come with a
certificate justifying them.

The protocol also uses a rotating coordinator (line~\ref{line:rotcoord}) per round which carries
a special \COORD message (lines~\ref{line:coordbroad1}-\ref{line:coordbroad2}). All processes wait until they deliver at
least one message from the call to $\lit{abv-broadcast}$ and until the
timer, initially set to $\Delta$, expires (line~\ref{line:wuntiphase1}). 
(Note that the bound on the message delays remains unknown due to the unknown GST.)
If a process delivers a message from
the coordinator (line~\ref{line:delcoord}), then it broadcasts an \ECHO message with the
coordinator's value and signature in the second phase (line~\ref{line:broadecho}). Otherwise, it
echoes all the values delivered in phase $1$ as part of the call
to $\lit{abv-broadcast}$ (line~\ref{line:binvalues1}).

\item In the second phase, processes wait till they receive $h(d_r)$ \ECHO
messages, as shown in the call to $\lit{comp-vals}$ (line~\ref{line:callcomputeval}), which returns
the set of values that contain these $h(d_r)$ signed \ECHO
messages. Function $\lit{comp-vals}$ is depicted in
Algorithm~\ref{alg:helper} (lines~\ref{line:comp-val-start}--\ref{line:comp-val-end}). Processes then
try to come to a decision in lines~\ref{line:dec1}-\ref{line:dec2}. As it was the case for phase $1$, when the
timer expires in phase $2$, all processes broadcast their current set
of \ECHO messages. Then, they update their committee if they detect deceitful processes through PoFs (lines~\ref{line:dr1}-\ref{line:dr2}) and recheck if
they reach the updated $h(d_r)$ threshold, after which they reset the
timer.

\item During the decision phase, if there is just one value returned by
$\lit{comp-vals}$ and that value's parity matches with the
round's parity, process $p_i$ decides it (line~\ref{line:decide}) and broadcasts the associated
certificate in the call to $\lit{compute-cert}$. If the parity
does not match then process $p_i$ simply adopts the value as the
estimate for the next round (line~\ref{line:dec1}). If instead there is more than one value
returned by $\lit{comp-vals}$ then $p_i$ adopts the round's
parity as next round's estimate
(line~\ref{line:adoptestimate2}). Adopting the parity as next round's
estimate helps with convergence in the next round, in this case where
processes are hesitating between two values.The call to
$\lit{compute-cert}$ (depicted at lines~\ref{lin:com-cer}--\ref{lin:com-cer-end} of Algorithm~\ref{alg:helper}) gathers the signatures justifying the
current estimate and broadcasts the certificate if the estimate was
decided in this round. % (line~\ref{lin:com-cer} of Algorithm~\ref{alg:helper}).

\end{enumerate}
\begin{algorithm}[htbp]%[htp]
  \caption{Helper Components.}
  \label{alg:helper}
  \smallsize{
    \begin{algorithmic}[1]
    \algrestore{alg:prot}
      \Part{$\lit{update-committee}(\ms{new\_pofs})$}{\Comment{function that removes fraudsters}\label{line:update-com-start}
        \SmallIf{$\ms{new\_pofs}\neq \emptyset\,\wedge\, \ms{new\_pofs}\not\subseteq \ms{local\_pofs}$}{}
        \State $\ms{new\_pofs}\gets \ms{new\_pofs}\backslash\ms{local\_pofs}$\Comment{consider only new PoFs}
        \State $\ms{local\_pofs}\gets \ms{local\_pofs}\cup\ms{new\_pofs}$\Comment{store new PoFs}
        \State $\lit{broadcast}(\POF,\,\ms{new\_pofs})$\Comment{broadcast new PoFs}
        \State $\ms{deceitful}\gets \ms{new\_pofs}.\lit{get\_processes()}$\Comment{get deceitful from PoFs}
        \State $\ms{new\_deceitful}\gets \ms{new\_deceitful}\backslash\ms{local\_deceitful}$
        \State $\ms{local\_deceitful}\gets \ms{local\_deceitful}\cup\ms{new\_deceitful}$      
        \State $N\gets N\backslash \{\ms{new\_deceitful}\}$; $n\gets |N|$   \Comment{remove new deceitful}
        \State $d_r\gets |\ms{local\_deceitful}|$\Comment{update number of detected deceitful}
        \State $h(d_r)\gets \lit{recalculate-threshold}(N,\,d_r)$        % \Comment{recalculate threshold}
        \State $\lit{recheck-certs-termination}()$
        \Comment{check termination of current phase}\label{line:recheccerts}
          \State $\lit{reset-current-timer()}$\Comment{reset timer of current phase}\label{line:resetcurt}
          % \State \ARP{the two lines above might be redundant (perhaps with reset-current-timer is enough)}
          \EndSmallIf \label{line:update-com-end}
        }\EndPart
        \Statex 
        \Part{$\lit{abv-broadcast}(\MSG,\ms{val},\ms{cert},i,\ms{bin\_vals})$}{ \label{line:abvb-start}
          \State $\lit{broadcast}(\BVALECHO,\langle \ms{val},\ms{cert},i\rangle)$\Comment{broadcast message}\label{line:bvechobroad1}
          \SmallIf{ $r=3$ \textbf{or} $(r=2$ \textbf{and} $\ms{val}=1)$}{}
          \State discard all messages received without a valid certificate
          \EndSmallIf
          % \State
          \Upon{receipt of $(\BVALECHO,\langle v,\cdot,j\rangle )$}
          \SmallIf{$(\BVALECHO,\langle v,\cdot,\cdot\rangle)$ received from $\floor{\frac{n-q-t}{2}}-d_r+1$
            \\ distinct processes \textbf{and} $(\BVALECHO,\langle v,\cdot, i\rangle)$ not yet broadcast}{}
          \State Let $\ms{cert}$ be any valid certificate $cert$ received in these messages% \Comment{pick one valid certificate}
          \State $\lit{broadcast}(\BVALECHO,\langle v,\ms{cert},i\rangle)$
          \Comment{see Lemma~\ref{lem:aabv-ter}}\label{line:bvechobroad2}
          \EndSmallIf
          \SmallIf{$(\BVALECHO,\langle v,\cdot,\cdot\rangle)$ received from $h(d_r)$ distinct processes \textbf{and}\\ $(\BVALREADY,\langle v,\cdot, \cdot\rangle)$ not yet broadcast}{}
          \State Let $\ms{cert}$ be any valid certificate $cert$ received in these messages

          \State Construct $\ms{bv\_cert}$ a certificate with $h(d_r)$ signed $\BVALECHO$

          % \Comment{construct another certificate with \BVALECHO messages}
          \label{line:bvreadycons}
          \State $\ms{bin\_vals}\gets\ms{bin\_vals}.\lit{add}(\BVALREADY,\langle v,\ms{cert},j,\ms{bv\_cert}\rangle)$\label{line:bvdel1}

          % \Comment{ABV-deliver value}
          \State $\lit{broadcast}(\BVALREADY,\langle v,\ms{cert},j,\ms{bv\_cert}\rangle)$
          % \Comment{broadcast val and certs}
          % \ARP{I think this can be optimized perhaps. Think}\ARP{mm better not on a second thought}
          % \State \ARP{IMPORTANT! same as ARB-broadcast, here polygraph has $t_0+1$ but we have $h(d_r)$, think if this is correct or if we need to put something else here in  our case}
          \EndSmallIf
          \SmallIf{$(\BVALREADY,\langle v,cert,j,\ms{bv\_cert}\rangle)$ received from $1$ process}{}\label{line:bvreadyrec}
          % \State Let $\ms{cert}$ be any valid certificate $cert$ received in these messages
          \State $\ms{bin\_vals}\gets\ms{bin\_vals}.\lit{add}(\BVALREADY,\langle v,\ms{cert},j,\ms{bv\_cert}\rangle)$\label{line:bvdel2}

          % \Comment{ABV-deliver value}
          \SmallIf{$(\BVALREADY,\langle v,cert,j,\ms{bv\_cert}\rangle)$ not yet broadcast}{}
          \State $\lit{broadcast}(\BVALREADY,\langle \ms{val},\ms{cert},i,\ms{bv\_cert}\rangle)$
          % \Comment{broadcast both certificates along with delivered value}
          \EndSmallIf
          \EndSmallIf
          \EndUpon  \label{line:abvb-end}
        }\EndPart
        \Statex 
          \Part{$\lit{comp-vals}(\ms{msgs},\ms{b\_set},\ms{aux\_set})$}{ \label{line:comp-val-start}
            \Comment{check for termination of phase $2$}
            \State \textbf{If }$\exists S\subseteq \ms{msgs}$ where the following conditions hold:
            \State $\>\>$ $(i)\;|S|$ contains $h(d_r)$ distinct $\ECHO[r]$ messages        
            \State $\>\>$ $(ii)\;\ms{aux\_set}$ is equal to the set of values in $S$    \Comment{$h(d_r)$ with same est}
            \State $\>$\textbf{then return}$(\ms{aux\_set})$
            \State \textbf{Else If }$\exists S\subseteq \ms{msgs}$ where the following conditions hold:

            % \Comment{otherwise check if there are $h(d_r)$ with different estimates}
            \State $\>\>$ $(i)\;|S|$ contains $h(d_r)$ distinct $\ECHO[r]$ messages
            \State $\>\>$ $(ii)\;$Every value in $S$ is in $\ms{b\_set}$    \Comment{$h(d_r)$ messages with different est}
            \State $\>$\textbf{then return}$(V=$ the set of values in $S)$
            \State \textbf{Else return}$(\emptyset)$ \Comment{else not ready to terminate} \label{line:comp-val-end}
            
          }\EndPart
          \Statex 
          \Part{$\lit{compute-cert}(\ms{vals},\ms{est},r,\ms{bin\_vals},\ms{msgs})$}{\label{lin:com-cer} \Comment{compute and send cert}

            % \Comment{compute certificate justifying estimate and broadcast it if it is decided}
            \SmallIf{$\ms{est}=(r\,mod\,2)$}{} 
            \SmallIf{$r>1$}{}
            \State $\ms{to\_return}\gets(\ms{cert}$ : $(\EST[r],\langle v, \ms{cert},\cdot\rangle)\in\ms{bin\_vals})$
            \EndSmallIf
            \SmallElse{$\ms{to\_return}\gets(\emptyset)$}
            \EndSmallElse
            \EndSmallIf
            \SmallElse{ $\ms{to\_return}\gets(h(d_r)$ signed msgs containing only $\ms{est})$}
            \EndSmallElse
            \SmallIf{$\ms{vals}=\{(r\,mod\,2)\}\,\wedge\,$no previous decision by $p_i$}{}
            \State $cert[r]\gets h(d_r)$ signed messages containing only $r\,mod\,2$
            \State $\lit{broadcast}(\ms{est},r,i,\ms{cert}[r])$\Comment{broadcast decision}
              \EndSmallIf
              \State $\textbf{return}(\ms{to\_return})$ \label{lin:com-cer-end}
            }\EndPart 
    \algstore{alg:helper}
    \end{algorithmic}
  }
\end{algorithm}

\mypar{Detecting and removing deceitful processes} Upon receiving a signed message, non-faulty processes check if the
received message conflicts with some previously delivered message in
storage in $\ms{sig\_msgs}$ by calling $\lit{check-conflicts}$
(line~\ref{lin:smcc}). This function returns $\ms{pofs}=\emptyset$ if
there are no conflicting messages, or a list $\ms{pofs}$ of PoFs otherwise. Then, at line~\ref{lin:smuc},
non-faulty processes call $\lit{update-committee}$ (depicted at lines~\ref{line:update-com-start}--\ref{line:update-com-end} of
Algorithm~\ref{alg:helper}) to remove the $|\ms{pofs}|$ detected deceitful
processes at runtime. In the call to
$\lit{update-committee}$, process $p_i$ removes all processes that are
proven deceitful via new PoFs, and updates the committee $N$, its size
$n$, and the voting threshold $h(d_r)$. After that, $p_i$ rechecks all
delivered messages in that phase in case it can now terminate the
phase with the new threshold $h(d_r)$ (and after filtering out
messages delivered by the $d_r$ removed deceitful processes) by calling $\lit{recheck-certs-termination}()$ in line~\ref{line:recheccerts} of Algorithm~\ref{alg:helper}. Finally,
it resets the timer for the current phase by calling $\lit{reset-current-timer()}$ in line~\ref{line:resetcurt} of Algorithm~\ref{alg:helper}.

\mypar{Termination and agreement of Basilic's AABC} We
show the detailed proofs of agreement and termination in
Lemmas~\ref{lem:aabc-agr} and~\ref{lem:aabc-ter}. The idea is that 
removing deceitful processes has no effect on agreement, while it
facilitates termination, since the threshold $h(d_r)=h_0-d_r$
decreases the initial threshold $h_0$ with the number of removed
deceitful processes. Also, since all non-faulty
processes broadcast their delivered PoFs and thanks to the property of accountability, eventually all non-faulty
processes agree on the same set of removed deceitful processes.

Then, if a
process $p_i$ terminates broadcasting certificate $cert_i$ while
another process $p_j$ already removed newly detected deceitful
processes $new\_d_r$ present in $cert_i$, then
$|cert_i|-new\_d_r\geq h(d_r+new\_d_r)$ by construction. As such, either a non-faulty process terminates and then all
subsequent non-faulty processes can terminate, even after removing more deceitful
processes, or they all eventually reach a scenario where all deceitful
processes are detected $d_r=d$ and removed, after which they all terminate.

% Then, we note
% that removing deceitful processes has no effect on agreement, since the
% threshold $h(d_r)=h_0-d_r$ decreases the initial threshold $h_0$ with the number of
% removed deceitful processes, requiring the same number of remaining
% deceitful processes in order for them to cause a
% disagreement. Finally, for $h_0=2n/3$ then $d+t<n/3$ and
% Polygraph guarantees agreement for $t<n/3$ Byzantine processes,
% \protocol also guarantees agreement.

 % To see that \protocol obtains agreement for the standard case $h_0=\ceil{2n/3}$, we note first that Polygraph
% obtains agreement for $t<n/3$ Byzantine processes, and the only modification of \protocol
% is to remove detected deceitful processes at runtime.

Note that removing processes at runtime can result in rounds whose
coordinator is already removed. For the sake of correctness, we do not
change the coordinator for that round even if it has already been
removed. This guarantees that all non-faulty processes eventually reach a
round in which they all agree on the same coordinator, which is a
non-faulty process. If this round is the first after GST and after all
deceitful processes have been removed from the committee, then non-faulty
processes will reach agreement.

\mypar{Accountable Binary Value Broadcast} The
ABV-broadcast that we present in Algorithm~\ref{alg:helper} is inspired from the E
protocol presented by Malkhi et al.~\cite{malkhi1997secure} and the binary value
broadcast presented in Polygraph~\cite{civit2019techrep,civit2021}. If non-faulty processes add a value $v$ to $\ms{bin\_vals}$ (lines~\ref{line:bvdel1} and~\ref{line:bvdel2}) as a result of the ABV-broadcast protocol, we say that they \textit{ABV-deliver} $v$. Processes exchange two types of messages during ABV-broadcast: \BVALECHO and \BVALREADY messages. \BVALECHO messages are
signed and must come with a valid certificate $\ms{cert}_i$ justifying
the value, as shown in lines~\ref{line:bvechobroad1} and~\ref{line:bvechobroad2}. \BVALREADY messages carry the same information as \BVALECHO
messages plus an additional certificate $\ms{bv\_cert}$ containing
$h(d_r)$ \BVALECHO messages justifying the \BVALREADY message, constructed in line~\ref{line:bvreadycons}. This way, as
soon as a process receives a \BVALREADY message with a value (line~\ref{line:bvreadyrec}), it already
obtains $h(d_r)$ \BVALECHO messages too, meaning it can ABV-deliver that
value adding it to $\ms{bin\_vals}$ (lines~\ref{line:bvdel1} and~\ref{line:bvdel2}). Non-faulty processes broadcast signed \BVALECHO messages for their estimate (line~\ref{line:bvechobroad1}) and for all values for which they receive at least $\floor{\frac{n-q-t}{2}}-d_r+1$ signed \BVALECHO messages from distinct processes. 

We prove in Lemma~\ref{lem:aabv-ter} that waiting for this many \BVALECHO messages for a value $v$ guarantees that all non-faulty processes ABV-deliver $v$. We defer the rest of the proofs to Appendix~\ref{sec:proofs}. In particular, we show that our
ABV-broadcast satisfies the following properties:
  (i) ABV-Termination, in that every non-faulty process eventually adds at least one value to $\ms{bin\_vals}$; (ii) ABV-Uniformity, in that non-faulty processes eventually add the same values to $\ms{bin\_vals}$; (iii) ABV-Obligation, in that if $\floor{\frac{n-q-t}{2}}-d_r+1$ non-faulty processes ABV-broadcast a value $v$, then all non-faulty processes ABV-deliver $v$; (iv) ABV-Justification, in that if a non-faulty process ABV-delivers a value $v$ then $v$ was ABV-broadcast by a non-faulty process; and (v) ABV-Accountability, in that every ABV-delivered value contains a valid certificate from the previous round.
% \begin{itemize} [leftmargin=* ,wide=\parindent]
%   \item {\bf ABV-Termination}. Every non-faulty process eventually adds at least one value to $\ms{bin\_vals}$.
%   % $(i)$
% \item {\bf ABV-Uniformity.} Non-faulty processes eventually add the same values to $\ms{bin\_vals}$.
%   % $(ii)$
% \item {\bf ABV-Obligation.} If $\floor{\frac{n-q-t}{2}}-d_r+1$ non-faulty processes ABV-broadcast a value $v$, then all non-faulty processes ABV-deliver $v$.
%   % $(iii)$
% \item {\bf ABV-Justification.} If a non-faulty process ABV-delivers a value $v$ then $v$ was ABV-broadcast by a non-faulty process.
%   % $(iv)$
% \item {\bf ABV-Accountability.} Every ABV-delivered value contains a valid certificate from the previous round.
% \end{itemize}

We show in Lemma~\ref{lem:aabc-aac} that \protocol's AABC satisfies
AABC-\myproperty, but we defer the rest of the proofs of
\mypropertyadj binary consensus to the Appendix~\ref{sec:proofs}.
% \begin{lemma}
%   \label{lem:aabc-aux}
%   Assume that each non-faulty process begins round $r$ with the
% estimate $v$. Then every non-faulty process decides $v$ either at the
% end of round $r$ or round $r+1$.
% \end{lemma}
% \begin{proof}
% By Lemma~\ref{lem:aabv-obl}, $v$ is eventually delivered to every
% non-faulty process. By Lemma~\ref{lem:aabv-jus}, $v$ is the only value
% delivered to each non-faulty process. As such, $v$ is the only value in
% $\ms{bin\_vals}$ and the only value echoed by non-faulty processes,
% since deceitful processes that prevent termination are removed from the
% committee when the timer expires (and the threshold is updated). This
% means that $v$ will be the only value in $\ms{vals}$. If $v=r\mod
% 2$ then all non-faulty processes decide $v$. Otherwise, by the same
% argument every non-faulty process decides $v$ in round $r+1$.
% \end{proof}

\begin{lemma}[AABC-\Myproperty]
  \label{lem:aabc-aac}
  \protocol's AABC satisfies \myproperty.
\end{lemma}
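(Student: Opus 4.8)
The plan is to obtain \myproperty from the runtime committee‑reduction mechanism of AABC: a deceitful process can delay a non‑faulty process only by making it act on conflicting messages, but every such conflicting pair leaves a proof‑of‑fraud that drives all non‑faulty processes to permanently remove the culprit and to lower the voting threshold by one, so no deceitful process can sustain its interference.

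First I would pin down what ``deceitful behaviour'' is allowed to do. By the fault model, a deceitful process deviates from the code only by sending conflicting messages, and only for rounds and phases it has not yet locally terminated; otherwise it follows the protocol exactly (it does not crash, stall, or equivocate in any other way). Hence a deceitful process $p$ can influence the local state of a non‑faulty process only through signed messages that the latter delivers, possibly relayed by others, which is sound because authentication is transferable. Two sub‑cases arise: either no non‑faulty process ever comes to hold a pair of conflicting messages signed by $p$ --- then the set of $p$‑signed messages seen by the non‑faulty processes is consistent with some correct behaviour of $p$, so by an indistinguishability argument $p$ changes neither whether nor when non‑faulty processes decide --- or some non‑faulty process does, which is the case handled next.

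Second, I would prove the detection claim: in any execution in which some non‑faulty process has not yet decided, every conflicting pair emitted by a deceitful process that ever reaches the non‑faulty processes is eventually cross‑checked by a non‑faulty process. The engine is Rule~\ref{item:timer}: a non‑faulty process that cannot terminate a phase before its timeout rebroadcasts every signed message it has delivered for that phase (and all later phases and rounds) and resets the timer; after GST these rebroadcasts are redelivered within $\Delta$, and transferable authentication lets the recipient treat $p$'s signatures as authentic no matter who forwarded them. Therefore some non‑faulty process eventually holds both conflicting messages of $p$, $\lit{check-conflicts}$ returns a non‑empty set of PoFs, $\lit{update-committee}$ broadcasts it, and --- since $\ms{local\_pofs}$ is monotone and, by accountability, the PoF is eventually delivered to and accepted by every non‑faulty process --- all non‑faulty processes remove $p$ and never re‑add it. I expect this to be the main obstacle: arguing rigorously that, although non‑faulty processes may be spread over different rounds and phases, the repeated timeout‑triggered exchanges of Rule~\ref{item:timer} in a non‑terminating run nonetheless bring any two conflicting $p$‑signed messages held by different non‑faulty processes into the hands of a single non‑faulty process.

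Finally I would assemble the two parts. Each removal replaces $h(d_r)$ by $h_0-d_r$ while shrinking $N$ by one, so the invariant $h(d_r)\le |N|-q-t$ is preserved from the starting bound $h_0\le n-q-t$ (Corollary~\ref{cor:impteragr}), and at most $d$ deceitful processes can ever be removed. Hence there is a time $T^\star$ past GST after which no further removal occurs, i.e.\ after $T^\star$ no deceitful process can have a conflicting pair reach the non‑faulty processes; by the indistinguishability argument above, from $T^\star$ on every process the non‑faulty processes act on is non‑faulty, one of the $\le q$ benign or $\le t$ Byzantine processes, or a deceitful process indistinguishable from a correct one. Thus at least $|N|-q-t\ge h(d_r)$ processes respond correctly, and the rotating coordinator --- kept fixed for its round even when it has been removed --- eventually lands on a non‑faulty process after $T^\star$; the standard AABC termination argument (Lemma~\ref{lem:aabc-ter}, using Lemma~\ref{lem:aabv-ter} for ABV‑delivery) then yields that every non‑faulty process decides. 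Consequently deceitful behaviour does not prevent liveness, which is exactly \myproperty.
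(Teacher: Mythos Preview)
Your approach mirrors the paper's: detect conflicting messages via the timeout‑triggered rebroadcast of Rule~\ref{item:timer}, remove the culprits through $\lit{update\text{-}committee}$, and then argue that once enough deceitful processes are gone and a non‑faulty coordinator is reached after GST, every non‑faulty process decides. Your dichotomy (indistinguishable‑from‑correct versus eventually detected) is your own framing, but it lands on the same mechanism the paper uses.

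There is, however, a genuine circularity in the final step. You close the argument by invoking Lemma~\ref{lem:aabc-ter} (AABC‑Termination), but in this paper Lemma~\ref{lem:aabc-ter} is proved \emph{from} Lemma~\ref{lem:aabc-aac}, not the other way around; citing it here begs the question. The paper avoids this by arguing termination directly inside the proof of Lemma~\ref{lem:aabc-aac}: it first shows no non‑faulty process gets stuck in phase~1 (via Lemma~\ref{lem:aabv-ter}) or phase~2 (via the timer‑triggered exchange and removal, exactly your detection claim), so rounds progress indefinitely; then it takes the first round $r$ after GST in which the coordinator is non‑faulty and all deceitful processes have been removed, observes that every non‑faulty process adopts the coordinator's value as its unique \ECHO value, and applies Lemma~\ref{lem:aabc-aux} to conclude a decision in round $r$ or $r+1$. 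You should replace the appeal to Lemma~\ref{lem:aabc-ter} by this direct use of Lemma~\ref{lem:aabc-aux}.

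A smaller omission: the paper also checks that every non‑faulty process always holds a valid certificate to attach to its ABV‑broadcast at the start of the next round (otherwise its \BVALECHO would be discarded under the rules and phase~1 could stall). Your argument that ``at least $|N|-q-t\ge h(d_r)$ processes respond correctly'' implicitly assumes this, but it needs to be stated.
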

\begin{proof}
  We show that if a faulty process $p_i$ sends two conflicting messages to two
subsets $A,\,B\subseteq N$, each containing at least one non-faulty
process, then eventually all non-faulty processes terminate, or instead they
receive a PoF for $p_i$ and remove it from the committee, after which
they all terminate.

First, we observe that no process gets stuck in some round.
% Assume,
% for the sake of contradiction, that $r$ is the first round where some
% process $p_i$ gets stuck. 
Process $p_i$ cannot get stuck in phase $1$ since,
by ABV-Termination (Lemma~\ref{lem:aabv-ter}), every non-faulty process eventually ABV-delivers a
value.

A process also does not get stuck waiting on phase $2$. First, notice
that every value that is included in an \ECHO message from a non-faulty
process is eventually delivered to $\ms{bin\_vals}$. Then, note that all
non-faulty processes eventually deliver $h(d_r)$ \ECHO messages, or
instead, when the timer expires, processes will exchange their \ECHO
messages and be able to construct PoFs and remove $d_r$ deceitful
processes that are preventing termination. In the latter case, after
removing all deceitful processes from the committee and updating the
threshold, they will deliver enough \ECHO messages to terminate
phase $2$, since $h(d_r)<n-q-t-d-d_r$.
% $n-q-t\geq h(d_r)$

Then, we show that all non-faulty processes always hold a valid certificate
to broadcast a proper message, which could otherwise prevent
termination of a phase during the ABV-broadcast in phase $1$. For an
estimate whose parity is the same as that of the finished round $r-1$,
process $p_i$ must have received a valid certificate for the round
(otherwise it would not have terminated such round). If the parity
matches, then it can always construct a valid certificate in round
$r-1$ from the delivered estimates.

As a result, all processes always progress infinitely in every round.
Consider the first round $r$ after GST where (i) the coordinator is
non-faulty and (ii) all deceitful processes have been detected and removed by all
non-faulty processes. In this case, every non-faulty
process will prioritize the coordinator’s value, adopting it
as their \ECHO message adding only that value. Hence, every
process adopts the same value, and hence decides either in round $r$ or
round $r+1$ (by Lemma~\ref{lem:aabc-aux}).

\end{proof}

\subsection{\Mypropertyadj Reliable Broadcast}
\label{sec:aarb}
Algorithm~\ref{alg:arb} shows \protocol's \textit{\mypropertyadj reliable broadcast} (AARB). The protocol is analogous to the secure broadcast presented in previous work~\cite{malkhi1997secure}, with the difference that we also introduce a timer that non-faulty processes use to periodically broadcast their set of delivered \ECHO messages, in order to detect deceitful processes. We refer of the process that starts the AARB protocol as the \emph{source}. The protocol starts when the source broadcast an \ECHO message with its proposed value $v$ (line~\ref{line:aarbecho1}). Upon delivering that message, all non-faulty processes also broadcast a signed \ECHO message with $v$ (line~\ref{line:aarbecho2}). Then, once a process $p_i$ delivers $h(d_r)$ distinct signed \ECHO messages for the same value $v$, $p_i$ first broadcasts a \READY message (line~\ref{line:aarb-broadcastReady1}) with a certificate containing the $h(d_r)$ \ECHO messages justifying $v$ (constructed in line~\ref{line:aarbconstructcert1}), and then AARB-delivers the value (line~\ref{line:aarb-deliver1}). The same occurs if instead a process delivers just one valid \READY message containing a valid certificate justifying it in lines~\ref{line:aarb-read1}-\ref{line:aarb-read2}.

As it occurs with \protocol's AABC protocol presented in Algorithms~\ref{alg:prot} and~\ref{alg:helper}, upon cross-checking newly received signed messages with previously delivered ones (lines~\ref{line:crosscheck1} and~\ref{line:crosscheck2}), non-faulty processes can detect deceitful faults and update the committee (lines~\ref{line:aarbupdcom1} and~\ref{line:aarbupdcom2}), removing them at runtime, by calling $\lit{update-committee}$. This can also occur when receiving a list of PoFs (line~\ref{line:aarbpofs}). Note that this is the same call to the same function as in the AABC protocol shown in Algorithm~\ref{alg:prot}, because non-faulty processes update the committee across the entire \protocol protocol, and not just for that particular instance of AARB or AABC where the deceitful process was detected.
We show in Appendix~\ref{sec:proofs} that \protocol's AARB protocol satisfies the following properties of \mypropertyadj reliable broadcast:
\begin{itemize}[leftmargin=* ,wide=\parindent]
  % $(i)$
  \item {\bf AARB-Unicity.} Non-faulty processes AARB-deliver at most one value.
  % $(ii)$
  \item {\bf AARB-Validity.} Non-faulty processes AARB-deliver a value if it was previously AARB-broadcast by the source.
  % $(iii)$
  \item {\bf AARB-Send.} If the source is non-faulty and AARB-broadcasts $v$, then non-faulty processes AARB-deliver $v$.
  % $(iv)$
  \item {\bf AARB-Receive.} If a non-faulty process AARB-delivers $v$, then all non-faulty processes AARB-deliver $v$.
  % $(v)$
  \item {\bf AARB-Accountability.} If two non-faulty processes AARB-deliver distinct values, then all non-faulty processes receive PoFs of the deceitful behavior of at least $2h(d_r)-n$ processes including the source.
    % $(vi)$
  \item{\bf AARB-\expandafter\MakeUppercase\myproperty.} Deceitful behavior does not prevent liveness.
  \end{itemize}

\begin{algorithm}[th]
  \caption{\protocol's AARB with initial threshold $h_0$.}%\ARP{mention in text that after restart processes still allow certificates with removed deceitful processes.}\vincent{done}} 
  \label{alg:arb}
  \smallsize{
    \begin{algorithmic}[1]
    \algrestore{alg:helper}
      % \small
      % \Statex \rule{0.45\textwidth}{0.4pt}
      \Part{\smallsize $\lit{AARB-broadcast_{h_0}(}v_i\lit{)}$}{\Comment{executed by the source}
        \State $\lit{broadcast(}\INIT, v_i)$ \Comment{broadcast to all}\label{line:aarbecho1}
        \Upon{\textbf{receiving $(\INIT, v_i)$ from $p_j$ \textbf{and}} \textbf{not having sent $\ECHO$:}}
        \State $\lit{boadcast(}\ECHO,v,j)$\Comment{echo value to all}\label{line:aarbecho2}
        \EndUpon
        \Upon{\textbf{receiving $h(d_r)$ $(\ECHO, v, j)$ \textbf{and}}\textbf{ not having sent a \READY:}}
        \State Construct $cert_i$ containing at least $h(d_r)$ signed msgs $(\ECHO, v, j)$% \Comment{prepare certificate with $h(d_r)$ \ECHO messages}
        \label{line:aarbconstructcert1}
        \State $\lit{broadcast(}\READY,v,cert_i,j)$\Comment{broadcast certificate}\label{line:aarb-broadcastReady1}
        \State $\lit{AARB-deliver(}v,j\lit{)}$ \Comment{AARB-deliver value}\label{line:aarb-deliver1}
        \EndUpon
        \Upon{\textbf{receiving} $(\READY,v,cert,j)$, \textbf{and} \textbf{not} \textbf{having sent a} $\READY$:}\label{line:aarb-read1}
        \SmallIf{$\lit{verify}(cert)=False$}{\textbf{continue}}\EndSmallIf
        \State Set $cert_i$ to be one of the valid certs received $(\READY,v,cert,j)$
        \State $\lit{broadcast(}\READY,v,cert_i,j)$\Comment{broadcast certificate}
        \State $\lit{AARB-deliver(}v,j\lit{)}$\Comment{AARB-deliver value}\label{line:aarb-read2}
        \EndUpon

        \Part{\textbf{Upon} receiving a signed message $\ms{s\_msg}$}{
          \State $\ms{pofs}\gets \lit{check-conflicts}(\ms{\{s\_msg\}},\,\ms{sig\_msgs})$\Comment{returns $\emptyset$ or PoFs} 

          % \Comment{returns $\emptyset$ if no conflict, PoFs otherwise}
          \label{line:crosscheck1}
          \State $\lit{update-committee}(\ms{pofs})$\Comment{remove fraudsters}\label{line:aarbupdcom1}
          }\EndPart
          \Part{Upon receiving a certificate $\ms{cert\_msg}$}{
            \State $\ms{pofs}\gets \lit{check-conflicts}(\ms{cert\_msg},\,\ms{sig\_msgs})$\Comment{returns $\emptyset$ or PoFs} \label{line:crosscheck2}
            \State $\lit{update-committee}(\ms{pofs})$\Comment{remove fraudsters}\label{line:aarbupdcom2}
          }\EndPart
          \Part{Upon receiving a list of PoFs $\ms{pofs\_msg}$}{\label{line:aarbpofs}
            \SmallIf{$\lit{verify-pofs}(\ms{pofs\_msg})$}{}\Comment{if proofs are valid then}
            \State$\lit{update-committee}(\ms{pofs\_msg})$
            \Comment{exclude from committee}
            \EndSmallIf
          }\EndPart
      }\EndPart
      \Part{Rules}{
          \begin{enumerate}[leftmargin=* ,wide=\parindent]
          \item Processes broadcast their current delivered signed
$\INIT$ and $\ECHO$ messages once a timer $timer$, initially set to
$\Delta$, reaches $0$, and resets the timer to $\Delta$.
          \end{enumerate}
         }\EndPart
    % \algstore{alg:arb}
    \end{algorithmic}
  }
\end{algorithm}

\subsection{\protocol's fault tolerance in the BDB model}
We show in Figure~\ref{fig:fig31} the combinations of Byzantine, deceitful
and benign processes that \protocol tolerates, depending on the
initial threshold $h_0$. The solid lines represent the variation in tolerance to benign and
deceitful processes as the number of Byzantine processes varies for a
particular threshold. For example, for $h_0=\frac{2n}{3}$,
if $t=0$ then $d<\frac{n}{3}$ and $q<\frac{n}{3}$. As $t$
increases, for example to $t=\ceil{\frac{n}{6}}-1$, then
$d<\frac{n}{6}$ and $q<\frac{n}{6}$.
\begin{figure}[htp] \center
  \includegraphics[width=.7\textwidth]{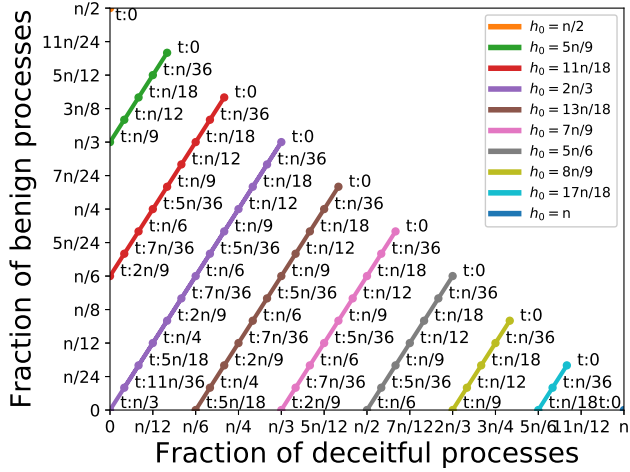}
  \caption{Combinations of benign, deceitful and Byzantine processes that \protocol tolerates, for an initial threshold $h_0$.}
  \label{fig:fig31}
\end{figure}

We compare our \protocol's fault tolerance with that of previous works
in Figure~\ref{fig:fig32}. In particular, we represent multiple values
of the initial threshold
$h_0\in\{5n/9,\,2n/3,\,3n/4,\,5n/6\}$ for Basilic. First, we show that classical
Byzantine fault-tolerant (BFT) protocols tolerate only the case $t<n/3$ with a blue triangle dot (\markerthree) in the
figure. This is the case of most partially synchronous BFT consensus protocols~\cite{crain2018dbft,civit2021, KADC07, YMR19, ranchal2020blockchain}. Notice that Zero-loss Blockchain~\cite{ranchal2020blockchain} (ZLB) also tolerates instead $d<5n/9$ and $3q+d<n$ faults, where $d$ and $q$ is the number of deceitful and benign faults, but that ZLB does not solve consensus for these bounds, and instead it recovers from disagreements. Second, we
represent Flexible BFT~\cite{MNR19} in their greatest fault tolerance
setting in partial synchrony. As we can see, such setting overlaps
with \protocol's initial threshold of $h_0=2n/3$. However, the difference lies in that
while \protocol tolerates all the cases in the solid line $h_0=2n/3$,
Flexible BFT only tolerates a particular dot of the line, set at the discretion of
each client. That is, Flexible BFT's clients must decide, for example,
whether they tolerate either $\ceil{2n/3}-1$ total faults, being none
of them Byzantine, or instead tolerate $\ceil{n/3}-1$ Byzantine faults, not tolerating any
additional fault. \protocol can however tolerate any range satisfying
both $h_0>\frac{n+d+t}{2}$ for safety and $h_0\leq n-q-t$ for
liveness, which allows our clients and servers to tolerate significantly more combinations of faults for one particular threshold $h_0\in(n/2,n]$. For this reason, we
represent the line of Flexible BFT as a dashed line, whereas
\protocol's lines are solid. For each initial voting threshold $h_0$, the maximum
number of Byzantine processes Basilic tolerates is $t<\min(2h_0-n,1-h_0)$,
which is obtained by setting $q=d=0$ and resolving both bounds for
safety and liveness.
% \begin{figure}[h]
  
%   \center
%   \begin{subfigure}[t]{0.495\columnwidth}
%     \includegraphics[width=.93\textwidth]{}
%     % \includegraphics[width=\textwidth]{./plots_python/plot_detection_pdxaxis_wo_bigones.eps}
%     % \includegraphics[width=.6\textwidth]{}
%   \end{subfigure}
%       %\qquad
%   \begin{subfigure}[t]{0.495\columnwidth}
%     \includegraphics[width=.93\textwidth]{}
%     % \includegraphics[width=\textwidth]{./plots_python/plot_recovery_time_pdxaxis_wo_bigones.eps}
%     % \caption{Fraction of total processes (y-axis), compared with fraction of Byzantine processes (x-axis), for a particular initial threshold $h_0$, compared with state of the art.}
%   \end{subfigure}
%   \caption{Left: combinations of benign, deceitful and Byzantine processes that \protocol tolerates, depending on the initial threshold $h_0$. Right: fraction of total processes, compared with fraction of Byzantine processes, for a particular initial threshold $h_0$ of the general \protocol protocol (BSL), compared with other works.}
%   \label{fig:fig3}
% \end{figure}

      %\qquad
\begin{figure}[htp]\center
  \includegraphics[width=.7\textwidth]{}
  \caption{ Fraction of total processes, compared with fraction of Byzantine processes, for a particular initial threshold $h_0$ of the general \protocol protocol, compared with other works.}
  \label{fig:fig32}
\end{figure}
\subsection{\protocol's correctness}
We show in Lemma~\ref{lem:aac} that \protocol satisfies
\myproperty. We defer to Appendix~\ref{sec:proofs} the rest of the
proofs that show that the \protocol class of protocols solves the \problem
problem for the resilient optimal bounds of the impossibility results
shown in Section~\ref{sec:imp}. 
\begin{lemma}[\Myproperty]
  \label{lem:aac}
  \protocol satisfies \myproperty.
\end{lemma}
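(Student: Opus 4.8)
The plan is to lift the \myproperty guarantee from the two building blocks — AARB and AABC — to the general \protocol protocol of Algorithm~\ref{alg:gen}, using the fact that \myproperty is really a statement that deceitful behavior cannot block progress, and that detected deceitful processes are removed committee-wide (not per-instance). First I would recall that \protocol invokes $n$ AARB instances followed by $n$ AABC instances, and that $\lit{update-committee}$ is the \emph{same} function across all instances, so any PoF constructed anywhere is broadcast (line in $\lit{update-committee}$) and eventually processed by every non-faulty process in every instance. Hence the number $d_r$ of detected deceitful processes is eventually monotone and common to all non-faulty processes, and after GST, once all $d$ deceitful processes have equivocated, either they were never decisive (phases terminate directly) or every non-faulty process holds the corresponding PoFs and removes them, lowering the threshold to $h(d_r)$ with $d_r=d$.

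Next I would walk the control flow of $\lit{\protocol-gen-propose}$ and argue no non-faulty process blocks at any \textbf{wait until} statement. (i) Each non-faulty source's AARB instance eventually AARB-delivers to all non-faulty processes by AARB-Send, and AARB-\myproperty guarantees deceitful sources/participants cannot stall the instance — so at least the $n-t-q$ instances with non-faulty source all deliver, which is $\geq h(d_r)$ once deceitful processes are removed (using $h(d_r)\le n-q-t-d_r$ and $d_r\le d$). Thus the loop guard $|\ms{bin\_decisions}[k]=1|\ge h(d_r)$ at line~\ref{line:gendec1} is eventually met. (ii) Each invoked $\lit{BIN-CONSENSUS}[k]$ terminates: AABC-Termination together with AABC-\myproperty (Lemma~\ref{lem:aabc-aac}) ensures every non-faulty process decides in every AABC instance, despite deceitful processes, because detected deceitful processes are removed and the remaining non-faulty majority suffices for the updated threshold. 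So the wait at line~\ref{line:genaabcter} (all $\ms{bin\_decisions}[k]\neq\bot$) completes. (iii) The final wait for an \EST message with the minimal index $j$ such that $\ms{bin\_decisions}[j]=1$ completes because a binary consensus outputs $1$ at index $j$ only if some non-faulty process AARB-delivered (by AABC-Validity / justification of the $1$-input), and AARB-Receive then propagates that value to all non-faulty processes.

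Finally I would assemble these into the conclusion: every non-faulty process passes every blocking point and reaches \textbf{decide}, so deceitful behavior does not prevent liveness. The main obstacle I anticipate is part (iii): making precise the claim that $\ms{bin\_decisions}[j]=1$ implies some non-faulty process AARB-delivered index $j$'s value, hence (by AARB-Receive) all non-faulty processes eventually do — this requires that the binary consensus at index $j$ can only output $1$ if a non-faulty process proposed $1$ there, which in turn follows from AABC-Validity plus the fact that a non-faulty process proposes $1$ to $\lit{BIN-CONSENSUS}[k]$ only after AARB-delivering at index $k$ (line~\ref{line:genaabc}); care is needed because deceitful processes removed mid-execution could have been the ones carrying the certificate, but AARB-Accountability and the committee-wide PoF propagation guarantee the certificate's validity is re-checkable after filtering. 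The rest is routine once the bound bookkeeping $h(d_r)\le n-q-t-d_r$ is invoked consistently.
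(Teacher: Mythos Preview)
Your plan is essentially the paper's own proof: lift \myproperty from AARB and AABC via the shared $\lit{update\text{-}committee}$, then walk the three blocking points of Algorithm~\ref{alg:gen} using AARB-Send for the $h(d_r)$-many decisions of $1$, AABC-Termination for the remaining instances, and the AABC-(Strong-)Validity $+$ AARB-Receive contrapositive for the final wait---exactly the argument in the paper's proof of Lemma~\ref{lem:aac}. One small slip: the number of instances with a \emph{non-faulty} source is $n-t-q-d$, not $n-t-q$ (deceitful processes are faulty), and in step~(i) you should make explicit the use of AABC-Strong-Validity (if all non-faulty propose $1$ then none can decide $0$) to go from ``AARB-delivered'' to ``$\ms{bin\_decisions}[k]=1$''; the paper does this step the same way, and your inequality $h(d_r)\le n-q-t-d_r$ with $d_r=d$ still closes once the count is corrected.
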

\begin{proof}
  We show that if a faulty process $p_i$ sends two conflicting
messages to two subsets $A,\,B\subseteq N$,
each containing at least one non-faulty process, then eventually all
non-faulty processes terminate, or instead they receive a PoF for $p_i$
and remove it from the committee, after which they all terminate.

  First, analogously to Lemma~\ref{lem:aabc-aac}, w.l.o.g. we treat
only the case $d_r=0$, since all conflicting messages that can be sent
in \protocol are messages of \protocol's AARB or AABC, that already
satisfy \myproperty (see Lemmas~\ref{lem:aabc-aac}
and~\ref{lem:aarb-aac}). This means that if $d_r>0$, then non-faulty
processes eventually update the committee and threshold, after which they recheck
if they hold enough signed messages to terminate. Next, we prove termination. By the
AARB-Send property (Lemma~\ref{lem:aarb-sen}), all non-faulty processes will eventually deliver
the proposals from non-faulty processes. Eventually all non-faulty
processes propose $1$ in all binary consensus whose index corresponds
to a non-faulty proposer, and by AABC-Validity decide $1$. Since
eventually $h(d_r)\leq n-q-d-t$ if enough $d_r$ prevent termination
and are thus detected and removed, we can conclude that at least
$h(d_r)$ binary consensus instances will terminate deciding $1$.

Once non-faulty processes decide $1$ on at least $h(d_r)$ proposals, they propose
$0$ to the rest, and by AABC-Termination (Lemma~\ref{lem:aabc-ter}) all remaining
binary consensus instances will terminate. Next, we show that for
every binary consensus upon which we decided $1$, at least one non-faulty
process AARB-delivered its associated proposal. For the sake of
contradiction, if no non-faulty process had AARB-delivered its associated
proposal, then all non-faulty processes would have proposed $0$, meaning by
AABC-Validity that the final decision of the binary consensus would have
been $0$, not $1$. As a result, by the AARB-Receive property (Lemma~\ref{lem:aarb-rec}),
eventually all non-faulty processes will deliver the proposal for all
binary consensus that they decided $1$ upon. Finally, processes
decide the value proposed by the proposer with the lower index.
\end{proof}

We summarize all proofs in the result shown in
Theorem~\ref{thm:consf1} to show that \protocol protocol with initial
threshold $h_0$ solves consensus if $d+t<2h_0-n$ and $q+t\leq
n-h_0$. This result translates in the \protocol class of protocols
solving consensus if $n>3t+d+2q$, as we show in
Corollary~\ref{cor:consf}.

\begin{theorem}[Consensus per threshold]
  \label{thm:consf1}
  The \protocol protocol with initial threshold $h_0$ solves the
\problem problem if $d+t<2h_0-n$ and $q+t\leq n-h_0$.
\end{theorem}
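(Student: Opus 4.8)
The plan is to unpack the statement into the five properties of the \problem problem---Termination, Validity, Agreement, Accountability and \myproperty---and to establish each by composing the properties of the two sub-protocols from which the general \protocol protocol of Algorithm~\ref{alg:gen} is built: the $n$ instances of AARB (Algorithm~\ref{alg:arb}: AARB-Unicity, AARB-Validity, AARB-Send, AARB-Receive, AARB-Accountability, AARB-\myproperty) and the $n$ instances of AABC (Algorithms~\ref{alg:prot}--\ref{alg:helper}: AABC-Validity, AABC-Agreement, AABC-Termination, AABC-Accountability, AABC-\myproperty), whose proofs are deferred to Appendix~\ref{sec:proofs}. The one structural fact that makes this composition robust to the runtime removal of deceitful processes is that $h(d_r)=h_0-d_r$ is chosen exactly so that the two per-threshold preconditions of Corollary~\ref{cor:impteragr} are invariant under exclusion: writing $n'=n-d_r$ for the current committee size, the hypothesis $d+t<2h_0-n$ is equivalent to $(d-d_r)+t<2h(d_r)-n'$, and $q+t\le n-h_0$ is equivalent to $q+t\le n'-h(d_r)$. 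Hence every sub-protocol instance always runs with a threshold that still meets its own safety and liveness requirements, however many deceitful processes have been excluded.

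\emph{Termination and \myproperty.} Both reuse the reasoning of Lemma~\ref{lem:aac} and, at the binary level, Lemma~\ref{lem:aabc-aac}. By AARB-Send every non-faulty process eventually AARB-delivers the proposal of every non-faulty source, hence invokes $\lit{AABC-prop}(1)$ on the AABC instance of each non-faulty index; by AABC-Validity with AABC-Termination (Lemma~\ref{lem:aabc-ter}) every such instance decides $1$. The first invariant above gives, once $d_r=d$, at least $h(d_r)$ non-faulty indices, so the loop of line~\ref{line:gendec1} completes; line~\ref{line:genaabc0} then proposes $0$ to the remaining instances, which all terminate by AABC-Termination. Finally $j=\min\{k:\ms{bin-decisions}[k]=1\}$ is defined and, by AARB-Receive (Lemma~\ref{lem:aarb-rec}), every non-faulty process eventually AARB-delivers the value at index $j$ and decides at line~\ref{line:genuni}. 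The only residual way to stall is inside a phase of some AABC instance, but Rule~\ref{item:timer} (periodic re-broadcast of the signed messages of a stalled phase) together with the cross-checking of lines~\ref{line:dr1}--\ref{line:dr2} turns any deceitful process that keeps sending conflicting messages into a PoF and removes it, after which the phase completes with the lowered threshold---which is precisely what \myproperty asserts.

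\emph{Agreement, Validity, Accountability.} Assume $d+t<2h_0-n$; the invariant then yields $(d-d_r)+t<2h(d_r)-n'$ at all times, so each AABC instance satisfies AABC-Agreement and all non-faulty processes compute the same $\ms{bin-decisions}$ vector, while AARB-Unicity and AARB-Receive make them AARB-deliver the same value at every delivered index; hence they agree on $j$ and on the value decided at line~\ref{line:genuni}. For Validity, by AARB-Validity any decided value was AARB-broadcast at index $j$, and, transposing the multivalued-to-binary reduction argument of DBFT~\cite{crain2018dbft} and Polygraph~\cite{civit2021} to the dynamic threshold $h(d_r)$, one shows that when all non-faulty processes propose the same $v$ the smallest index whose AABC decides $1$ carries $v$, so $v$ is decided and no other value can be. For Accountability, a disagreement between two non-faulty processes is traced back to either an AABC instance delivering disagreeing values---AABC-Accountability then makes all non-faulty processes identify at least $2h(d_r)-n'$ culprits---or two non-faulty processes AARB-delivering distinct values at some index, where AARB-Accountability identifies at least $2h(d_r)-n'$ culprits including the source; since $\lit{update-committee}$ broadcasts every PoF a process learns, all non-faulty processes eventually hold them, which, together with the $d_r$ processes already excluded, meets the ``$2h-n$'' bound read against the operative threshold. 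Corollary~\ref{cor:consf} then re-expresses the combined hypotheses as $n>3t+d+2q$.

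\emph{Main obstacle.} The delicate point is the interaction between decisions and runtime removal: a non-faulty process may terminate a phase---or the whole protocol---broadcasting a certificate of $h(d_r)$ signatures at the same moment another non-faulty process has excluded a strictly larger deceitful set. The proof must therefore carry the invariant sketched in the ``Termination and agreement of \protocol's AABC'' paragraph---a certificate of $h(d_r)$ signatures retains at least $h(d_r')$ valid ones after filtering out a larger detected-deceitful set of size $d_r'\ge d_r$---so that a later exclusion never retroactively invalidates an earlier decision, and must show, via Accountability and the PoF gossip, that all non-faulty processes converge on the same excluded set; only then does the case analysis close: either some non-faulty process decides and all later ones follow, or the execution reaches the stable regime $d_r=d$ in which the plain liveness and safety arguments apply. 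Reconciling the fixed ``$2h-n$'' of the Accountability definition with Basilic's time-varying $h(d_r)$ is the other place that needs care.
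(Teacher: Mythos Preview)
Your proposal is correct and follows essentially the same route as the paper: the paper's proof of this theorem (appearing as Theorem~\ref{thm:consf} in Appendix~\ref{sec:proofs}) simply cites Corollary~\ref{cor:ter} and Lemmas~\ref{lem:aac},~\ref{lem:val},~\ref{lem:agr},~\ref{lem:acc} for the five properties, each of which is in turn proved by composing the corresponding AARB and AABC guarantees exactly as you outline. Your write-up is in fact more explicit than the paper's terse proof---the threshold-invariance observation $h(d_r)=h_0-d_r$ preserving both preconditions under exclusion, and the certificate-monotonicity argument under growing $d_r$, are discussed by the paper only informally in Section~\ref{sec:overview} rather than inside the theorem's proof---so you have correctly identified and surfaced the structural points on which the composition relies.
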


\begin{corollary}[Consensus]
  \label{cor:consf1}
  The \protocol class of protocols solves \problem if $n>3t+d+2q$.
\end{corollary}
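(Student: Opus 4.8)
The plan is to obtain Corollary~\ref{cor:consf1} as an immediate instantiation of Theorem~\ref{thm:consf1}: it suffices to show that whenever $n>3t+d+2q$ one can pick an initial voting threshold $h_0\in(n/2,n]$ for which both hypotheses of that theorem, namely $d+t<2h_0-n$ and $q+t\le n-h_0$, hold simultaneously. First I would take the ``corner'' threshold $h_0:=n-q-t$. This is an integer, it obviously satisfies $h_0\le n$, and it lies strictly above $n/2$: indeed $n-q-t>\tfrac{n+d+t}{2}$ rearranges to $n>3t+d+2q$ (our hypothesis), while $\tfrac{n+d+t}{2}\ge n/2$ since $d,t\ge0$, and chaining the two gives $h_0=n-q-t>n/2$, so $h_0\in(n/2,n]$ as required by the \protocol class.

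Second, I would verify the two inequalities of Theorem~\ref{thm:consf1} for this $h_0$. The liveness bound $q+t\le n-h_0$ holds with equality by construction. The safety bound $d+t<2h_0-n$ unfolds to $d+t<2(n-q-t)-n=n-2q-2t$, i.e.\ to $n>3t+d+2q$, which is exactly the assumed hypothesis. Hence Theorem~\ref{thm:consf1} applies to the member of the \protocol class with initial threshold $h_0=n-q-t$, and this protocol solves the \problem problem while tolerating $t$ Byzantine, $d$ deceitful, and $q$ benign processes; since $(t,d,q)$ was an arbitrary triple with $n>3t+d+2q$, the corollary follows.

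I do not expect a genuine obstacle here: all of the protocol-internal conditions (the ABV-broadcast quorum $\lfloor\frac{n-q-t}{2}\rfloor-d_r+1$, the runtime updates $h(d_r)=h_0-d_r$, etc.) are already discharged inside Theorem~\ref{thm:consf1}, whose statement is phrased directly in terms of $h_0$. The one point I would state explicitly is that the choice $h_0=n-q-t$ keeps the runtime-updated threshold in range even after removing up to $d_r\le d$ detected deceitful processes: subtracting $d_r$ from each side preserves $h(d_r)\le n-q-t-d_r$ and $h(d_r)>\tfrac{n+d+t}{2}-d_r$, which are exactly the restated bounds of Section~\ref{sec:prot}. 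With that remark in place the corollary is a one-line consequence of the theorem, the real work having been done in establishing Theorem~\ref{thm:consf1}; the matching Theorem~\ref{thm:imp} then shows the bound $n>3t+d+2q$ cannot be weakened, so the \protocol class is resilient optimal.
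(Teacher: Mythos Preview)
Your proposal is correct and follows essentially the same approach as the paper: the paper's proof simply says it is ``immediate from Theorem~\ref{thm:consf1} after removing $h_0$ from the system of two inequations defined by $d+t<2h_0-n$ and $q+t\leq n-h_0$,'' and your explicit choice $h_0=n-q-t$ is precisely the constructive witness for that elimination. Your additional remarks about the runtime threshold $h(d_r)$ and about optimality via Theorem~\ref{thm:imp} are correct but already absorbed into Theorem~\ref{thm:consf1} and the surrounding discussion, so they are not needed for the corollary itself.
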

\section{\protocol's complexity}
\label{sec:comps}
In this section, we show the time, message and bit
complexities of \protocol. We execute one instance of
\protocol's AARB reliable broadcast and of \protocol's AABC binary
consensus per process. We prove these complexities in the
appendix~\ref{sec:extcomp}. We summarize the complexities of the three protocols after
GST in Table~\ref{tab:gstcom}. We refer to the appendix~\ref{sec:extcomp} for an
analysis of complexities before GST.
\begin{table}[htp]
  \centering
  \setlength{\tabcolsep}{12pt}
\begin{tabular}{lrrr} 
\toprule 
Complexity & 
AARB & AABC & Basilic \\\midrule
 Time & $\mathcal{O}(1)$ & $\mathcal{O}(n)$ & $\mathcal{O}(n)$ \\
 Message & $\mathcal{O}(n^2)$ & $\mathcal{O}(n^3)$ & $\mathcal{O}(n^4)$ \\
 Bit& $\mathcal{O}(\lambda n^3)$& $\mathcal{O}(\lambda n^4)$& $\mathcal{O}(\lambda n^5)$\\
 \bottomrule
\end{tabular}

\caption{Time, message and bit complexities of \protocol
AARB, AABC and the general \protocol protocol, after GST.}
\label{tab:gstcom}
\end{table}

The complexities of \protocol after GST share the same asymptotic
complexity of other recent works that are not
\mypropertyadj~\cite{civit2019techrep, civit2021, forensics}, some of them not
being accountable either~\cite{CL02}, as we show in
Table~\ref{tab:bigtab}. This is because the adversary cannot prevent
termination of any phase. Thus, after GST, all processes can continue
to the next phase or terminate the protocol by the time the timer for
that phase expires, resulting in an execution equivalent to that of
Polygraph (apart from one additional message broadcast in
ABV-broadcast). In this table, naive \protocol represents the protocol
we show in Algorithm~\ref{alg:gen}, whereas the following row,
multi-valued \protocol, shows the analogous optimizations shown in
Polygraph and applicable to the \protocol protocol as
well~\cite{civit2021}. The row titled 'superblock' is the result of
applying the additional superblock
optimization~\cite{crain2018dbft,CNG18}. This optimization is only
available to protocols without a leader in which all processes propose
a value~\cite{crain2018dbft,civit2021}, and consists of deciding the
union of all values instead of the minimum of them. After these
optimizations, the resulting normalized bit complexity (i.e. per
decision) of \protocol is as low as those of other works that are only
accountable and not actively accountable, such as BFT
Forensics~\cite{forensics} or Polygraph~\cite{civit2021}. Furthermore,
since this is the lowest complexity to obtain
accountability~\cite{civit2021}, this means that this is also optimal
in the bit complexity. Note that other optimizations present in other
works, such as the possibility to obtain an amortized complexity of
$\mathcal{O(\lambda n^2)}$ in BFT Forensics per decision after $n$
iterations of the protocol~\cite{TG19}, is also possible in
\protocol's consensus protocol. Finally, an advantage of \protocol, as
well as of other leaderless protocols, compared to leader-based
works~\cite{forensics,TG19}, is that the distribution of proposals
scatters the bits throughout multiple channels of the network,
instead of bloating channels that have the leader as sender or
recipient, as previously noted~\cite{CNG18}.

Finally, not only are the rest of the protocols
in Table~\ref{tab:bigtab} not \mypropertyadj, but also this means that
they only solve consensus tolerating at most $t<n/3$ faults in the BDB
model, whereas \protocol with initial threshold $h_0=2n/3$ solves
consensus where $d+t<n/3$ and $q+t\leq n/3$ faults, hence tolerating
the strongest adversary among these proposals.

\begin{table}[htp]
  \centering
  \caption{Complexities of \protocol compared to other works.}
  \label{tab:bigtab}
  \setlength{\tabcolsep}{5pt}
  \hspace{-1em}
\begin{tabular}{lllll}
\toprule
Algorithm & Msgs & Bits & Acc. & Actacc. \\
\midrule
PBFT~\cite{CL02} & $\mathcal{O}(n^3)$ & $\mathcal{O}(\lambda n^4)$ & \no & \no \\
Tendermint~\cite{Buc16} & $\mathcal{O}(n^3)$ & $\mathcal{O}(\lambda n^3)$ & \no & \no \\
HotStuff~\cite{TG19} & $\mathcal{O}(n^2)$ & $\mathcal{O}(\lambda n^2)$ & \no & \no \\
% DBFT binary~\cite{crain2018dbft} & $\mathcal{O}(n^3)$ & $\mathcal{O}(n^3)$ & \no & \no \\
% DBFT multi-v.~\cite{crain2018dbft} & $\mathcal{O}(n^4)$ & $\mathcal{O}(n^4)$ & \no & \no \\
DBFT superblock~\cite{crain2018dbft} & $\mathcal{O}(n^3)$ & $\mathcal{O}(n^3)$ & \no & \no \\
  
\midrule
  BFT Forensics~\cite{forensics} & $\mathcal{O}(n^2)$ & $\mathcal{O}(\lambda n^3)$ & \yes & \no \\
Polygraph's binary~\cite{civit2021} & $\mathcal{O}(n^3)$ & $\mathcal{O}(\lambda n^4)$ & \yes & \no \\  
Naive Polygraph~\cite{civit2021} & $\mathcal{O}(n^4)$ & $\mathcal{O}(\lambda n^5)$ & \yes & \no \\
Polygraph Multi-v.~\cite{civit2021} & $\mathcal{O}(n^4)$ & $\mathcal{O}(\lambda n^4)$ & \yes & \no \\
  
Polygraph superblock.~\cite{civit2021} & $\mathcal{O}(n^3)$ & $\mathcal{O}(\lambda n^3)$ & \yes & \no \\
\midrule
\protocol's AABC & $\mathcal{O}(n^3)$ & $\mathcal{O}(\lambda n^4)$ & \yes & \yes \\
Naive \protocol & $\mathcal{O}(n^4)$ & $\mathcal{O}(\lambda n^5)$ & \yes & \yes \\
Multi-valued \protocol & $\mathcal{O}(n^4)$ & $\mathcal{O}(\lambda n^4)$ & \yes & \yes \\
\protocol superblock & $\mathcal{O}(n^3)$ & $\mathcal{O}(\lambda n^3)$ & \yes & \yes \\
\bottomrule
\end{tabular}
\end{table}
\section{Solving Eventual Consensus with \protocol}
\label{sec:ec}
In this section, we adapt \protocol to solve eventual consensus in the BDB model, and then prove that the \protocol protocol is resilient optimal.
The eventual consensus ($\Diamond$-consensus) abstraction~\cite{Dubois2015} captures eventual agreement among all participants. It exports, to every process $p_i$, operations
$\lit{proposeEC}_1, \lit{proposeEC}_2, ...$ that take multi-valued arguments (non-faulty processes propose valid values) and return multi-valued responses. Assuming that, for all $j \in N$,
every process invokes $\lit{proposeEC}_j$ as soon as it returns a response to $\lit{proposeEC}_{j-1}$, the abstraction guarantees that, in every admissible run, there exists $k \in N$, such that the following properties are satisfied:
\begin{itemize}
\item {\bf $\Diamond$-Termination.} Every non-faulty process eventually returns a response to $\lit{proposeEC}_j$
for all $j\in N$.
\item {\bf $\Diamond$-Integrity.} No process responds twice to $\lit{proposeEC}_j$ for all $j\in N$.
\item {\bf $\Diamond$-Validity.} Every value returned to $\lit{proposeEC}_j$ was previously proposed to $\lit{proposeEC}_j$ for all $j\in N$.
\item {\bf $\Diamond$-Agreement.} No two non-faulty processes return different values to $\lit{proposeEC}_j$
for all $j\geq k$.
\end{itemize}

We detail now \emph{$\Diamond$-\protocol (BEC)}, an adaptation of \protocol for the $\Diamond$-consensus problem. Process $p_i$ executes $\Diamond$-\protocol with the following steps:
\begin{enumerate}[leftmargin=* ,wide=\parindent]
\item BEC first executes $\lit{\protocol-gen-propose_{h_0}(}v_i\lit{)}$, whose output is returned by $p_i$ as BEC's output of $\lit{proposeEC}_0$.
\item If $p_i$ finds no disagreement between operations $k$ and $k'$, then for all operations $\lit{proposeEC}_j,\,k'>j\geq k$, the output is that of $\lit{proposeEC}_{j-1}$.
\item If $p_i$ finds a new disagreement at operation $j$ for some index $r\in[0,n-1]$, then:
  \begin{enumerate}[leftmargin=* ,wide=\parindent]
    \item If the disagreement is between AARB-delivered values, BEC resolves it as follows: let $(\EST, \langle u,r\rangle)$ be the value that differs with the locally AARB-delivered value $(\EST, \langle v,r\rangle)$, then, for $\lit{proposeEC}_j$, $p_i$ applies $y=\lit{min(}v,u\lit{)}$ to the disagreeing value. Next, if the output of $\lit{proposeEC}_{j-1}$ was $v$, $p_i$ replaces the AARB-delivered value with $y$, and outputs $y$ instead for $\lit{proposeEC}_{j}$. 
    \item If the disagreement is between values 1 and 0 decided at AABC's protocol, then $p_i$ sets $\ms{bin-decisions}[r]$ to 1. Then, $p_i$ recalculates if the minimum decided value changed after adding this binary decision (i.e., re-execute lines~\ref{line:genuni1}-\ref{line:genuni} of Algorithm~\ref{alg:gen}), and output this decision for $\lit{proposeEC}_{j}$.
    % \item In both cases, return the updated result of  $\lit{sbc()}$ to all decided values with the updated decisions or AARB-delivered values
    \item Finally, $p_i$ broadcasts the values (and certificates) of all the disagreements that $p_i$ has not yet broadcast.
    \end{enumerate}

\end{enumerate}
We show in Theorem~\ref{thm:ec} that $\Diamond$-\protocol with initial threshold $h_0$ solves the $\Diamond$-consensus
problem if $d+t<h_0$ and $q+t<n-h_0$, where $t,\,d$ and $q$ are the numbers
of Byzantine, deceitful and benign processes, respectively, and $h_0$ the initial threshold. This means that the $\Diamond$-\protocol class of protocols solves $\Diamond$-consensus for any combination of $t,\,d$ and $q$ Byzantine, deceitful and benign processes, respectively,
such that $2t+d+q<n$, as we show in Corollary~\ref{cor:dec1}.
\begin{theorem}[$\Diamond$-Consensus per threshold]
  \label{thm:ec}
  The $\Diamond$-\protocol protocol with initial threshold $h_0$ solves the $\Diamond$-consensus
problem if $d+t<h_0$ and $q+t<n-h_0$.
\end{theorem}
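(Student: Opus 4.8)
The plan is to verify the four properties of $\Diamond$-consensus for $\Diamond$-\protocol run with initial threshold $h_0$ under the hypotheses $d+t<h_0$ and $q+t<n-h_0$, reusing as much of the consensus analysis of \protocol as possible. $\Diamond$-Integrity is immediate from the construction: each invocation $\lit{proposeEC}_j$ produces a single output --- the output of $\lit{proposeEC}_{j-1}$ if no fresh disagreement is found, or the value obtained after applying the resolution rules of step~3 --- and later operations never re-answer an earlier one. $\Diamond$-Validity also reduces to known facts: the output of $\lit{proposeEC}_0$ is the output of $\lit{\protocol-gen-propose_{h_0}}$, which is a value AARB-delivered at some index, and since under $d+t<h_0$ every valid AARB certificate carries a non-faulty \ECHO, that value was genuinely proposed by its source; moreover the two resolution rules only ever emit a previously AARB-delivered value (rule~3(a) replaces a value by the minimum of two AARB-delivered values, rule~3(b) recomputes the minimum of the AARB-delivered values sitting at indices whose binary decision is $1$). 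Hence by induction on $j$ every returned value is proposed.

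For $\Diamond$-Termination I would first show that $\lit{proposeEC}_0$ returns. Its underlying call is $\lit{\protocol-gen-propose_{h_0}}$, and the termination argument inside the proof of Lemma~\ref{lem:aac} goes through using only the AARB-Send, AABC-Validity, AABC-Termination and AARB-Receive properties together with $h_0\le n-q-t$ (implied here by $q+t<n-h_0$): it never invokes the agreement bound $d+t<2h_0-n$. In particular this argument also yields that at least one index ends with binary decision $1$, so the decided value is well defined. For $j\ge 1$, an operation that copies $\lit{proposeEC}_{j-1}$ returns at once; an operation resolving a disagreement performs a local computation plus a broadcast, and in the binary-decision case it may block waiting to AARB-deliver the value at the relevant index, but a binary decision of $1$ there forces (as argued in Lemma~\ref{lem:aac}) that some non-faulty process AARB-delivered its proposal, so AARB-Receive unblocks the wait. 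It then remains to see that only finitely many operations ever resolve something: there are $n$ indices and, as argued next, finitely many distinct disagreements per index, so from some operation on every operation merely copies its predecessor and returns.

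The crux is $\Diamond$-Agreement, and here the hypothesis $d+t<h_0$ is essential. Because a certificate is accepted only if it carries $h(d_r)$ signatures from distinct processes \emph{not yet detected}, and the not-yet-detected Byzantine-or-deceitful processes number at most $(d+t)-d_r$, every accepted certificate contains at least $h(d_r)-\bigl((d+t)-d_r\bigr)=h_0-(d+t)\ge 1$ signatures of genuinely non-faulty processes --- a positive, $d_r$-independent slack. Since each non-faulty process signs at most one \ECHO per AARB instance (and binary instances are $\{0,1\}$-valued), this bounds by $(n-d-q-t)/(h_0-d-t)$ the number of values that can ever be certified at any fixed index, hence bounds the number of disagreements any non-faulty process can ever discover; with $n$ indices this caps the total. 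Next these disagreements propagate: by rule~3(c) a non-faulty process that discovers a disagreement broadcasts the conflicting values and their certificates, so over the reliable channels after GST every non-faulty process eventually holds the same finite family of relevant certificates and therefore discovers and resolves exactly the same set of disagreements. As each resolution rule is deterministic and order-insensitive ($\min$ is associative and commutative, ``set the binary decision to $1$'' is idempotent), once a non-faulty process has resolved all of them its output of $\lit{proposeEC}_j$ equals, for every larger $j$, the same deterministic function of the common AARB-delivered values, binary decisions and disagreement set; taking $k$ to be the first operation by which the last non-faulty process has reached that state --- finite, since everything above happens in finite time and the operations are sequential --- yields $\Diamond$-Agreement for all $j\ge k$. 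The bound $n>2t+d+q$ then follows by picking $h_0$ with $d+t<h_0<n-q-t$, which is possible exactly when $d+q+2t<n$. I expect the main obstacle to be making the finiteness claim of this last paragraph fully rigorous, and in particular ruling out that the adversary fabricates an unbounded stream of fresh valid certificates across the unboundedly many rounds of the binary-consensus instances; certificate unforgeability under $d+t<h_0$, the fact that each non-faulty process AABC-decides an instance at most once, and the permanent removal of detected equivocators are what must be combined to cap that stream.
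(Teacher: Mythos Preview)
Your proof is correct and follows essentially the same approach as the paper's: both derive $\Diamond$-Integrity and $\Diamond$-Validity from the construction and from \protocol's validity, obtain $\Diamond$-Termination from the liveness argument of Lemma~\ref{lem:aac} under $q+t\le n-h_0$, and prove $\Diamond$-Agreement by observing that $d+t<h_0$ forces every valid certificate to contain at least one non-faulty signer, so that only finitely many disagreements can arise, all of which are eventually propagated by broadcast and resolved identically everywhere. Your treatment is in fact more detailed than the paper's (which leaves the finiteness claim and the order-independence of the resolution rules implicit), and your explicit slack computation $h(d_r)-((d+t)-d_r)=h_0-(d+t)$ and the ensuing bound on the number of certifiable values per index are a welcome addition.
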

\begin{proof}
  $\Diamond$-Integrity is trivial. The bound $q+t<n-h_0$ is proven in
Corollary~\ref{cor:impter}: $\Diamond$-\protocol starts by executing
\protocol, which does not terminate unless $q+t<n-h_0$, satisfying
$\Diamond$-Termination. $\Diamond$-Validity derives immediately from \protocol's proof of validity (Lemma~\ref{lem:val}).

We only have left to prove $\Diamond$-Agreement. If $d+t<h_0$ then all
valid certificates contain at least one non-faulty process. This means
that the number of disagreements is finite. Then, since non-faulty
processes broadcast all disagreements they find (and their
corresponding valid certificates), all non-faulty processes will
eventually find all disagreements. Also, all non-faulty processes will find all disagreements of \protocol by its accountability property (Lemma~\ref{lem:aac}). Let us consider that all non-faulty
processes, except $p_i$, have already found and treated all
disagreements (as specified by the $\Diamond$-\protocol protocol). Suppose
that $p_i$ finds the last disagreement at the start of operation
$\lit{proposeEC}_{k-1}$ for some $k>0$. Then, for all $j\geq k$, no
two non-faulty processes return different values to
$\lit{proposeEC}_k$, satisfying $\Diamond$-Agreement.
\end{proof}
\begin{corollary}[$\Diamond$-Consensus]
  \label{cor:dec1}
  The \protocol class of protocols solves $\Diamond$-consensus if $n>2t+d+q$.
\end{corollary}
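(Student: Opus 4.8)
The plan is to derive Corollary~\ref{cor:dec1} from Theorem~\ref{thm:ec} by a pure quantifier/arithmetic unpacking, in exactly the way Corollary~\ref{cor:consf1} follows from Theorem~\ref{thm:consf1}. Recall that Theorem~\ref{thm:ec} fixes an initial threshold $h_0$ and shows that $\Diamond$-\protocol with that threshold solves $\Diamond$-consensus against any adversary with $t$ Byzantine, $d$ deceitful and $q$ benign processes provided $d+t<h_0$ (this forces every valid certificate to contain a non-faulty signer, so the number of disagreements stays finite) and $h_0\le n-q-t$ (the liveness bound of Corollary~\ref{cor:impter}, stated as $q+t<n-h_0$ in Theorem~\ref{thm:ec}). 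Hence, given an arbitrary triple $(t,d,q)$ with $n>2t+d+q$, it suffices to exhibit a single admissible threshold $h_0$ and invoke Theorem~\ref{thm:ec} for the corresponding member of the class; ranging over all such triples then yields the statement for the whole class.

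First I would check that the set of admissible thresholds is nonempty under the hypothesis. An integer $h_0$ is admissible iff $t+d<h_0\le n-q-t$, and such an integer exists iff $t+d<n-q-t$, i.e.\ iff $n>2t+d+q$ --- precisely our assumption. A concrete witness is $h_0=t+d+1$: from $2t+d+q<n$ we get $t+d+1\le 2t+d+q+1\le n$, so $1\le h_0\le n$, while $t+d+1\le n-q-t$ is again just $n\ge 2t+d+q+1$, which holds; a more balanced choice such as $h_0=\ceil{(n+d-q)/2}$ also satisfies both inequalities of Theorem~\ref{thm:ec}. Note that, unlike for ordinary consensus, no constraint $h_0>n/2$ is imposed here, because for $\Diamond$-consensus disagreements are tolerated and resolved rather than prevented; this is exactly why the bound improves to $n>2t+d+q$. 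For such an $h_0$, Theorem~\ref{thm:ec} gives a protocol in the $\Diamond$-\protocol class that solves $\Diamond$-consensus for this particular $(t,d,q)$, which is all that is required.

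I do not expect a genuine obstacle: all the substantive content --- the $\Diamond$-\protocol construction, the disagreement-resolution steps, the finiteness of disagreements from $d+t<h_0$, and the appeals to \protocol's accountability, validity and termination (Lemma~\ref{lem:aac} and Corollary~\ref{cor:impter}) --- is already discharged inside Theorem~\ref{thm:ec}. The only point that needs a line of care is the interaction between the strict form $q+t<n-h_0$ stated in Theorem~\ref{thm:ec}, the non-strict liveness bound $h_0\le n-q-t$ of Corollary~\ref{cor:impter}, and the fact that $h_0$ must be an integral number of processes; picking $h_0=t+d+1$ as above exhibits an explicit integer threshold meeting the requirement and reconciles it with the $n>2t+d+q$ claimed in the corollary.
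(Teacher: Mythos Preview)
Your proposal is correct and follows essentially the same route as the paper: the paper's own proof is the single line ``immediate from Theorem~\ref{thm:ec} after removing $h_0$ from the system of two inequations defined by $d+t<h_0$ and $q+t<n-h_0$,'' and your argument is the explicit version of that elimination, exhibiting $h_0=t+d+1$ as a concrete witness. Your discussion of the strict versus non-strict liveness bound and of the absence of the $h_0>n/2$ constraint is in fact more careful than the paper, which glosses over both points.
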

\section{Related Work}
\label{sec:relw}
  Accountability has been proposed for distributed systems in
PeerReview~\cite{HKD07} and particularly for the problem of consensus
in Polygraph~\cite{CGG20}. ZLB~\cite{ranchal2020blockchain} extends Polygraph to tolerate
up to $5n/9$ deceitful faults for $\Diamond$-consensus, but tolerates only $t<n/3$ for consensus. This work leverages accountability to
replace deceitful processes by new processes. Unfortunately, they
require deceitful processes to eventually
stop trying to cause a disagreement. Flexible BFT~\cite{MNR19} offers a failure model and theoretical results to
tolerate $\lceil 2n/3\rceil -1$ alive-but-corrupt (abc) processes. An
abc process behaves maliciously only if it knows it can violate
safety, and behaves correctly otherwise. This is an even stronger
assumption than ZLB's deceitful faults eventually behaving
correctly. Additionally, their fault tolerance requires a commitment
from clients to not tolerate a single Byzantine fault in order to
tolerate $\lceil 2n/3\rceil -1$ abc faults, or to instead tolerate no
abc faults if clients decide to tolerate $t=\lceil n/3\rceil -1$ Byzantine faults.
Neu et al.'s ebb-and-flow system~\cite{ebbnflow} is available in
partial synchrony for $t<n/3$ and satisfies finality in synchrony for
$t<n/2$. They also motivate the need for the BDB model in their recent
accountability-availability
dilemma~\cite{neu2021availability}. Sheng et al.\cite{forensics}
characterize the forensic support of a variety of Blockchains. Unfortunately,
none of these works tolerate $q=\ceil{\frac{n}{3}}-1$ benign and even
$d=1$ deceitful faults, or $d=\ceil{\frac{n}{3}}-1$ and even $q=1$
benign fault, a direct consequence of them not satisfying \myproperty.

Upright~\cite{CKL09} tolerates $n=2u+r+1$ faults, where $u$ and $r$
are the numbers of commission and omission faults,
respectively. Upright tolerates $n/3$ commission faults or instead
$n/2$ omission faults, falling short of \protocol's $q+d<2n/3$
deceitful and benign faults or $t<n/3$ Byzantine faults
tolerated. Upright does also not tolerate more faults for commission
than the lower bound for BFT consensus. Anceaume et
al.~\cite{anceaume2020finality} tolerate $t<n/2$ Byzantine faults for
the problem of eventual consensus, at the cost of not tolerating even
$t=1$ Byzantine fault for deterministic consensus. Our \protocol class also tolerates this case if $h_0$ is set to $h_0=\floor{\frac{n}{2}}+1$, but \protocol also tolerates more cases by a just changing the initial threshold $h_0$. % This is a particular case of our \protocol protocol, in which the initial threshold $h_0$ is set to $h_0=\floor{\frac{n}{2}}+1$. \ARP{A bit more on eventual consensus perhaps?}

Although \protocol is, to the best of our knowledge, the first
protocol tolerating $n>3t+d+2q$ in the BDB model, and
despite this fault tolerance deriving from the property of
\myproperty, previous works already try to discourage misbehavior by
threatening with slashing a deposit or removing a faulty process from
the committee, or both. Ranchal-Pedrosa et al. propose the Huntsman
protocol~\cite{ranchal2021agreement}, an accountable consensus protocol tolerating
up to $k$ rational players and $t$ Byzantine players causing a
disagreement by threatening deviant rationals and rewarding those who expose the deviants, for $n>\max(\frac{3}{2}k+3t,2(k+t))$. Freitas de
Souza et al.~\cite{SKRP21} provide an asynchronous implementation of an accountable lattice agreement protocol. Shamis et al.~\cite{shamis2021pac} store signed
messages in a dedicated ledger so as to punish processes in case of
misbheavior. The Casper~\cite{Casper} algorithm
incurs a penalty in case of double votes but does not ensure
termination when $t<n/3$. Although Tendermint~\cite{BKM18} aims at slashing processes, it is not accountable. SUNDR~\cite{LKM04} requires cross-communication between non-faulty clients
to detect failures. FairLedger~\cite{levari2019fairledger} requires synchrony to detect faulty processes.%  Sheng et al.~\cite{forensics}
% consider forensics support as the ability to make processes accountable
% for their actions to clients. We do not consider this model in which
% they show accountability cannot be achieved with $2n/3$ faults.

\section{Conclusion}
\label{sec:con}
In this paper, we have shown that it is impossible to solve consensus
in the BDB model against an adversary controlling
$n>3t+d+2q$, where $t,\,d,$ and $q$ are the number of Byzantine, deceitful and benign processes, respectively. We then present our \protocol
class of protocols, the first class of resilient optimal protocols for
the consensus problem in the BDB model. \protocol solves
\mypropertyadj consensus tolerating any combination of $t,\, d$ and $q$
Byzantine, deceitful and benign processes, respectively, satisfying
$h_0>\frac{n+d+t}{2}$ for safety and $h_0\leq n-q-t$ for liveness,
given an initial voting threshold $h_0$. We prove this result to be
resilient optimal per voting threshold. Additionally, for the same
voting threshold, \protocol also solves eventual consensus if
$h_0>d+t$ and $h_0\leq n-q-t$. We show that \protocol's complexities
are comparable to state-of-the-art accountable consensus protocols
that tolerate less faults.
 
%-------------------------------------------------------------------------------
% \bibliographystyle{ACM-Reference-Format}
% \bibliographystyle{IEEEtran}
\bibliography{consensus-many-faults} %\jobname}

\newpage 
\appendix
\subsection{\protocol Proofs}
\label{sec:proofs}
In this section, % we first show the additional impossibility results that consider per voting threshold $h$. Following,
we prove the rest of the properties of \protocol, including its ABV-broadcast, AABC and AARB protocols.
% \subsection{Additional impossibility proofs}
\subsubsection{Accountable binary value broadcast}
We first start with the properties that ABV-broadcast satisfies. We say process $p_i$ ABV-broadcasts value $v$ to refer to $p_i$ sending a \BVALECHO message containing $v$ and a valid certificate justifying $v$.
We prove ABV-termination in Lemma~\ref{lem:aabv-ter}, ABV-uniformity in Lemma~\ref{lem:aabv-uni}, ABV-obligation in Lemma~\ref{lem:aabv-obl}, ABV-justification in Lemma~\ref{lem:aabv-jus}, and ABV-accountability in Lemma~\ref{lem:aabv-acc}.

\begin{lemma}[ABV-Termination]
  \label{lem:aabv-ter}
   Every non-faulty process eventually adds at least one value to $\ms{bin\_vals}$.
 \end{lemma}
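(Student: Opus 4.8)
The plan is to adapt the classical liveness argument for binary-value broadcast to the accountable, committee-shrinking setting of $\lit{abv-broadcast}$, with an extra wrinkle to handle deceitful processes that equivocate inside the instance. Since $\lit{AABC-prop}$ invokes $\lit{abv-broadcast}$ with a binary estimate, every non-faulty process broadcasts a $\BVALECHO$ message carrying a value in $\{0,1\}$ (line~\ref{line:bvechobroad1}). By pigeonhole, at least $\lceil f/2 \rceil$ of the $f \ge n-t-d-q$ non-faulty processes broadcast $\BVALECHO$ for the same value $v$; any deceitful process that does not equivocate in this instance also sends a single $\BVALECHO$ and can only increase such a count. First I would check, using the assumed bound $h_0 \le n-q-t$ and the definition of the rebroadcast threshold, that this many processes is at least $\floor{\frac{n-q-t}{2}}-d_r+1$.

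Next I would propagate. After GST every $\BVALECHO$ sent by a non-faulty process is delivered within $\Delta$, Rule~\ref{item:timer} (periodic rebroadcast of delivered signed messages on timeout) guarantees that no non-faulty process is permanently starved of a message that some non-faulty process holds, and by transferable authentication a forwarded signed $\BVALECHO$ is still accepted. Hence every non-faulty process eventually receives $\BVALECHO$ for $v$ from $\floor{\frac{n-q-t}{2}}-d_r+1$ distinct processes and, if it has not already, rebroadcasts $\BVALECHO$ for $v$ (line~\ref{line:bvechobroad2}). Therefore all non-faulty processes eventually broadcast $\BVALECHO$ for $v$, so every non-faulty process eventually collects $h(d_r)$ distinct $\BVALECHO$ messages for $v$ (invoking $h_0\le n-q-t$ again, this time on the delivery threshold), builds $\ms{bv\_cert}$, and adds $(\BVALREADY,\langle v,\cdot\rangle)$ to $\ms{bin\_vals}$ (line~\ref{line:bvdel1}) --- unless it had already added the other value upon receiving a $\BVALREADY$ (line~\ref{line:bvdel2}), in which case $\ms{bin\_vals}$ is already non-empty. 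Either way $\ms{bin\_vals}\neq\emptyset$, which is the claim.

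It remains to discharge the case where a deceitful process equivocates on $\BVALECHO$ within this instance, so the ``non-equivocating'' hypothesis above fails. Here I would appeal to the mechanism behind \myproperty: on timeout the non-faulty processes exchange their delivered signed messages, $\lit{check-conflicts}$ yields a PoF, and $\lit{update-committee}$ removes the offender, simultaneously decrementing $n$ and raising $d_r$, hence lowering $h(d_r)=h_0-d_r$ and the rebroadcast threshold while preserving the invariant $h(d_r)\le (\text{current }n)-q-t$; this can happen only finitely often, since each removal permanently deletes a process, after which every process that still sends messages behaves correctly in this instance and the previous two paragraphs apply verbatim. The main obstacle is exactly the bookkeeping hidden in the parenthetical ``routine'' checks: one must verify, with floors and ceilings and across the runtime updates $n\gets|N|$, $h(d_r)\gets h_0-d_r$, that the number of non-faulty processes supporting the majority value is simultaneously $\ge \floor{\frac{n-q-t}{2}}-d_r+1$ (to trigger the rebroadcast) and, once they have all rebroadcast, $\ge h(d_r)$ (to trigger the $\BVALREADY$), so that no non-faulty process is left stuck just below a threshold. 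A secondary point to pin down is that resetting the phase timer never loses progress, which holds because a non-faulty process's set of delivered signed messages is monotone non-decreasing.
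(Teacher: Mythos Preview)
Your proposal is correct and takes essentially the same approach as the paper's proof: a pigeonhole on the binary initial $\BVALECHO$s to find a value meeting the rebroadcast threshold, a cascade so that all non-faulty processes eventually collect $h(d_r)$ $\BVALECHO$s for it, and PoF-detection plus committee update (via Rule~\ref{item:timer} and $\lit{update-committee}$) to neutralize equivocating deceitful processes and lower the thresholds. The paper's case split (first $t=d=0$, then reduce $d>0$ and $t>0$ to that base case) versus your ``no in-instance equivocation / equivocation handled by removal'' decomposition is cosmetic; your explicit flagging of the floor/ceiling bookkeeping as the main obstacle is, if anything, more honest than the paper's ``it is clear that'' at the same step.
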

 \begin{proof}
   Note that all non-faulty processes broadcast a \BVALECHO message with
value $v$ when they receive $\floor{\frac{n-q}{2}}-d_r+1$ \BVALECHO messages
with $v$. First, let us consider that $t=d=0$, in that case,
non-faulty processes broadcast a \BVALECHO message with $v$ if they receive
$\floor{\frac{n-q-t}{2}}+1$ \BVALECHO messages with $v$. Also recall that $v\in\{0,1\}$. As such,
let us consider a partition of non-faulty processes $A,\,B\subseteq N$
such that $A\cap B=\emptyset$, and let us consider that processes in
$A$ initially sent a \BVALECHO message with $v=0$ while processes in
$B$ sent a \BVALECHO message with $v=1$. It is clear that $|A|+|B|\geq
n-q-t$ and thus either $|A|\geq \floor{\frac{n-q-t}{2}}+1$ or $|B|\geq
\floor{\frac{n-q-t}{2}}+1$. W.l.o.g. let us assume that $|A|\geq
\floor{\frac{n-q-t}{2}}+1$, then processes in $|B|$ eventually receive enough
\BVALECHO messages with value $v=0$ to also broadcast a \BVALECHO
message with $v=0$. Thus, since $h(d_r)\leq n-q-t-d_r$, eventually
all non-faulty processes receive enough \BVALECHO messages to add at
least the value $0$ to $\ms{bin\_vals}$.

Suppose instead that $d>0$ and $t=0$. Then, if the $d_r\leq d$ deceitful
processes that behave deceitful at a particular phase are enough to prevent termination, this
means that $d_r$ processes have sent at least two conflicting messages
to at least two non-faulty processes. As such, when the timer expires
and non-faulty processes broadcast their received signed \BVALECHO
messages, all non-faulty processes will eventually receive enough \BVALECHO messages to send a \BVALECHO message (analogously to case $d=0$). Thus, the
case $d>0$ is analogous to the case $d=0$ since \BVALECHO messages are relayed when timer expires, and
we have proven in the previous paragraph that termination is
guaranteed in that case. The same analogy takes place if $t>0$.

Note additionally that if $d_r$ detected deceitful processes have been removed, then the thresholds
decrease by the same factor $d_r$, preserving termination.

% Finally, including Byzantines is analogous to including deceitful
% processes (if the Byzantine fault sends at least two conflicting
% messages to at least two non-faulty processes) or benign processes (if
% the Byzantine fault does not display deceitful behavior). The proof is
% thus analogous to the two previous paragraphs.
\end{proof}

\begin{lemma}[ABV-Uniformity]
  \label{lem:aabv-uni}
  If a non-faulty process $p_i$ adds value $v$ to the set $\ms{bin\_vals}$, then all other non-faulty processes also eventually add $v$ to their local set $\ms{bin\_vals}$.
\end{lemma}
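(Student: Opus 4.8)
The plan is to run the standard Bracha-style relay argument for the ``ready'' phase, adapted to signed messages and to the threshold $h(d_r)$, which only shrinks as deceitful processes get removed. First I would isolate the invariant: \emph{whenever a non-faulty process $p_i$ adds $v$ to $\ms{bin\_vals}$, it has at some point broadcast a \BVALREADY message for $v$}. Indeed, $p_i$ can only add $v$ at line~\ref{line:bvdel1} or at line~\ref{line:bvdel2}. The branch containing line~\ref{line:bvdel1} is entered only under the guard ``$(\BVALREADY,\langle v,\cdot,\cdot\rangle)$ not yet broadcast'' and it immediately broadcasts such a message; the branch containing line~\ref{line:bvdel2} likewise broadcasts \BVALREADY for $v$ right after the add, unless $p_i$ had already broadcast one for $v$. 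In either case the invariant holds. Also note, for reuse below, that the guard of line~\ref{line:bvdel1} and the receipt of a non-discarded \BVALREADY at line~\ref{line:bvreadyrec} both require a valid $\ms{bv\_cert}$ for $v$, i.e.\ $h(d_r)$ distinct signed \BVALECHO messages on $v$.

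Next I would invoke the communication model: channels are reliable and authenticated, so messages are never lost, and hence every non-faulty $p_j$ eventually receives $p_i$'s \BVALREADY for $v$ (this is an ``eventually'' statement, so the unknown GST is immaterial here, and Rule~\ref{item:timer}'s periodic re-broadcasting of delivered signed messages only reinforces it). When $p_j$ handles this message at line~\ref{line:bvreadyrec} it adds $v$ to $\ms{bin\_vals}$ at line~\ref{line:bvdel2}, which is exactly the claim --- \emph{provided} $p_j$ does not discard the message. By the discarding rules of Algorithm~\ref{alg:prot} (Rules~1--2) the only cause for discarding is a failed certificate check; since signatures cannot be forged, $p_i$'s own signature and all $h(d_r)$ \BVALECHO signatures inside $\ms{bv\_cert}$ (together with the round-certificate $\ms{cert}$ it carries) verify for $p_j$ as well, so the only delicate point is that $p_j$'s count $d_r^{(j)}$ of detected deceitful processes, and thus its current required threshold $h(d_r^{(j)}) = h_0 - d_r^{(j)}$, may momentarily differ from the value $p_i$ used.

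Resolving this point is where I expect the real work, and I would handle it as in Lemmas~\ref{lem:aabv-ter} and~\ref{lem:aabc-aac}: it suffices to reason in the static case $d_r = 0$, because each non-faulty process's detected-deceitful set only grows, every non-faulty process re-broadcasts every PoF it learns (lines~\ref{line:update-com-start}--\ref{line:update-com-end}), and by the accountability of the sub-protocols these sets converge to a common one. Hence if $p_j$ first rejects $p_i$'s \BVALREADY for carrying ``too few'' valid signers, it eventually learns the missing PoFs, $\lit{update-committee}$ lowers its threshold and $\lit{recheck-certs-termination}$ re-evaluates the stored message, and since the $h_0 - d_r^{(i)}$ signers inside $\ms{bv\_cert}$ avoid $p_i$'s detected set (which is contained in the eventual common set), filtering by $p_j$'s eventually identical detected set still leaves at least $h_0 - d_r^{(j)} = h(d_r^{(j)})$ of them --- so the message is accepted and $v$ is added to $\ms{bin\_vals}$. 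The main obstacle is thus precisely this bookkeeping --- arguing that a \BVALREADY message for $v$ that was once valid can never become permanently unacceptable to a non-faulty process as the committee shrinks; the relay skeleton itself is routine.
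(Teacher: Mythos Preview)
Your proposal is correct and follows the same relay-of-\BVALREADY argument as the paper: the paper's proof simply observes that $p_i$ broadcasts a \BVALREADY carrying $\ms{bv\_cert}$ before (or at the same time as) adding $v$, and that any non-faulty $p_j$ receiving this message thereby obtains enough signed \BVALECHO messages to add $v$ as well. Your version is strictly more careful than the paper's on the one point it glosses over---the possible mismatch between $p_i$'s and $p_j$'s current values of $d_r$ and hence of $h(d_r)$---and your bookkeeping (PoFs are rebroadcast, detected sets only grow and eventually coincide, and a certificate built against $p_i$'s detected set still meets $p_j$'s threshold once $D_i\subseteq D_j$) is exactly the right way to close that gap.
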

\begin{proof}
  This proof is straightforward: $p_i$ adds $v$ to the set $\ms{bin\_vals}$ if it holds $h(d_r)$ signed \BVALECHO messages with $v$. In that case, it also constructs a certificate $\ms{bv\_cert}$ with these messages and broadcasts $\ms{bv\_cert}$ as part of the \BVALREADY with $v$ before adding $v$ to $\ms{bin\_vals}$. Therefore, all other non-faulty processes will eventually receive $p_i$'s \BVALREADY message along with $\ms{bv\_cert}$ containing enough \BVALECHO messages to also add $v$ to their local $\ms{bin\_vals}$. Finally, recall that all non-faulty processes broadcast their \BVALREADY message before adding $v$ to $\ms{bin\_vals}$, which solves the case that $p_i$ is faulty and sends \BVALREADY only to a subset of the non-faulty processes.
  \end{proof}
\begin{lemma}[ABV-Obligation]
  \label{lem:aabv-obl}
  If $\floor{\frac{n-q-t}{2}}-d_r+1$ non-faulty processes ABV-broadcast a value $v$, then all non-faulty processes ABV-deliver $v$.
\end{lemma}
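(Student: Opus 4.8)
The plan is to mirror the structure of the proof of ABV-Termination (Lemma~\ref{lem:aabv-ter}): show that the amplification rule at line~\ref{line:bvechobroad2} forces every non-faulty process to broadcast a \BVALECHO for $v$, and then count senders to conclude that every non-faulty process collects $h(d_r)$ such messages and hence ABV-delivers $v$ via the branch at lines~\ref{line:bvreadycons}--\ref{line:bvdel1}. First I would argue that every non-faulty process eventually receives the $\floor{\frac{n-q-t}{2}}-d_r+1$ \BVALECHO messages for $v$ posited in the hypothesis; each of these carries a valid certificate justifying $v$, since ABV-broadcasting $v$ means, by definition, sending a signed \BVALECHO containing $v$ together with a valid certificate. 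After GST this delivery takes at most $\Delta$; before GST it still happens eventually because, by Rule~\ref{item:timer}, non-faulty processes periodically rebroadcast their delivered signed messages for the phase while it has not terminated. Thus the trigger of line~\ref{line:bvechobroad2} fires at every non-faulty process, and each one that has not already done so broadcasts a \BVALECHO for $v$, attaching one of the valid certificates it received.

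Next I would count the distinct processes that end up broadcasting a \BVALECHO for $v$. All non-faulty processes do; moreover every \emph{undetected} deceitful process must do so as well, because refusing to relay when the rule's guard holds would be a deviation that is neither ``sending conflicting messages'' nor ``following the protocol,'' which the BDB fault model forbids for deceitful processes. Hence the set of processes that broadcast a \BVALECHO for $v$ has size at least $n-q-t-d_r$ (the non-faulty processes together with the $d-d_r$ undetected deceitful ones, after the $d_r$ detected deceitful have been removed). Since the liveness bound gives $h_0\le n-q-t$, the runtime threshold satisfies $h(d_r)=h_0-d_r\le n-q-t-d_r$, so every non-faulty process eventually receives \BVALECHO messages for $v$ from at least $h(d_r)$ distinct processes. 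The guard at line~\ref{line:bvreadycons} is then satisfied, so each non-faulty process constructs $\ms{bv\_cert}$, adds $v$ to $\ms{bin\_vals}$ --- i.e., ABV-delivers $v$ --- and broadcasts the corresponding \BVALREADY message; a non-faulty process that had already ABV-delivered $v$ earlier satisfies the claim trivially.

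The main obstacle I expect is the bookkeeping around $d_r$: one must check that the threshold $h(d_r)$, the committee size, and the number of processes obligated to relay $v$ all shrink in lockstep as detected deceitful processes are removed, so that the count of relayers stays at or above $h(d_r)$; the key point making this work is the fault-model argument that undetected deceitful processes still count as relayers. A secondary subtlety is the pre-GST part: it is Rule~\ref{item:timer}, not any synchrony assumption, that guarantees the original $\floor{\frac{n-q-t}{2}}-d_r+1$ \BVALECHO messages --- and then the amplified ones --- are eventually delivered to all non-faulty processes despite unbounded delays before GST.
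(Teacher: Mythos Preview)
Your proposal is correct and follows essentially the same route as the paper, which proves ABV-Obligation by a one-line reference to the ABV-Termination argument (Lemma~\ref{lem:aabv-ter}): amplification via the guard at line~\ref{line:bvechobroad2}, followed by the counting step $h(d_r)=h_0-d_r\le n-q-t-d_r$. Your treatment is more explicit than the paper's, and your ``main obstacle'' paragraph already identifies the only delicate point---that an undetected deceitful process might, instead of relaying, send conflicting messages, in which case it is detected via Rule~\ref{item:timer} and absorbed into $d_r$ so that the inequality still closes; phrasing the invariant as ``each deceitful process either relays $v$ or is eventually detected'' (rather than ``undetected deceitful must relay'') would make this airtight.
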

\begin{proof}
This proof is analogous to that of Lemma~\ref{lem:aabv-ter}.
\end{proof}
\begin{lemma}[ABV-Justification]
  \label{lem:aabv-jus}
  If process $p_i$ is non-faulty and ABV-delivers $v$, then $v$ has been ABV-broadcast by some non-faulty process.
\end{lemma}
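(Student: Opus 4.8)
The plan is to trace every ABV-delivery of $v$ back to a single authentic \BVALECHO message for $v$ signed by a non-faulty process, and then to observe that a non-faulty process only ever emits such a message while ABV-broadcasting $v$. First I would note that, within a given call to ABV-broadcast, a non-faulty process $p_i$ adds $v$ to $\ms{bin\_vals}$ only at line~\ref{line:bvdel1} or line~\ref{line:bvdel2}: in the first case $p_i$ has itself collected $(\BVALECHO,\langle v,\cdot,\cdot\rangle)$ from $h(d_r)$ distinct processes, and in the second case $p_i$ has received a \BVALREADY message whose attached certificate $\ms{bv\_cert}$ was, by its construction at line~\ref{line:bvreadycons} (and the verification performed before a \BVALREADY is accepted), built from $h(d_r)$ distinct signed $(\BVALECHO,\langle v,\cdot,\cdot\rangle)$ messages. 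Hence in either case $p_i$ holds \BVALECHO messages for $v$ signed by $h(d_r)$ distinct committee members (detected deceitful processes being filtered out beforehand), and since signatures of non-faulty processes are unforgeable these messages are authentic.

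Next I would bound how many of those $h(d_r)$ signers can be faulty. The total number of faulty processes is at most $t+d+q$, and the $d_r$ processes already removed from the committee are all provably faulty (a non-faulty process never emits a pair of conflicting messages), so at most $t+d+q-d_r$ of the $h(d_r)$ signers are faulty. From the standing safety assumption $h_0>\frac{n+d+t}{2}$ (Corollary~\ref{cor:impteragr}, applied to the running threshold $h(d_r)=h_0-d_r$) together with $n>3t+d+2q$, one readily obtains $h_0>t+d+q$, and therefore $h(d_r)=h_0-d_r>t+d+q-d_r$. Consequently at least one of the $h(d_r)$ signers, say $p_j$, is non-faulty.

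Finally I would appeal to the code of the ABV-broadcast procedure (Algorithm~\ref{alg:helper}, lines~\ref{line:abvb-start}--\ref{line:abvb-end}): a non-faulty process sends a \BVALECHO carrying $v$ only at line~\ref{line:bvechobroad1} or line~\ref{line:bvechobroad2}, and in both cases it attaches a valid certificate justifying $v$ (this being vacuous in round $1$, where certificates are waived). By the very definition of ABV-broadcast, this is exactly what it means for $p_j$ to ABV-broadcast $v$, which closes the argument. The main obstacle I anticipate is purely bookkeeping: making the threshold arithmetic watertight once deceitful processes have been removed at runtime — in particular confirming that $h(d_r)$ still strictly exceeds the number of faulty processes remaining in the shrunken committee — and being careful that a non-faulty signer's \BVALECHO genuinely qualifies as an ABV-broadcast of $v$, which hinges on the fact that non-faulty processes always accompany their \BVALECHO messages with a valid (or, in round $1$, vacuous) certificate.
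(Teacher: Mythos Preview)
Your argument is correct and is, in fact, cleaner than the paper's own proof. The two proofs differ in which threshold they pivot on. The paper proceeds by contradiction, assuming every non-faulty process ABV-broadcast only $v'\neq v$, and then argues that the remaining (deceitful) processes cannot by themselves reach the \emph{rebroadcast} threshold $\floor{\frac{n-q-t}{2}}-d_r+1$; it then manipulates the bounds $d-d_r<n-h(d_r)$ and $q<2h(d_r)-n$ to obtain a contradiction. You instead work directly with the \emph{delivery} threshold $h(d_r)$: any ABV-delivery of $v$ yields $h(d_r)$ authentic \BVALECHO signatures for $v$ from the current committee, and since $h_0>\frac{n+d+t}{2}$ together with $n>3t+d+2q$ gives $h_0>2t+d+q\geq t+d+q$, at least one signer must be non-faulty; by the paper's own definition of ABV-broadcast (sending a \BVALECHO with a valid certificate), that signer ABV-broadcast $v$.

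Your route has two advantages: it avoids the somewhat delicate case analysis on what benign processes may or may not echo (the paper asserts benign processes send only $v'$ or nothing, which is not obviously forced by the definition of benign faults), and the threshold arithmetic is transparent. The paper's route, if made fully rigorous, would in principle establish something slightly stronger---that even the lower rebroadcast threshold cannot be met by faulty processes alone---but this extra strength is not needed for the lemma. One minor remark: you invoke both $h_0>\frac{n+d+t}{2}$ and $n>3t+d+2q$, but the latter is already a consequence of the former together with the liveness bound $h_0\leq n-q-t$, so one of the two citations is redundant.
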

\begin{proof}
  Assume first $t=0$ and suppose the contrary: $p_i$ ABV-delivers $v$ and all non-faulty
processes ABV-broadcast $v',\,v\neq v'$. Since benign processes may
either send $v'$ to a subset of the non-faulty processes or nothing at
all, this means that $d-d_r > \floor{\frac{n-q }{2}}-d_r+1$ for
deceitful alone to be able to make $p_i$ ABV-deliver $v$. But using
the bound $d-d_r<n-h(d_r)$ we obtain that $q\geq 2h(d_r)-n$, which
contradicts our assumption on the number of benign faults (i.e. the
bound $q<2h(d_r)-n$). As a
result, it follows that at least some non-faulty process must have ABV-broadcast $v$.
The prove is analogous if $t>0$.
\end{proof}
 \begin{lemma}[ABV-Accountability]
   \label{lem:aabv-acc}
   If process $p_i$ adds value $v$ to $\ms{bin\_vals}$ then associated with $v$ is a valid certificate $\ms{cert}$ from the previous round.
 \end{lemma}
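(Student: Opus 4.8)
The plan is to trace the exactly two places where a non-faulty process $p_i$ can add a value $v$ to $\ms{bin\_vals}$ inside $\lit{abv-broadcast}$ (Algorithm~\ref{alg:helper}): line~\ref{line:bvdel1}, reached after $p_i$ collects $h(d_r)$ distinct signed \BVALECHO messages for $v$, and line~\ref{line:bvdel2}, reached after $p_i$ receives a single \BVALREADY message carrying $v$. For each of these, the goal is to exhibit the valid round-$(r-1)$ certificate that gets associated with $v$ when it is inserted.

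First I would treat the generic case $r\geq 3$ (and $r=2$ with $v=0$). Here Rule~2 of Algorithm~\ref{alg:prot} guarantees that every \BVALECHO message not discarded on receipt carries a valid certificate, and by the construction in $\lit{compute-cert}$ (lines~\ref{lin:com-cer}--\ref{lin:com-cer-end}) a certificate is precisely a set of $h(d_r)$ signed messages assembled at the end of round $r-1$; so any such certificate \emph{is} a certificate from the previous round. In the line~\ref{line:bvdel1} case, $p_i$ explicitly sets $\ms{cert}$ to one of the certificates found among the $h(d_r)$ collected \BVALECHO messages, which therefore justifies $v$ with a valid round-$(r-1)$ certificate. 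In the line~\ref{line:bvdel2} case, $p_i$ adopts the certificate piggybacked on the \BVALREADY message; I would argue validity by appealing to the construction at line~\ref{line:bvreadycons}, where a non-faulty sender extracts that certificate from $h(d_r)$ valid \BVALECHO messages, and — for a possibly faulty sender — by noting that the same discarding rules (Rules~1--2 of Algorithm~\ref{alg:prot}, together with the signature/certificate verification machinery) are applied when the \BVALREADY is received, so a message carrying a bogus certificate is dropped before control ever reaches line~\ref{line:bvdel2}.

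Then I would dispatch the two boundary cases. For round $r=1$ the certificate is $\ms{cert}[0]=\emptyset$, which is (vacuously) valid since there is no previous round; this matches $\lit{compute-cert}$ returning $\emptyset$ when $r=1$. For the exception of value $1$ in round $2$, the protocol deliberately permits a $1$-valued \BVALECHO without a certificate — $1$ being the value every process may legitimately hold entering round $2$ — so again the associated "previous-round certificate" is the empty set by convention. In both sub-cases the statement holds with the empty certificate.

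The main obstacle I anticipate is the line~\ref{line:bvdel2} branch with a Byzantine \BVALREADY sender: I must make sure that certificate verification is genuinely invoked on \BVALREADY messages and that an invalid piggybacked $\ms{cert}$ causes the entire message to be discarded rather than partially processed, since otherwise a single faulty process could inject $v$ into $\ms{bin\_vals}$ with a forged certificate. Closing this gap requires citing the discarding rules precisely and invoking unforgeability of signatures, so the conclusion holds except with the usual $\epsilon(\lambda)$ negligible probability.
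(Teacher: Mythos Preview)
Your proposal is correct and follows the same core idea as the paper's proof: the certificate associated with $v$ must be valid because Rule~2 (and Rule~1) of Algorithm~\ref{alg:prot} discards any \BVALECHO or \BVALREADY message lacking one. The paper's proof is a single sentence invoking exactly this discarding rule, whereas you carry out a more explicit case analysis over the two insertion points (lines~\ref{line:bvdel1} and~\ref{line:bvdel2}) and the boundary rounds; your extra care about a Byzantine \BVALREADY sender is a detail the paper glosses over, but the underlying argument is the same.
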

 \begin{proof}
   Since every \BVALECHO and \BVALREADY message without a valid
certificate is discarded, it follows immediately that when a value $v$
is added to $\ms{bin\_vals}$ then $p_i$ has access to a valid
certificate.
   \end{proof}
   \subsubsection{\Mypropertyadj reliable broadcast}
   In this section, we prove the properties of \protocol's reliable broadcast, AARB. In
these proofs, we refer to $p_s$ as the source of the
AARB-broadcast, i.e. the process that sends the \INIT message. We prove AARB-unicity in Lemma~\ref{lem:aarb-uni}, AARB-validity in Lemma~\ref{lem:aarb-val}, AARB-send in Lemma~\ref{lem:aarb-sen}, AARB-Receive in Lemma~\ref{lem:aarb-rec}, AARB-accountability in Lemma~\ref{lem:aarb-acc} and AARB-\myproperty in Lemma~\ref{lem:aarb-aac}.
\begin{lemma}[AARB-Unicity]
  \label{lem:aarb-uni}
  Non-faulty processes AARB-deliver at most one value. 
\end{lemma}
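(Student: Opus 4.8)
The plan is to read the claim directly off the guards in Algorithm~\ref{alg:arb}. A non-faulty process $p_i$ invokes $\lit{AARB-deliver}(v,j)$ in exactly two places: at line~\ref{line:aarb-deliver1}, immediately after broadcasting $(\READY,v,cert_i,j)$ in the handler triggered by receiving $h(d_r)$ signed $(\ECHO,v,j)$ messages; and at line~\ref{line:aarb-read2}, immediately after broadcasting $(\READY,v,cert_i,j)$ in the handler triggered by receiving a single valid $(\READY,v,cert,j)$. Both handlers are entered only when $p_i$ has \emph{not yet sent} a $\READY$ message, and entering either one causes $p_i$ to send exactly one $\READY$. Since a non-faulty process never ``un-sends'' a message, this guarded step can fire at most once over the whole AARB instance; hence $p_i$ broadcasts at most one $\READY$ and, because every $\lit{AARB-deliver}$ call sits in the same atomic step as such a broadcast and carries the very value in that $\READY$, $p_i$ AARB-delivers at most one value.

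The single point that deserves a sentence of care is the interaction with the runtime committee updates: the detection machinery may shrink $N$ and lower the threshold $h(d_r)$ between the two handlers (line~\ref{line:update-com-end} of Algorithm~\ref{alg:helper}). This does not affect the argument, because the send-once invariant is phrased purely in terms of having broadcast a $\READY$, not in terms of the current value of $h(d_r)$; committee updates only change which certificates are accepted as valid \emph{inside} a handler, never whether a second $\READY$ may be emitted. Consequently the conclusion is unaffected by $d_r$ growing during the execution, and it holds for every voting threshold $h\in(n/2,n]$.

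I expect this to be the easiest of the AARB properties, with essentially no obstacle: the whole content is (i) locating the two invocation sites of $\lit{AARB-deliver}$ and (ii) observing that the $\READY$-broadcast they ride on is self-disabling. The genuinely delicate reasoning — that a value \emph{is} eventually delivered, and that \emph{all} non-faulty processes deliver the \emph{same} value — is postponed to AARB-Send (Lemma~\ref{lem:aarb-sen}) and AARB-Receive (Lemma~\ref{lem:aarb-rec}), which will need the quorum-intersection and certificate arguments that Unicity does not.
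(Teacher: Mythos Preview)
Your proof is correct and follows the same approach as the paper, which dispatches the lemma in a single line (``By construction all non-faulty processes AARB-deliver at most one value''). You have simply unpacked what ``by construction'' means: the $\READY$-not-yet-sent guard on both delivery sites is self-disabling, and your extra paragraph on $d_r$ is a harmless overabundance of caution.
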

\begin{proof}
  By construction all non-faulty processes AARB-deliver at most one value.
\end{proof}
\begin{lemma}[AARB-Validity]
  \label{lem:aarb-val}
  If non-faulty process $p_i$ AARB-delivers $v$, then $v$ was AARB-broadcast by $p_s$. 
\end{lemma}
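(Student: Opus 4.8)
The plan is to trace back, from the event that a non-faulty process $p_i$ AARB-delivers $v$ with source $p_s$, to a non-faulty process that sent an \ECHO message for $v$, and then to invoke the fact that such a process only echoes a value it actually received inside an \INIT message from $p_s$. First I would enumerate the only two places in Algorithm~\ref{alg:arb} where a non-faulty process invokes $\lit{AARB-deliver}(v,s)$: at line~\ref{line:aarb-deliver1}, after collecting $h(d_r)$ distinct signed $(\ECHO,v,s)$ messages; and at line~\ref{line:aarb-read2}, after receiving a $(\READY,v,\ms{cert},s)$ whose certificate passes $\lit{verify}$ at line~\ref{line:aarb-read1}. In the second case, by the construction at line~\ref{line:aarbconstructcert1} together with the certificate-validity rule, $\ms{cert}$ contains $h(d_r)$ distinct signed $(\ECHO,v,s)$ messages (after filtering out the at most $d_r$ removed processes). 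Hence in both cases there exist $h(d_r)$ distinct processes that each produced a correctly signed $(\ECHO,v,s)$ message, and since signatures cannot be forged these signatures genuinely originate from those processes.

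Next I would count. By the bounds the adversary is assumed to respect (Corollary~\ref{cor:impteragr}, restated for the running threshold in Section~\ref{sec:overview}), the active threshold satisfies $h(d_r)\le n-q-t$ and $h(d_r)>\tfrac{n+d+t}{2}$; combining these yields $h(d_r)>q+d+2t$, so in particular $h(d_r)\ge q+d+t+1$. Since at most $t+d+q$ processes of the committee are faulty, among the $h(d_r)$ distinct senders of $(\ECHO,v,s)$ at least one, call it $p_k$, is non-faulty. A non-faulty process broadcasts $(\ECHO,v,s)$ only in the \textbf{Upon} clause at line~\ref{line:aarbecho2}, i.e. only after receiving $(\INIT,v)$ from $p_s$. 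Therefore $p_s$ sent $(\INIT,v)$, which is exactly AARB-broadcasting $v$. Note this does not claim $p_s$ AARB-broadcast \emph{only} $v$: a Byzantine source may equivocate on \INIT, but that a non-faulty process delivers at most one value is the separate property AARB-Unicity (Lemma~\ref{lem:aarb-uni}).

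I expect the main obstacle to be the bookkeeping around the dynamically shrinking committee: the counting step must be carried out with the values of $n,h,q,d,d_r$ in force at the moment the $h(d_r)$ \ECHO messages are tallied, and one must check that removing detected deceitful processes from $N$ preserves the inequality $h(d_r)>q+d+t$. I would dispatch this with the observation used throughout Section~\ref{sec:overview}: each removal decrements $n$, the remaining deceitful count $d-d_r$, and the threshold $h(d_r)$ in lockstep, so that the restated bounds $h(d_r)<n-q-t-d_r$ and $h(d_r)>\tfrac{n+d+t}{2}-d_r$ keep holding, and the derivation above goes through verbatim with the updated quantities. A minor side remark to include is that a bogus or under-signed certificate carried in a forged \READY is rejected by $\lit{verify}$ at line~\ref{line:aarb-read1}, so it can never trigger $\lit{AARB-deliver}$ at a non-faulty process; hence the two enumerated cases are genuinely exhaustive.
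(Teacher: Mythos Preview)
Your argument is correct and follows essentially the same route as the paper's proof: from $\lit{AARB\mbox{-}deliver}(v,s)$ you recover $h(d_r)$ distinct signed $(\ECHO,v,s)$ messages, argue that at least one signer is non-faulty, and conclude that this signer must have received $(\INIT,v)$ from $p_s$. The paper compresses your two delivery cases into one line and invokes only the bound $d+t<h(d_r)$, whereas you derive and use the stronger $h(d_r)>q+d+t$; your version is arguably more careful, since the paper's formal definition of benign faults does not forbid a benign process from sending a spurious \ECHO, so counting all $q+d+t$ faulty senders (as you do) is the safer bookkeeping.
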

\begin{proof}
  Process $p_i$ AARB-delivers $v$ if it receives $h(d_r)$ messages $\langle
\ECHO, v,\cdot,\cdot\rangle$. Non-faulty processes only send an \ECHO message
for $v$ if they receive $\langle \INIT,\,v\rangle$. Thus, since $d+t<h(d_r)$,
$p_s$ AARB-broadcast $v$ to at least one non-faulty process.
\end{proof}
\begin{lemma}[AARB-Send]
  \label{lem:aarb-sen}
  If $p_s$ is non-faulty and AARB-broadcasts $v$, then all non-faulty
processes eventually AARB-deliver $v$.
\end{lemma}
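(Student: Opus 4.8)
The plan is to run the standard reliable-broadcast argument on top of Basilic's detect-and-remove mechanism. First I would use that $p_s$ is non-faulty: it executes line~\ref{line:aarbecho1} once, broadcasting the single message $(\INIT,v)$ to every process, and since channels are reliable every non-faulty process eventually receives it. Because this AARB instance has $p_s$ as its fixed source and messages are unforgeable, a non-faulty process never acts on an $\INIT$ message other than $(\INIT,v)$ from $p_s$; hence each non-faulty process broadcasts $(\ECHO,v,s)$ exactly once (line~\ref{line:aarbecho2}) and no non-faulty process ever sends an $\ECHO$ for any other value.

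Second, I would show that every non-faulty process eventually gathers $h(d_r)$ distinct signed $(\ECHO,v,s)$ messages. There are at least $n-t-d-q$ non-faulty processes, each broadcasting $(\ECHO,v,s)$, so these messages reach every non-faulty process. For the deceitful processes I would invoke the BDB model: a deceitful process can deviate only by sending conflicting messages, and any such deviation in this instance consists of signed messages carrying distinct values sent to disjoint sets of non-faulty processes, which---once re-disseminated by the AARB relay rule---are cross-checked (lines~\ref{line:crosscheck1}--\ref{line:crosscheck2}), produce a PoF, and through $\lit{update-committee}$ cause every non-faulty process to remove the culprit, decrementing the committee size and incrementing $d_r$, so that $h(d_r)=h_0-d_r$ drops in step. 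Thus, letting $d_1$ be the number of deceitful processes eventually removed and $d_2=d-d_1$ the number that end up following the protocol (and therefore broadcast $(\ECHO,v,s)$), every non-faulty process eventually holds at least $(n-t-d-q)+d_2=n-t-q-d_1\ge h_0-d_1=h(d_1)$ signed $(\ECHO,v,s)$ messages, using the liveness bound $h_0\le n-q-t$ (Corollary~\ref{cor:impter}); hence the current threshold is met.

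Third, I would close the broadcast: the first non-faulty process to hold $h(d_r)$ such $\ECHO$ messages constructs $cert_i$ (line~\ref{line:aarbconstructcert1}), broadcasts $(\READY,v,cert_i,s)$ (line~\ref{line:aarb-broadcastReady1}), and AARB-delivers $(v,s)$ (line~\ref{line:aarb-deliver1})---and, by AARB-Unicity (Lemma~\ref{lem:aarb-uni}), nothing else. Any non-faulty process that has not yet delivered, upon receiving a valid $(\READY,v,\cdot,s)$ message, passes the certificate check, rebroadcasts its own $\READY$, and AARB-delivers $(v,s)$ (lines~\ref{line:aarb-read1}--\ref{line:aarb-read2}). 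So every non-faulty process eventually AARB-delivers $v$. The main obstacle, and the step that departs from the classical proof, is the second one: one must argue that deceitful processes cannot permanently starve the $\ECHO$ count for $v$; this works precisely because, in the BDB model, their only permitted deviation is to send conflicting messages, and every such message, once relayed, exposes its sender and triggers its removal along with a matching decrease of the voting threshold $h(d_r)$.
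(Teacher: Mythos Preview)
Your proof is correct and follows essentially the same approach as the paper's: both argue that non-faulty processes all echo $v$, that deceitful processes either follow the protocol (contributing an echo for $v$) or send conflicting messages that are exposed via the timer-triggered relay and removed through $\lit{update\text{-}committee}$, and that the liveness bound $h_0\le n-q-t$ then guarantees the (updated) threshold is met. Your parameterization via $d_1,d_2$ and the explicit inequality $n-t-q-d_1\ge h_0-d_1=h(d_1)$ make the counting more transparent than the paper's two-case split, but the underlying argument is the same.
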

\begin{proof}
  % Let us assume first $f=0$. 
  Deceitful processes either broadcast $v$
or multicast $v'$ to a partition $A$ and $v$ to a partition $B$. In
the first case (in which all deceitful behave like non-faulty processes),
since the number of benign and Byzantine processes is $q+t<n-h(d_r)$ it follows that
at least $h(d_r)$ non-faulty processes will echo $v$, being that
enough for all processes to eventually AARB-deliver it.

Consider instead some $d_r\leq d+t$ deceitful processes behave deceitful
echoing different messages to two different partitions each containing
at least one non-faulty process. Then when the timer expires and
non-faulty processes exchange their delivered \ECHO messages, all
processes will update their committee removing the $d_r$ detected
deceitful. Thus, since processes also recalculate the thresholds and
recheck them after updating the committee, this case becomes the
aforementioned case where no deceitful process behaves deceitful. The
same occurs if one of the partitions AARB-delivers a value while the
other does not and reaches the timer (Lemma~\ref{lem:aarb-rec}).
\end{proof}
\begin{lemma}[AARB-Receive]
  \label{lem:aarb-rec}
  If a non-faulty process AARB-delivers $v$ from $p_s$, then all non-faulty processes eventually AARB-deliver $v$ from $p_s$.
\end{lemma}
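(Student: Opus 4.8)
The plan is to prove the statement by the classical "totality" argument for reliable broadcast, adapted to the accountable committee and the runtime threshold $h(d_r)$. First I would note that when a non-faulty process $p_i$ AARB-delivers $v$ (line~\ref{line:aarb-deliver1} or~\ref{line:aarb-read2}) it has, immediately before, broadcast a message $(\READY,v,\ms{cert}_i,s)$ whose certificate $\ms{cert}_i$ it verified to be valid: by construction $\ms{cert}_i$ contains $h(d_r)$ signed $(\ECHO,v,s)$ messages from distinct processes not in $p_i$'s current set of removed deceitful processes (assembled at line~\ref{line:aarbconstructcert1}, or carried inside the valid \READY that $p_i$ received at lines~\ref{line:aarb-read1}--\ref{line:aarb-read2}). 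Because channels are reliable and authenticated, every non-faulty process $p_j$ eventually receives $(\READY,v,\ms{cert}_i,s)$; and since $\lit{update-committee}$ re-broadcasts every new PoF (Alg.~\ref{alg:helper}) and the detected-deceitful set is finite and only grows, $p_j$ eventually also holds every PoF $p_i$ used, so from $p_j$'s viewpoint $\ms{cert}_i$ still carries $h_0-d_r^{(j)}=h(d_r^{(j)})$ valid signatures — each removal decrements $n$ and the threshold in lockstep, so filtering the extra removed signers leaves exactly the new threshold met. Hence, by lines~\ref{line:aarb-read1}--\ref{line:aarb-read2}, if $p_j$ has not yet sent a \READY it broadcasts one and AARB-delivers $v$; it remains only to rule out that $p_j$ has already sent a \READY for some $v'\neq v$.

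Second I would show $v'=v$ by quorum intersection. A non-faulty process broadcasts a \READY for $v'$ only while holding (and re-broadcasting) a certificate of $h(d_r')$ signed $(\ECHO,v',s)$ messages from distinct non-removed processes. Reasoning at a point by which all PoFs have propagated, all non-faulty processes share the same committee $N^\star$ of size $n-d^\star$ with $d^\star\le d$ removed; re-validated against $N^\star$, both the certificate for $v$ and the one for $v'$ exhibit $h_0-d^\star$ distinct signers in $N^\star$, so their signer sets intersect in at least $2(h_0-d^\star)-(n-d^\star)=2h_0-n-d^\star$ processes of $N^\star$. The standing bound $h_0>\frac{n+d+t}{2}$ gives $2h_0-n-d^\star>t+d-d^\star$, and the only members of $N^\star$ that can have signed an $\ECHO$ for $v$ to one quorum and for $v'$ to another are the Byzantine and still-undetected deceitful ones (a correct process sends an $(\ECHO,\cdot,s)$ message at most once by the guard of line~\ref{line:aarbecho2}, and a benign process never sends conflicting messages), numbering at most $t+d-d^\star$. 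Since the intersection exceeds that count, we reach a contradiction, so $v'=v$; $p_j$ has in fact already AARB-delivered $v$, and in all cases every non-faulty process eventually AARB-delivers $v$ from $p_s$. The equivocation-by-the-source sub-case is absorbed, because a source sending $(\INIT,v)$ and $(\INIT,v')$ is itself detected by $\lit{check-conflicts}$ (lines~\ref{line:crosscheck1}--\ref{line:aarbupdcom1}) and removed, after which no non-faulty process echoes its later proposals; the remaining easy sub-cases (source correct, or only benign/crash faults) follow from Lemmas~\ref{lem:aarb-uni},~\ref{lem:aarb-val} and~\ref{lem:aarb-sen}.

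The main obstacle I expect is the bookkeeping around the runtime-varying threshold rather than the intersection count itself. I must justify the invariant that, across $\lit{update-committee}$ calls, a once-valid AARB certificate stays valid in every honest view, which needs: (i) monotonicity and eventual agreement of the removed set, from PoF re-broadcast and the finiteness of $d$; (ii) that each removal drops $n$ and $h(d_r)$ by exactly one, so that $2h(d_r)-n-(d-d_r)>t$ is preserved; and (iii) that a non-faulty process retains a received \READY even when it cannot yet validate its certificate, so that a later committee update lets it re-check and deliver. With that invariant in place the argument collapses to the $d_r=0$ case and the two paragraphs above go through.
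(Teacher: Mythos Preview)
Your proposal is correct and follows the same two-step skeleton as the paper's proof: quorum intersection via $d+t<2h(d_r)-n$ to rule out a conflicting \READY, then propagation of the \READY certificate to force every non-faulty $p_j$ to AARB-deliver $v$. The paper dispatches both steps in three sentences, whereas you flesh out the case analysis (what if $p_j$ already sent a \READY) and, more importantly, the invariant that a once-valid certificate survives asynchronous calls to $\lit{update-committee}$; this bookkeeping is real and the paper simply relies on the global framework established around $h(d_r)=h_0-d_r$ rather than re-arguing it here, so your additional care is justified but not a different approach.
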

\begin{proof}
  First, since $d+t<2h(d_r)-n$ it follows that deceitful and Byzantine processes can
not cause two non-faulty processes to AARB-deliver different
values (analogously to Lemma~\ref{lem:impagr}). Then, before a process $p_i$ AARB-delivers a value $v$, it
broadcasts a \READY message containing the certificate that justifies
delivering $v$. Thus, when $p_j$ receives that \READY message, it also
AARB-delivers v.
\end{proof}
 \begin{lemma}[AARB-Accountability]
   \label{lem:aarb-acc}
   If two non-faulty processes $p_i$ and $p_j$ AARB-deliver $v$ and $v'$,
respectively, such that $v\neq v'$, then all non-faulty processes
eventually receive PoFs of the deceitful behavior of at least $2h(d_r)-n$ processes (including $p_s$).

% \ARP{Alternative statement, the AARB-broadcast protocol is $k$-accountable for $t=2h(d_r)-n$.}
\end{lemma}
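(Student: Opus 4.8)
The plan is to run the familiar quorum‑intersection argument (the one behind Lemma~\ref{lem:impagr}), but to extract a \emph{certified} intersection that every non‑faulty process can later reconstruct on its own. First I would unfold the two deliveries: $p_i$ AARB-delivers $v$ only after it holds a certificate $C_v$ consisting of $h(d_r)$ distinct signed $(\ECHO,v,s)$ messages — either assembled directly (line~\ref{line:aarbconstructcert1}) or received inside a valid \READY (lines~\ref{line:aarb-read1}--\ref{line:aarb-read2}) — and symmetrically $p_j$ holds a certificate $C_{v'}$ of $h(d_r)$ distinct signed $(\ECHO,v',s)$ messages. Let $S_v,S_{v'}$ be the two sets of signers. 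A non‑faulty process sends an \ECHO for $p_s$'s instance only for the single value it delivered in $p_s$'s \INIT, so it belongs to at most one of $S_v,S_{v'}$; hence every process in $S_v\cap S_{v'}$ is deceitful or Byzantine, having signed the two conflicting messages $(\ECHO,v,s)$ and $(\ECHO,v',s)$. By inclusion--exclusion $|S_v\cap S_{v'}|\ge |S_v|+|S_{v'}|-n\ge 2h(d_r)-n$, and pairing the two conflicting \ECHO signatures of each such process gives a PoF.

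Next I would handle propagation. When $p_i$ AARB-delivered it broadcast a \READY message carrying $C_v$, and likewise $p_j$ broadcast a \READY carrying $C_{v'}$; moreover, by the timeout rule of Algorithm~\ref{alg:arb}, non‑faulty processes periodically rebroadcast all of their delivered signed \INIT and \ECHO messages. Hence after GST every non‑faulty process eventually receives both certificates (or the underlying \ECHO messages), runs $\lit{check-conflicts}$ on them (lines~\ref{line:crosscheck1}--\ref{line:crosscheck2}), and so extracts PoFs for the $\ge 2h(d_r)-n$ processes of $S_v\cap S_{v'}$; and since $\lit{update-committee}$ re‑broadcasts every freshly discovered PoF, these PoFs reach \emph{all} non‑faulty processes.

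It remains to implicate the source. A valid certificate contains $h(d_r)>d+t$ signatures (the standing assumption $h_0>\tfrac{d+t+n}{2}$, used exactly as in Lemma~\ref{lem:aarb-val}), so $S_v$ contains some non‑faulty $p_a$ and $S_{v'}$ some non‑faulty $p_b$; $p_a$ signed $(\ECHO,v,s)$ only after delivering $p_s$'s signed $(\INIT,v)$, and $p_b$ only after delivering $p_s$'s signed $(\INIT,v')$. Being non‑faulty, $p_a$ and $p_b$ retain these signed \INIT messages and rebroadcast them on timeout, so eventually every non‑faulty process holds $p_s$'s two distinct proposals for the same AARB instance — conflicting by definition — and $\lit{check-conflicts}$ turns them into a PoF against $p_s$, again propagated by $\lit{update-committee}$. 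Combining, every non‑faulty process eventually receives PoFs implicating at least $2h(d_r)-n$ processes including $p_s$. The case $d_r>0$ is dealt with as in the other AARB proofs: removing detected deceitful processes lowers $n$ and $h(d_r)$ in lockstep, so $2h(d_r)-n$ is preserved. The main obstacle I expect is not the counting, which is the standard quorum intersection, but making the ``everyone eventually reconstructs the PoFs'' step airtight — in particular checking that the timeout rebroadcast rule of Algorithm~\ref{alg:arb} together with the \POF re‑broadcast inside $\lit{update-committee}$ deliver \emph{both} sides of each \ECHO conflict and $p_s$'s two signed \INIT messages to \emph{all} non‑faulty processes after GST, without this racing against further committee updates.
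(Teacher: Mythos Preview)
Your proposal is correct and follows essentially the same approach as the paper: both rely on the quorum-intersection counting of Lemma~\ref{lem:impagr} applied to the two $h(d_r)$-sized \ECHO certificates to exhibit at least $2h(d_r)-n$ double-signers, on the \READY broadcast of those certificates for propagation, and on the presence of a non-faulty signer in each certificate to trace conflicting signed \INIT messages back to $p_s$. Your write-up is simply more explicit about the propagation mechanics (timeout rebroadcast, $\lit{check-conflicts}$, PoF re-broadcast in $\lit{update-committee}$) and about the $d_r>0$ bookkeeping than the paper's own terse proof.
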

\begin{proof}
  Non-faulty processes broadcast the certificates of the values they
AARB-deliver, containing $h(d_r)$ signed \ECHO messages from distinct
processes. Therefore, analogous to Lemma~\ref{lem:impagr}, at least
$2h(d_r)-n$ processes must have sent conflicting $\ECHO$ messages, and
they will be caught upon cross-checking the conflicting
certificates. Also, some non-faulty processes must have received
conflicting signed $\INIT$ messages from $p_s$ in order to reach the
threshold $h(d_r)$ to AARB-deliver conflicting messages, meaning that
$p_s$ is also faulty.
\end{proof}
 \begin{lemma}[AARB-\Myproperty]
   \label{lem:aarb-aac}
   The \protocol's AARB protocol satisfies \myproperty.
\end{lemma}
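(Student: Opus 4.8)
The plan is to mirror the argument of Lemma~\ref{lem:aabc-aac}, specialized to a single reliable broadcast instance (so there are no rounds to worry about). I would show that if a faulty process $p_i$ sends two conflicting messages to two subsets $A,B\subseteq N$, each containing at least one non-faulty process, then either all non-faulty processes eventually AARB-deliver a value, or they all eventually receive a PoF implicating $p_i$, remove it from the committee, and then AARB-deliver. First I would enumerate the only messages of Algorithm~\ref{alg:arb} that can be conflicting: conflicting $\INIT$ messages from the source $p_s$, conflicting signed $\ECHO$ messages for the same source index, or a $\READY$ carrying a certificate that conflicts with delivered $\ECHO$ messages; in each case the conflict is locally checkable by $\lit{check-conflicts}$ once the two messages are both present at a non-faulty process.

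Next I would invoke Rule~1 of Algorithm~\ref{alg:arb}: whenever the local timer (initially $\Delta$) expires, every non-faulty process rebroadcasts its currently delivered signed $\INIT$ and $\ECHO$ messages and resets the timer. Hence, within a bounded number of timeout periods after GST, every non-faulty process holds every signed message delivered by any non-faulty process. If the $d_r\le d+t$ processes behaving deceitfully in this instance split their $\ECHO$ (or $\INIT$) values across $A$ and $B$ so that neither value reaches $h(d_r)$, then upon receiving the relayed messages each non-faulty process obtains a PoF for each such process and invokes $\lit{update-committee}$, which broadcasts the new PoFs, removes the detected deceitful processes, shrinks $N$ and $n$, and lowers the threshold to the updated $h(d_r)$. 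Because every non-faulty process rebroadcasts every PoF it learns and PoFs are transferable, all non-faulty processes eventually converge on the same removed set, hence on the same committee and threshold.

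Then I would close the loop exactly as in AARB-Send (Lemma~\ref{lem:aarb-sen}): once the deceitful processes contributing conflicting messages have been removed, the instance behaves like one in which no process behaves deceitfully, and since the remaining benign and Byzantine processes number at most $q+t<n-h(d_r)$ (the restatement of Corollary~\ref{cor:impteragr} applied to the updated threshold), at least $h(d_r)$ non-faulty processes echo the same value, so every non-faulty process eventually collects $h(d_r)$ matching $\ECHO$ messages and AARB-delivers. The remaining case is that one partition AARB-delivers a value $v$ before the conflict is resolved while the other has not; there I would invoke AARB-Receive (Lemma~\ref{lem:aarb-rec}): the delivering process has already broadcast a $\READY$ message with a valid certificate for $v$, so all non-faulty processes eventually AARB-deliver $v$. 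I would also remark that starting with $d_r>0$ already-detected deceitful processes only lowers all thresholds by the same $d_r$, so it does not obstruct the argument.

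The main obstacle I anticipate is the bookkeeping around the dynamically shrinking committee: I must check that the invariant $h(d_r)\le n-q-t-d_r$ is preserved under removals, so that the ``no deceitful behaves deceitfully'' reduction genuinely re-enables delivery rather than merely deferring the stuck state, and that the interleaving of timer resets triggered inside $\lit{update-committee}$ (line~\ref{line:resetcurt} of Algorithm~\ref{alg:helper}) with the periodic relays does not starve any non-faulty process of the chance to both relay and re-check termination after GST.
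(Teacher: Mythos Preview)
Your proposal is correct and follows essentially the same approach as the paper's proof: both argue by cases on whether zero, one, or neither of the non-faulty processes in the two partitions has already AARB-delivered, use the timer-based relay of signed $\INIT$/$\ECHO$ messages (Rule~1) to expose conflicts and build PoFs, invoke $\lit{update-committee}$ to shrink the threshold, and then reduce to the no-deceit case via AARB-Send/AARB-Receive reasoning. The paper's version is terser and additionally spells out one detail you leave implicit: when one side has delivered and broadcast a $\READY$ certificate that contains a signature from a now-detected deceitful process, the receiving side filters that signature out and the certificate still meets the lowered $h(d_r)$, so delivery goes through; also, the paper explicitly qualifies the termination conclusion with ``if the source is non-faulty,'' which you should state rather than leave implicit in your appeal to Lemma~\ref{lem:aarb-sen}.
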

\begin{proof}
  We prove here that if a number of faulty processes send conflicting messages to two
subsets $A,\,B\subseteq N$, each containing at least one non-faulty
process, then:
\begin{itemize}
\item eventually all non-faulty processes terminate without removing the faulty processes,
  or
\item eventually all non-faulty processes receive a PoF for these faulty processes and remove them from the committee, after which, if the source is non-faulty, they terminate.
\end{itemize}

W.l.o.g. we consider just $p_A\in A$ and $p_B\in B$. If they both
terminate despite the conflicting messages, we are finished. Suppose
instead a situation in which only one of them, for example $p_A$,
terminated AARB-delivering a value $v$. Then $p_A$ broadcast a \READY
message with enough $h(d_r)$ \ECHO messages in the certificate
$\ms{cert}$ for $p_B$ to also AARB-deliver $v$ and terminate. Let us
consider w.l.o.g. only one faulty process $p_i$. If a signature from
$p_i$ in $\ms{cert}$ conflicts with a local signature from $p_i$
stored by $p_B$, then $p_B$ constructs and broadcasts a PoF for $p_i$,
and then updates the committee and the
threshold. Then, it rechecks the certificate filtering out the
signature by $p_i$, which would cause $p_B$ to also AARB-deliver $v$
(since the threshold also decreased accordingly).

Suppose neither $p_A$ nor $p_B$ has terminated yet. Then, when the
timer is reached and they both broadcast the \INIT and \ECHO messages
they delivered, they will both be able to construct a PoF for $p_i$,
after which they update the committee and the threshold. Then, if the
source was non-faulty, non-faulty processes can terminate analogously
to the previous case.
\end{proof}

\subsubsection{\protocol binary consensus}
We focus in this section on the properties of \protocol's binary
consensus, AABC. We first prove that if all non-faulty processes
start a round $r$ with the same estimate $v$, then all non-faulty processes
decide $v$ in round $r$ or $r+1$. Then, we prove AABC-agreement in Lemma~\ref{lem:aabc-agr}, AABC-strong validity in Lemma~\ref{lem:aabc-strongva} and AABC-validity as Corollary~\ref{cor:aabc-val} of Lemma~\ref{lem:aabc-strongva}, AABC-\myproperty in Lemma~\ref{lem:aabc-aac}, AABC-termination in Lemma~\ref{lem:aabc-ter}, and AABC-accountability in Lemma~\ref{lem:aabc-acc}. This thus makes AABC the first \mypropertyadj binary consensus protocol, as we show in Theorem~\ref{thm:aabc-cons}.
\begin{lemma}
  \label{lem:aabc-aux}
  Assume that each non-faulty process begins round $r$ with the
estimate $v$. Then every non-faulty process decides $v$ either at the
end of round $r$ or round $r+1$.
\end{lemma}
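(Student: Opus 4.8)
The plan is to trace the common estimate $v$ as it propagates through Phase~1, Phase~2 and the decision phase of round~$r$, and then, if the round's parity does not match $v$, to replay the same argument once more for round~$r+1$, where the parity is forced to agree with $v$. First I would establish that every non-faulty process ends Phase~1 of round~$r$ with $\ms{bin\_vals}[r]=\{v\}$: since all non-faulty processes invoke $\lit{abv-broadcast}$ with estimate $v$ (line~\ref{line:abv-broadcast}), ABV-Justification (Lemma~\ref{lem:aabv-jus}) rules out any non-faulty process ever ABV-delivering a value $v'\neq v$, while ABV-Termination (Lemma~\ref{lem:aabv-ter}) guarantees each non-faulty process ABV-delivers at least one value; hence $\ms{bin\_vals}[r]=\{v\}$ at every non-faulty process.

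Next I would show that every non-faulty process broadcasts $\ECHO[r]$ carrying $v$ in Phase~2. At line~\ref{line:delcoord} a coordinator's value $w$ is adopted only when additionally $w\in\ms{bin\_vals}[r]=\{v\}$, which forces $w=v$; in the opposite branch $\ms{aux}\gets\ms{bin\_vals}[r]=\{v\}$. Either way $\ms{aux}=\{v\}$, so the $\ECHO[r]$ message carries $v$. It follows that $v$ is the only value for which a non-faulty process can ever collect $h(d_r)$ distinct $\ECHO[r]$ messages: only faulty processes can emit an $\ECHO[r]$ with a value other than $v$, and there are fewer than $h(d_r)$ of them still in the committee, since $d-d_r+t<h_0-d_r=h(d_r)$ (using $d+t<h_0$, which holds because the protocol is run with $d+t<2h_0-n\le h_0$, cf.\ Corollary~\ref{cor:impagr}). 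Meanwhile $v$ is echoed by every non-faulty process, and once the deceitful processes that withhold or conflict are detected and removed through the timer-driven re-broadcast and committee update (Rule~\ref{item:timer} and $\lit{update-committee}$), the threshold drops to $h_0-d\le n-q-t-d$, i.e.\ to at most the number of non-faulty processes, so the wait at line~\ref{line:callcomputeval} eventually completes; and $\lit{comp-vals}$ can then only return $\{v\}$. Hence every non-faulty process obtains $\ms{vals}=\{v\}$ and sets $\ms{est}\gets v$ at line~\ref{line:dec1}.

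Finally I would split on parity. If $v=(r\bmod 2)$, then at line~\ref{line:decide} every non-faulty process that has not decided previously decides $v$, i.e.\ at the end of round~$r$. If $v\neq(r\bmod 2)$, no decision is taken, but since $|\ms{vals}|=1$ every non-faulty process keeps $\ms{est}=v$ and therefore begins round~$r+1$ with estimate $v$; as $(r+1)\bmod 2=v$ now, rerunning the two steps above for round~$r+1$ again yields $\ms{vals}=\{v\}$, this time with matching parity, so every non-faulty process decides $v$ at the end of round~$r+1$. The hard part will be the middle step: one must argue simultaneously that the Phase-2 wait genuinely terminates and that no spurious value can accumulate $h(d_r)$ echoes, which is exactly where the voting-threshold inequalities ($d+t<2h_0-n$ for the latter, $q+t\le n-h_0$ for the former) and the timer/committee-update mechanism come into play, and where the reasoning must be carried out in terms of the dynamic $d_r$—runtime removal of detected deceitful processes only lowers $h(d_r)=h_0-d_r$—rather than the static fault counts.
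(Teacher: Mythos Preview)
Your proposal is correct and follows essentially the same approach as the paper: establish $\ms{bin\_vals}[r]=\{v\}$ via the ABV properties, conclude that only $v$ is echoed and hence $\ms{vals}=\{v\}$, then split on parity. The paper invokes ABV-Obligation (Lemma~\ref{lem:aabv-obl}) rather than ABV-Termination to get $v$ into $\ms{bin\_vals}$, and is considerably terser about Phase~2; your treatment is simply a more detailed rendering of the same argument.
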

\begin{proof}
By Lemma~\ref{lem:aabv-obl}, $v$ is eventually delivered to every
non-faulty process. By Lemma~\ref{lem:aabv-jus}, $v$ is the only value
delivered to each non-faulty process. As such, $v$ is the only value in
$\ms{bin\_vals}$ and the only value echoed by non-faulty processes,
since deceitful processes that prevent termination are removed from the
committee when the timer expires (and the threshold is updated). This
means that $v$ will be the only value in $\ms{vals}$. If $v=r\mod
2$ then all non-faulty processes decide $v$. Otherwise, by the same
argument every non-faulty process decides $v$ in round $r+1$.
\end{proof}
\begin{lemma}[AABC-Agreement]
  \label{lem:aabc-agr}
  If $d+t\leq 2h-n$, no two non-faulty processes decide different values.
\end{lemma}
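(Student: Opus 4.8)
The plan is to follow the classical quorum-intersection argument for round-by-round binary Byzantine consensus (as in DBFT/Polygraph), adapted to Basilic's runtime-varying threshold. Let $r^\star$ be the first round in which some non-faulty process decides, and let $v$ be the value it decides; by line~\ref{line:decide} this requires $v = r^\star \bmod 2$. I would establish three facts: (a) every non-faulty process that decides in round $r^\star$ decides $v$; (b) every non-faulty process that does not decide in round $r^\star$ enters round $r^\star+1$ with $\ms{est}=v$; (c) no non-faulty process ever decides a value other than $v$. Fact (c), together with the definition of $r^\star$, is exactly AABC-Agreement.

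The core is the quorum-intersection step. A non-faulty process decides $v$ in round $r$ only after $\lit{comp-vals}$ returns $\{v\}$, i.e., after it holds $h(d_r)$ distinct signed $\ECHO[r]$ messages carrying $v$ (which it then re-broadcasts inside the certificate built by $\lit{compute-cert}$). Consider any other non-faulty process $p_j$ that, in round $r^\star$, exits the wait at line~\ref{line:callcomputeval} with its own set $S_j$ of $h(d_{r,j})$ distinct signed $\ECHO[r^\star]$ messages. After each process filters out its detected deceitful signers, the two $\ECHO[r^\star]$ quorums live inside committees of size $n(d_r)=n_0-d_r$, so they overlap in at least $2h(d_r)-n(d_r)=2h_0-n_0-d_r$ processes; since $d_r\le d$ and $2h_0-n_0\ge d+t$ (the hypothesis $d+t\le 2h-n$, restated for the initial threshold as in Section~\ref{sec:overview}), this overlap strictly exceeds the number $t+(d-d_r)$ of Byzantine or still-undetected-deceitful processes, hence contains a non-faulty process. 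A non-faulty process signs exactly one $\ECHO[r^\star]$ message, so $v\in\ms{vals}_j$; taking $p_j$ to be another decider already yields (a).

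For (b), I split on $|\ms{vals}_j|$ in round $r^\star$. If $|\ms{vals}_j|=1$ then $\ms{vals}_j=\{v\}$ with $v=r^\star\bmod 2$, so by lines~\ref{line:dec1}--\ref{line:decide} $p_j$ sets $\ms{est}\gets v$ and in fact decides $v$ (again case (a)). If $|\ms{vals}_j|=2$, then by line~\ref{line:adoptestimate2} $p_j$ sets $\ms{est}\gets r^\star\bmod 2=v$. In either case $p_j$ begins round $r^\star+1$ with estimate $v$; the processes that did decide in round $r^\star$ also keep $\ms{est}=v$ and keep participating. Hence every non-faulty process starts round $r^\star+1$ with estimate $v$, and Lemma~\ref{lem:aabc-aux} gives that every non-faulty process decides $v$ in round $r^\star$ or $r^\star+1$. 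Since the guard in line~\ref{line:decide} lets each process decide at most once, no non-faulty process decides anything other than $v$, which is (c).

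The main obstacle is making the intersection bound robust to the fact that distinct non-faulty processes may have removed distinct sets of detected deceitful processes at the moment they count $\ECHO$ messages, so that $n$ and $h=h(d_r)$ differ across processes and across time. The argument must use that every removed process is backed by a PoF and is therefore genuinely deceitful, so filtering can only discard faulty signers; then the identity $2h(d_r)-n(d_r)=2h_0-n_0-d_r$ shows that the effective resilience margin $2h(d_r)-n(d_r)-\bigl(t+(d-d_r)\bigr)=2h_0-n_0-d-t$ is invariant and non-negative under the hypothesis. A secondary point to dispatch is Rule~2 of Algorithm~\ref{alg:prot}, which lets value $1$ in round $2$ travel without a certificate: this never produces two conflicting $\ECHO$-quorums, so it does not affect the intersection argument. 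I would also reconcile the non-strict ``$\le$'' in the statement with the strict inequality the intersection step needs, using that $h\in(n/2,n]$ is integral so $d+t\le 2h-n$ forces $2h-n>d+(d_r)+\dots$ wherever strictness is required.
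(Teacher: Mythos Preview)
Your approach is the same as the paper's: a quorum-intersection argument in the first decision round rules out any non-faulty $p_j$ having $\ms{vals}_j=\{w\}$ with $w\ne v$, so every non-faulty process enters round $r^\star+1$ with estimate $v$, and Lemma~\ref{lem:aabc-aux} closes. Your treatment is in fact more careful than the paper's terse version, which does not spell out the $|\ms{vals}|=2$ branch or the cross-process variation in $d_r$.

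Two small corrections. First, applying Lemma~\ref{lem:aabc-aux} at round $r^\star+1$ gives decisions in rounds $r^\star+1$ or $r^\star+2$, not $r^\star+1$; together with the decision at $r^\star$ the range is $r^\star,\,r^\star+1,\,r^\star+2$ (the paper writes ``$r+1$ or $r+2$''). Second, your integrality argument for turning the statement's non-strict $\le$ into the strict inequality the intersection step requires does not work: if $d+t=2h-n$ exactly, integrality yields nothing. This is simply a discrepancy in the lemma's statement---everywhere else in the paper (Corollary~\ref{cor:impagr}, Section~\ref{sec:overview}, Theorem~\ref{thm:consf1}) the bound is the strict $d+t<2h-n$, which is what both your argument and the paper's implicitly use.
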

\begin{proof}
  W.l.o.g. assume that the non-faulty process $p_i$ decides $v$ in round $r$. This
means that $p_i$ received $h(d_r)$ \ECHO messages in round $r$, and
that $vals=\{v\}$. Consider the \ECHO messages received by non-faulty
process $p_j$ in the same round. If $v$ is in $p_j$'s $vals$
then $p_j$ adopts estimate $v$ because $v=r \mod 2$. If instead
$p_j$'s $vals=\{w\},\,w\neq v$, then $p_j$ received $h(d_r)$ \ECHO messages
containing only $w$.

Analogously to Lemma~\ref{lem:impagr}, it is impossible for $p_j$ and for $p_i$ to receive
$h(d_r)$ \ECHO messages for $v$ and for $w$, respectively. We then conclude, by Lemma~\ref{lem:aabc-aux}, that every non-faulty
process decides value $v$ in either round $r+1$ or round $r+2$.
\end{proof}

\begin{lemma}[AABC-Strong Validity]
  \label{lem:aabc-strongva}
  If a non-faulty process decides $v$, then some non-faulty process proposed $v$.
\end{lemma}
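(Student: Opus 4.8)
I would argue by contraposition: assume no non-faulty process proposed $v$, so every non-faulty process proposes and initializes its estimate to $\bar v := 1-v$ (line~\ref{line:estimate}), and show that then no non-faulty process can ever decide $v$. The heart of the argument is an induction on the round number $r\ge 1$ establishing the invariant: at the start of round $r$ every non-faulty process holds estimate $\bar v$, and throughout round $r$ no non-faulty process ever adds $v$ to $\ms{bin\_vals}$ nor obtains $v\in\ms{vals}$.

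For the inductive step I would proceed as follows. Given that all non-faulty processes enter round $r$ with estimate $\bar v$, each one ABV-broadcasts $\bar v$ in line~\ref{line:abv-broadcast}; for $r>1$ I must also check that the certificate such a process carries, namely the one returned by $\lit{compute-cert}$ in the previous round, is a valid certificate for $\bar v$, so that its message is not discarded by the certificate rule. Then ABV-Justification (Lemma~\ref{lem:aabv-jus}) pins $\ms{bin\_vals}[r]=\{\bar v\}$ at every non-faulty process, since the only value ABV-broadcast by a non-faulty process is $\bar v$. Next, the coordinator check at line~\ref{line:delcoord} adopts the coordinator's value only if it lies in $\ms{bin\_vals}[r]$, so every non-faulty process sets $\ms{aux}=\{\bar v\}$ and broadcasts $\ECHO[r]$ with $\bar v$ (line~\ref{line:broadecho}). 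Inspecting $\lit{comp-vals}$ (lines~\ref{line:comp-val-start}--\ref{line:comp-val-end}) at a non-faulty process, its first branch can only return $\ms{aux\_set}=\{\bar v\}$ and its second branch only a subset of $\ms{bin\_vals}[r]=\{\bar v\}$; combining this with the fact (already used for termination in Lemma~\ref{lem:aabc-ter}) that at least $h(d_r)$ non-faulty processes echo $\bar v$, I obtain $\ms{vals}=\{\bar v\}$ at every non-faulty process. Finally, since $|\ms{vals}|=1$, line~\ref{line:dec1} sets the estimate to $\ms{vals}[0]=\bar v$, so the parity-adoption branch (line~\ref{line:adoptestimate2}) is never taken and the estimate is still $\bar v$ entering round $r+1$, carrying the $\lit{compute-cert}$ certificate for $\bar v$. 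A non-faulty process decides only at line~\ref{line:decide}, and only the value $\bar v$; hence it never decides $v$, contradicting the hypothesis that some non-faulty process decided $v$.

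I would also remark that the runtime removal of detected deceitful processes does not interfere with any of this: it only changes $N$, $n$, $d_r$ and $h(d_r)$, while every step above relies solely on ABV-Justification and on the fact that non-faulty processes echo values drawn from their own $\ms{bin\_vals}$. The main obstacle I anticipate is the bookkeeping around $\lit{comp-vals}$ and the certificates: I must ensure that adversarial $\ECHO$ messages carrying $v$ cannot slip $v$ into $\ms{vals}$ (this is exactly where the gating of $\lit{comp-vals}$ on $\ms{aux\_set}$ and on $\ms{bin\_vals}$ is essential), and that the certificate discipline never prevents a non-faulty process from re-broadcasting $\bar v$ in a later round, which is precisely what $\lit{compute-cert}$ guarantees when $\ms{est}=\bar v$.
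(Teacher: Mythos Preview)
Your proposal is correct and follows the standard line of argument for strong validity in binary-value-broadcast-based binary consensus: an induction on rounds, using ABV-Justification to pin $\ms{bin\_vals}[r]=\{\bar v\}$ at every non-faulty process, then the gating of $\lit{comp-vals}$ on $\ms{aux\_set}$ and $\ms{bin\_vals}$ to force $\ms{vals}=\{\bar v\}$, which keeps the estimate at $\bar v$ and prevents any decision on $v$. The paper does not spell out its own proof at all---it simply points to Polygraph's proof of strong validity---and Polygraph's argument is precisely this round-by-round invariant via BV-Justification, so your approach coincides with the one the paper invokes by reference. One minor remark: your parenthetical ``at least $h(d_r)$ non-faulty processes echo $\bar v$'' is stronger than what you actually need (and not quite guaranteed before enough deceitful processes are removed); it suffices that $\lit{comp-vals}$ eventually returns a non-empty set (termination of Phase~2, handled as in Lemma~\ref{lem:aabc-aac}) and that any value it returns lies in $\ms{bin\_vals}[r]=\{\bar v\}$, which is exactly what you go on to argue.
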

\begin{proof}
  This proof is identical to Polygraph's proof of strong validity~\cite{civit2019techrep,civit2021}.
\end{proof}
\begin{corollary}[AABC-Validity]
  \label{cor:aabc-val}
  If all processes are non-faulty and begin with the same value, then
that is the only decision value.
\end{corollary}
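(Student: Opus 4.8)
The plan is to obtain this as an immediate consequence of AABC-Strong Validity (Lemma~\ref{lem:aabc-strongva}). First I would note that under the hypothesis of the corollary the set of non-faulty processes is all of $N$, and by assumption every process enters the binary consensus with the same input value $v$; by line~\ref{line:estimate} of Algorithm~\ref{alg:prot} this input is exactly the process's initial estimate, so ``proposed $v$'' and ``began with $v$'' coincide, and the only value proposed by any non-faulty process is $v$.

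Next I would apply Lemma~\ref{lem:aabc-strongva}: if a non-faulty process decides some value $w$, then $w$ was proposed by some non-faulty process; since every non-faulty process proposed $v$, this forces $w=v$. Hence $v$ is the only value any process can decide. To see that a decision is in fact reached (so the statement is not vacuous), I would invoke Lemma~\ref{lem:aabc-aux} with round $r=1$: since every non-faulty process begins round $1$ with estimate $v$, every non-faulty process decides $v$ by the end of round $1$ or round $2$. Combining the two observations, $v$ is decided and it is the unique decision value.

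I do not anticipate a real obstacle: the corollary is a one-line deduction from strong validity, and all the genuine content lies in Lemma~\ref{lem:aabc-strongva}, which is discharged by the same argument as in Polygraph~\cite{civit2019techrep,civit2021}. The only point worth stating explicitly is the identification of a process's proposed value with its starting estimate, which is immediate from the pseudocode.
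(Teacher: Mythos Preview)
Your proposal is correct and matches the paper's approach: the corollary is stated immediately after AABC-Strong Validity (Lemma~\ref{lem:aabc-strongva}) with no separate proof, so it is intended as the one-line deduction you describe. Your added invocation of Lemma~\ref{lem:aabc-aux} for non-vacuousness is a harmless extra, since the corollary as phrased only asserts uniqueness of the decision value.
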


\begin{lemma}[AABC-Termination]
  \label{lem:aabc-ter}
Every non-faulty process eventually decides on a value.
\end{lemma}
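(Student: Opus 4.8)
The plan is to obtain \emph{AABC-Termination} from the progress argument already used in the proof of Lemma~\ref{lem:aabc-aac}. First I would check that no non-faulty process can block forever inside a round. In Phase~1 this is immediate: by ABV-Termination (Lemma~\ref{lem:aabv-ter}) the set $\ms{bin\_vals}[r]$ eventually becomes nonempty, and the $\Delta$-timer eventually expires, so the wait at line~\ref{line:wuntiphase1} completes; moreover a non-faulty process always holds a valid certificate to $\lit{abv-broadcast}$ at the start of the next round, since for a parity-matching estimate it can build one from the delivered estimates of the just-finished round, exactly as in Lemma~\ref{lem:aabc-aac}. In Phase~2, every value a non-faulty process places in an $\ECHO$ message is eventually ABV-delivered and non-faulty processes broadcast their $\ECHO$ messages; if the Phase~2 timer fires before $h(d_r)$ $\ECHO$ messages are collected, then by Rule~\ref{item:timer} the non-faulty processes exchange their delivered $\ECHO$ messages, build PoFs for the deceitful processes that are withholding or equivocating, remove them via $\lit{update-committee}$, and lower the threshold; since after such removals $h(d_r)$ is no larger than the number of remaining non-faulty processes, those alone suffice to reach $h(d_r)$, so $\lit{comp-vals}$ eventually returns a nonempty set and the wait at line~\ref{line:callcomputeval} completes.

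Next I would bound the number of ``bad'' rounds. The count $d_r$ of detected deceitful processes is monotone non-decreasing and bounded (at most $d+t$ processes ever produce a PoF), so it stabilizes; once it has, all non-faulty processes share the same committee $N$, size $n$, and threshold $h(d_r)$, and because the committee is then fixed and contains at least one non-faulty process, the rotating-coordinator rule at line~\ref{line:rotcoord} guarantees a round $r^\star$ occurring after GST whose coordinator is a non-faulty process that no non-faulty process has removed. (Using a nominally-removed process as coordinator in earlier rounds is harmless, since a single good round suffices.)

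In round $r^\star$ I would then show every non-faulty process adopts the same estimate. After GST the non-faulty coordinator ABV-delivers some value $w$ and broadcasts $(\COORD[r^\star],w)$; by ABV-Uniformity (Lemma~\ref{lem:aabv-uni}) every non-faulty process also ABV-delivers $w$, and because no non-faulty process leaves Phase~1 before its timer has expired -- which, after GST with the $\Delta$ bound in force and with the rebroadcasts of Rule~\ref{item:timer}, is long enough for all messages of the round to arrive -- every non-faulty process has $(\COORD[r^\star],w)\in\ms{sig\_msgs}$ with $w\in\ms{bin\_vals}[r^\star]$ and hence, at line~\ref{line:delcoord}, sets $\ms{aux}=\{w\}$ and $\ECHO[r^\star]$-broadcasts $w$ only. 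Since the number of non-faulty processes is at least $h(d_r)$, each non-faulty process collects $h(d_r)$ $\ECHO[r^\star]$ messages all carrying $w$, so $\lit{comp-vals}$ returns $\{w\}$ and every non-faulty process sets its estimate to $w$ (deciding $w$ outright if $w=r^\star\bmod 2$). Applying Lemma~\ref{lem:aabc-aux}, with all non-faulty processes beginning round $r^\star+1$ with the common estimate $w$, yields that every non-faulty process decides in round $r^\star$, $r^\star+1$, or $r^\star+2$, which is termination.

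The step I expect to be the main obstacle is the timing claim in the previous paragraph: making precise that, after GST, every non-faulty process has received $(\COORD[r^\star],w)$ and ABV-delivered $w$ before it stops waiting in Phase~1. This rests on the timer-and-rebroadcast discipline (a process never exits a phase before its timer expires, and on each expiry it rebroadcasts its current phase state together with any future-round messages it has seen), which post-GST synchronizes the non-faulty processes enough that all of a round's messages propagate within the timeout; I would argue this exactly as in the corresponding liveness proofs of DBFT and Polygraph, from which this protocol is derived.
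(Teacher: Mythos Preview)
Your proposal is correct and follows essentially the same approach as the paper: the paper's proof of AABC-Termination is a single line, ``This proof derives directly from Lemma~\ref{lem:aabc-aac},'' and what you have written is precisely an expansion of the argument in that lemma's proof (no blocking in Phase~1 by ABV-Termination, no blocking in Phase~2 thanks to Rule~\ref{item:timer} and the shrinking threshold, then a good round after GST with a non-faulty coordinator and an appeal to Lemma~\ref{lem:aabc-aux}). Your bound of three rounds ($r^\star$, $r^\star{+}1$, $r^\star{+}2$) is in fact slightly more careful than the paper's ``$r$ or $r{+}1$,'' since Lemma~\ref{lem:aabc-aux} applies only once all non-faulty processes \emph{start} a round with the same estimate.
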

\begin{proof}
  This proof derives directly from Lemma~\ref{lem:aabc-aac}.
\end{proof}
\begin{lemma}[AABC-Accountability]
  \label{lem:aabc-acc}
  % If two non-faulty processes $p_i,\,p_j$ decide $v,v'$,
  % respectively, such that $v\neq v'$, then all non-faulty processes
  % eventually receive PoFs of the deceitful behavior of at least $2h(d_r)-n$ processes. \ARP{Alternative statement:
  If two non-faulty processes output
disagreeing decision values, then all non-faulty processes eventually
identify at least $2h-n$ faulty processes responsible for that
disagreement.
\end{lemma}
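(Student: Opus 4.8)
The plan is to adapt the agreement argument of Lemma~\ref{lem:aabc-agr}: instead of deriving a contradiction from two non-faulty processes deciding differently, I extract from the disagreement a pair of mutually conflicting certificates and turn their overlap into PoFs. Assume non-faulty $p_i$ decides $v$ and non-faulty $p_j$ decides $w=1-v$; by the parity test guarding line~\ref{line:decide} these decisions occur in rounds of opposite parity, hence in distinct rounds, and I set $r=\min(r_i,r_j)$. The process that decides at round $r$ holds $h(d_r)$ signed $\ECHO[r]$ messages all carrying its value, and the other process has reached round $r$ on its way to a later decision; I would argue — exactly as in the proof of Lemma~\ref{lem:aabc-agr} — that at round $r$ that other process holds $h(d_r)$ signed $\ECHO[r]$ messages consistent only with the opposite value (the remaining possibility, that the first value already lies in its $\ms{vals}$, would by Lemma~\ref{lem:aabc-aux} force both processes to decide the same value). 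Since every non-faulty process that decides also broadcasts the justifying certificate (the $\lit{broadcast}(\ms{est},r,i,\ms{cert}[r])$ step inside $\lit{compute-cert}$), and every non-faulty process runs every received certificate through $\lit{check-conflicts}$ and $\lit{update-committee}$, both conflicting round-$r$ certificates eventually reach and are cross-checked by every non-faulty process.

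The counting step is then in the spirit of Lemma~\ref{lem:impagr} and of AARB-Accountability (Lemma~\ref{lem:aarb-acc}): if at round $r$ the current committee has size $n'$ and threshold $h'=h-d'$ with $n'=n-d'$, where $d'$ is the number of deceitful processes removed so far, then two sets of $h'$ distinct signers drawn from $n'$ processes overlap in at least $2h'-n'$ processes, each of which signed two $\ECHO[r]$ messages for conflicting values; such a signed pair is precisely a PoF in the sense of the conflicting-messages definition. Hence every non-faulty process builds PoFs for at least $2h'-n'$ fresh culprits and, together with the $d'$ deceitful processes it had already removed in order to reach that round, identifies at least $d'+(2h'-n')=2h-n$ faulty processes responsible for the disagreement (substituting $h'=h-d'$ and $n'=n-d'$). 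Broadcasting of PoFs and the accountability of the sub-protocols guarantee that all non-faulty processes converge on this set.

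The main obstacle is the first step: pinning down a round at which the fork is visible as two \emph{same-round} $h(d_r)$-certificates for conflicting values. Because phase-2 $\ECHO$ messages may echo a mixed set $\ms{bin\_vals}[r]$ (the second, mixed-value branch of $\lit{comp-vals}$), and because each such value carries, by ABV-Accountability (Lemma~\ref{lem:aabv-acc}), a certificate from round $r-1$, a naive argument can keep pushing the conflict to earlier rounds. I would close this by backward induction on the round number: a conflict at round $r$ either already manifests as two $h(d_r)$-certificates for opposite values directly, or it is inherited from a conflict at round $r-1$; the recursion cannot descend below round $1$ (where $\lit{abv-broadcast}$ does not require a prior certificate), and Lemma~\ref{lem:aabc-aux} excludes the degenerate case where all non-faulty processes ever share an estimate without deciding the same value. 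A second, more routine point is that different non-faulty processes may hold different numbers of removed processes at the moment they cross-check; this is handled by the fact that PoFs and the removed-set are themselves broadcast and hence eventually common to all non-faulty processes, so the bound $2h-n$ is met by every non-faulty process in the limit.
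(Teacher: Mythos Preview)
Your overall plan and the counting in your second paragraph (including the bookkeeping $d'+(2h'-n')=2h-n$ that absorbs already-removed deceitful processes) are exactly the Polygraph accountability argument generalized to a variable threshold, which is what the paper's one-line proof invokes.

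There is, however, a genuine gap in how you pin down the second round-$r$ certificate. In your first paragraph you close the case ``$v$ already lies in $p_j$'s $\ms{vals}$ at round $r$'' by appeal to Lemma~\ref{lem:aabc-aux}. That appeal fails for two reasons: the lemma requires \emph{every} non-faulty process to start the round with the same estimate (you only argue about the single process $p_j$), and its proof goes through ABV-Justification (Lemma~\ref{lem:aabv-jus}), which in turn relies on the fault bound $d+t<2h-n$ --- the very bound whose violation is the premise of accountability. Your backward-induction patch does not help either: descending from $r$ to $r-1,r-2,\ldots$ is the wrong direction, since $p_i$ holds no decision certificate at any round before $r$, and the round-$1$ base case carries no certificate at all. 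The Polygraph argument the paper points to instead follows the certificate chain \emph{forward} from round $r$: for $w$ to be carried into any later round's $\lit{abv-broadcast}$, a valid $\ms{cert}[r]$ for $w$ must exist and be attached, and since $w\neq r\bmod 2$ the only valid shape (per $\lit{compute-cert}$) is $h(d_r)$ signed $\ECHO[r]$ messages for $w$ alone; the first non-faulty process to receive such a $\BVALECHO$ stores those signatures in $\ms{sig\_msgs}$, and cross-checking them against $p_i$'s broadcast decision certificate yields the $\geq 2h-n$ PoFs.
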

\begin{proof}
  This proof is identical to Polygraph's proof of
accountability~\cite{civit2019techrep,civit2021}, with the a
generalization to any threshold $h(d_r)$ analogous to the one we make
in Lemma~\ref{lem:aarb-acc}.
\end{proof}

\begin{theorem}
  \label{thm:aabc-cons}
  \protocol's AABC solves the \mypropertyadj binary consensus problem.
\end{theorem}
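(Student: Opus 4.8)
The plan is to recognize that Theorem~\ref{thm:aabc-cons} is not a new argument but the \emph{conjunction} of the five properties that define the actively accountable binary consensus problem, each of which has already been proved as a separate lemma earlier in this section. So the proof is essentially a bookkeeping step: enumerate the defining properties, cite the matching lemma, and check that the threshold hypotheses line up.

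Concretely, I would proceed property by property. \textbf{Termination}: every non-faulty process eventually decides, which is exactly Lemma~\ref{lem:aabc-ter} (itself derived from Lemma~\ref{lem:aabc-aac}). \textbf{Validity}: if all non-faulty processes begin with the same value, that is the only decision value --- this is Corollary~\ref{cor:aabc-val}, and Lemma~\ref{lem:aabc-strongva} even gives the stronger statement that any decided value was proposed by some non-faulty process. \textbf{Agreement} under the hypothesis $d+t\le 2h-n$: no two non-faulty processes decide different values, which is Lemma~\ref{lem:aabc-agr} (its proof reducing the $h(d_r)$ bookkeeping to the counting argument of Lemma~\ref{lem:impagr} together with Lemma~\ref{lem:aabc-aux}). \textbf{Accountability}: whenever two non-faulty processes decide disagreeing values, all non-faulty processes eventually identify at least $2h-n$ culprits --- Lemma~\ref{lem:aabc-acc}. \textbf{Active accountability}: deceitful behavior cannot prevent liveness --- Lemma~\ref{lem:aabc-aac}. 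Stitching these together yields the theorem.

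The one point that needs a word of care, and which I expect to be the only mildly delicate step, is the reconciliation of notation between the problem statement (a single fixed threshold $h$, taken to be the initial threshold $h_0$) and the internals of the protocol, which operate with the running threshold $h(d_r)=h_0-d_r$ that decreases as detected deceitful processes are removed. Here I would simply invoke the invariants already recorded at the end of Section~\ref{sec:overview}: removing $d_r$ detected deceitful processes preserves both $h(d_r)>\frac{d+t+n}{2}-d_r$ and $h(d_r)\le n-q-t-d_r$, so termination and agreement are maintained throughout, and since $2h(d_r)-n\ge 2h_0-n$ the accountability guarantee stated with $h=h_0$ still holds. With that observation in place, every clause of the definition has been verified, so Basilic's AABC solves the actively accountable binary consensus problem, making it the first such protocol; everything beyond the threshold reconciliation is a direct citation of the preceding lemmas.
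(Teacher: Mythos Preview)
Your proposal is correct and matches the paper's proof exactly: the paper simply cites Corollary~\ref{cor:aabc-val} and Lemmas~\ref{lem:aabc-agr}, \ref{lem:aabc-aac}, \ref{lem:aabc-ter}, \ref{lem:aabc-acc} for validity, agreement, active accountability, termination, and accountability respectively, which is precisely your enumeration. One small slip in your (unnecessary) threshold-reconciliation paragraph: since $h(d_r)=h_0-d_r\le h_0$, the inequality goes the other way, $2h(d_r)-n\le 2h_0-n$; fortunately this step is not needed, because Lemma~\ref{lem:aabc-acc} is already stated with the fixed threshold $h$.
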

\begin{proof}
   Corollary~\ref{cor:aabc-val} and
Lemmas~\ref{lem:aabc-agr},~\ref{lem:aabc-aac},~\ref{lem:aabc-ter},
and~\ref{lem:aabc-acc} prove AABC-validity, AABC-agreement,
AABC-\myproperty, AABC-termination and AABC-accountability,
respectively.
  \end{proof}
\subsubsection{General \protocol protocol}

We gather all the results together in this section, showing the proofs for the general \protocol protocol. We prove \myproperty in Lemma~\ref{lem:aac}, validity in Lemma~\ref{lem:val}, termination in Corollary~\ref{cor:ter}, agreement in Lemma~\ref{lem:agr}, and accountability in Lemma~\ref{lem:aac}. Finally, we prove that \protocol solves the \problem problem in Theorem~\ref{thm:consf}.

\begin{corollary}[Termination]
  \label{cor:ter}
  The \protocol protocol satisfies termination.
\end{corollary}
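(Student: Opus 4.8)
The plan is to reduce to the termination argument already carried out inside the proof of Lemma~\ref{lem:aac}: that proof establishes that every non-faulty process eventually terminates, so Corollary~\ref{cor:ter} is essentially a restatement of its termination part. For a self-contained argument I would show that every non-faulty process clears the three waiting points of Algorithm~\ref{alg:gen}: the loop that exits once $h(d_r)$ AABC instances have decided $1$ (line~\ref{line:gendec1}), the wait for all $n$ AABC instances to terminate (line~\ref{line:genaabcter}), and the final wait to AARB-deliver the value of minimum index among the instances that decided $1$ (line~\ref{line:genuni}).

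For the first waiting point, I would use AARB-Send (Lemma~\ref{lem:aarb-sen}) to conclude that every non-faulty process eventually AARB-delivers the proposal of every non-faulty source and hence invokes $\lit{AABC-prop}(1)$ on that index (line~\ref{line:genaabc}). The crucial counting step is that the number of committee members acting correctly at any time --- non-faulty processes plus deceitful processes currently following the protocol --- is always at least $h(d_r)$: after $d_r$ detected deceitful processes are removed the committee has size $n-d_r$ and the threshold is $h(d_r)=h_0-d_r$, and the standing assumption $h_0\le n-q-t$ yields $n-d_r-q-t\ge h(d_r)$. If deceitful processes instead send conflicting messages to stall some instance, Rule~\ref{item:timer} forces non-faulty processes to relay their signed messages on timeout, build PoFs, and remove the culprits through $\lit{update-committee}$, which raises $d_r$ and lowers $h(d_r)$ accordingly; so eventually enough AABC instances decide $1$ (by AABC-Strong-Validity and AABC-Termination, Lemmas~\ref{lem:aabc-strongva} and~\ref{lem:aabc-ter}) and the loop exits.

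For the second waiting point, after the loop exits each non-faulty process invokes $\lit{AABC-prop}(0)$ on every remaining index (line~\ref{line:genaabc0}) without needing agreement on which indices those are, and AABC-Termination (Lemma~\ref{lem:aabc-ter}), which holds for any input, guarantees all $n$ instances terminate. For the third, letting $j=\min\{k:\ms{bin-decisions}[k]=1\}$, AABC-Strong-Validity (Lemma~\ref{lem:aabc-strongva}) implies some non-faulty process proposed $1$ to instance $j$ and therefore had AARB-delivered proposal $j$; by AARB-Receive (Lemma~\ref{lem:aarb-rec}) every non-faulty process eventually AARB-delivers it, so line~\ref{line:genuni} unblocks and the process decides.

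I expect the main obstacle to be the bookkeeping around the runtime-varying threshold: showing that ``$\ge h(d_r)$ instances decided $1$'' stays satisfiable while both $n$ and $d_r$ change (this is exactly where $h_0\le n-q-t$ enters, in the form $h(d_r)\le n-q-t-d_r$), and that the set of detected deceitful processes eventually stabilizes --- which it does because every non-faulty process broadcasts every PoF it forms and, by the accountability of AARB and AABC, they converge on the same set --- so that after some point no benign or behaving-deceitful process can keep any of the three waits from completing.
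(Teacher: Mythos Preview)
Your proposal is correct and takes essentially the same approach as the paper: the paper's entire proof is ``Trivial from Lemma~\ref{lem:aac},'' which is precisely your first sentence. Your remaining paragraphs are a faithful (and in places more carefully stated, e.g.\ invoking AABC-Strong-Validity rather than AABC-Validity for the contradiction at index~$j$) unpacking of the termination argument already contained in the proof of Lemma~\ref{lem:aac}.
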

\begin{proof}
  Trivial from Lemma~\ref{lem:aac}.
  \end{proof}

\begin{lemma}[Validity]
  \label{lem:val}
  \protocol satisfies validity.
\end{lemma}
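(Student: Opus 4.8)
The plan is to follow the standard reduction from multi-valued to binary consensus (as in DBFT~\cite{crain2018dbft} and Polygraph~\cite{civit2021}), instantiated with Basilic's AARB and AABC primitives and their generalized thresholds $h(d_r)$. Fix an execution in which every non-faulty process proposes the common value $v$; the goal is to show that no non-faulty process ever executes \textbf{decide} on a value different from $v$.

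First I would invoke AARB-Send (Lemma~\ref{lem:aarb-sen}): since every non-faulty $p_k$ AARB-broadcasts $\langle v,k\rangle$, every non-faulty process eventually AARB-delivers $v$ at index $k$, and by AARB-Unicity (Lemma~\ref{lem:aarb-uni}) together with AARB-Validity (Lemma~\ref{lem:aarb-val}) this is the only value it ever stores at that index. Hence every non-faulty process calls $\lit{BIN-CONSENSUS}[k].\lit{AABC-prop}(1)$ for each non-faulty source $p_k$, and by AABC-Validity (Corollary~\ref{cor:aabc-val}) each such instance decides $1$. Combined with the counting step --- the number of non-faulty sources is at least $n-t-d-q$, and since the adversary respects $h_0\le n-q-t$ while active accountability removes from the committee any deceitful process that would otherwise stall a phase (lowering $h(d_r)=h_0-d_r$ in step with $d_r$), eventually $h(d_r)$ no longer exceeds the number of surviving non-faulty sources --- this shows the loop guarded at line~\ref{line:gendec1} terminates and, together with Corollary~\ref{cor:ter}, that every non-faulty process proposes $0$ to the remaining instances, waits for all of them to halt, and reaches the decision at line~\ref{line:genuni}.

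It then remains to identify the value returned at line~\ref{line:genuni}, which is the AARB-delivered value at the smallest index $j$ with $\ms{bin-decisions}[j]=1$. For any such $j$, AABC-Strong-Validity (Lemma~\ref{lem:aabc-strongva}) forces some non-faulty process to have proposed $1$ to $\lit{BIN-CONSENSUS}[j]$, hence to have AARB-delivered some value at index $j$; by AARB-Receive (Lemma~\ref{lem:aarb-rec}) all non-faulty processes deliver the same value there, and by AARB-Validity it is exactly what $p_j$ AARB-broadcast. Since in this execution no value other than $v$ can pass the AARB/AABC filters at an index whose binary consensus outputs $1$ (for a non-faulty source this is immediate; the faulty-source case is handled exactly as in Polygraph's validity argument), the decided value is $v$. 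The step I expect to be the main obstacle is the bookkeeping under Basilic's dynamic committee: I must check that the ``at least $h(d_r)$ ones'' count and the quorum intersections implicitly used by AABC-Validity and AARB-Receive are preserved across every runtime removal of a deceitful process and the accompanying update of $n$ and $h(d_r)$; this is where the argument departs from the static-committee proofs of DBFT and Polygraph and has to rely on active accountability (Lemmas~\ref{lem:aac} and~\ref{lem:aarb-aac}) to ensure all non-faulty processes converge on the same shrunken committee and threshold before they decide.
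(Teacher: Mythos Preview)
Your high-level approach---the DBFT/Polygraph-style reduction through AARB and AABC---is the same one the paper uses. The essential difference is scope: the paper's proof is deliberately minimal and restricts attention to the case ``all processes are non-faulty,'' which is exactly the hypothesis of Corollary~\ref{cor:aabc-val}. Under that assumption every AARB-delivered proposal is $v$, at least one AABC instance decides $1$ by Corollary~\ref{cor:aabc-val}, all instances terminate by Lemma~\ref{lem:aabc-ter}, and the minimum taken at line~\ref{line:genuni} is trivially $v$. The paper never invokes AABC-Strong-Validity, AARB-Receive, the faulty-source case, or any dynamic-committee bookkeeping; those concerns simply do not arise in the restricted setting it treats.

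Your attempt to argue the stronger with-faults version introduces two concrete gaps. First, you cite Corollary~\ref{cor:aabc-val} to conclude that each AABC instance at a non-faulty source's index decides $1$, but that corollary requires \emph{all} processes to be non-faulty; and even the premise you would need to substitute---that every non-faulty process proposes $1$ to $\lit{BIN-CONSENSUS}[k]$ for every non-faulty $p_k$---is not guaranteed by Algorithm~\ref{alg:gen}, since a non-faulty process may reach the threshold at line~\ref{line:gendec1} and propose $0$ at line~\ref{line:genaabc0} before it AARB-delivers from $p_k$. Second, the faulty-source case you defer to ``Polygraph's validity argument'' is exactly the non-trivial obstacle: AARB-Validity (Lemma~\ref{lem:aarb-val}) only tells you the delivered value was broadcast by the (possibly faulty) source $p_j$, so nothing in your chain rules out a Byzantine $p_j$ with small index getting $v'\neq v$ AARB-delivered and having $\lit{BIN-CONSENSUS}[j]$ output $1$. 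The paper sidesteps this entirely by its all-non-faulty restriction; you cannot close it by citation alone.
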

\begin{proof}
  This is trivial by Corollary~\ref{cor:aabc-val} and the proofs of
AARB. Suppose all processes begin \protocol with value $v$. If all
processes are non-faulty then every proposal AARB-delivered was
AARB-sent by a non-faulty process, and since all processes AARB-send
$v$, only $v$ is AARB-delivered.

Since initially processes only start an AABC instance for which they
can propose $1$, this means that eventually all processes start one
AABC instance proposing $1$. By Corollary~\ref{cor:aabc-val}, this
instance will terminate with all processes deciding $1$. Since the rest
of the AABC instances will eventually terminate by Lemma~\ref{lem:aabc-ter},
this means that processes will terminate at least one instance of AABC
outputting $1$. Upon calculating the minimum of all values (which are
all $v$) whose associated bit is set to $1$, all processes will decide
$v$.
\end{proof}
% \begin{corollary}
%   \label{cor:val}
%   \protocol satisfies SBC-Validity.  
% \end{corollary}

\begin{lemma}[Agreement]
  \label{lem:agr}
  The \protocol protocol satisfies agreement.
\end{lemma}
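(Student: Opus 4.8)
The plan is to reduce agreement of the general \protocol protocol (Algorithm~\ref{alg:gen}) to the agreement guarantees of its two components, the AABC binary consensus and the AARB reliable broadcast, under the standing bound $d+t<2h_0-n$ of Theorem~\ref{thm:consf1}. Fix two non-faulty processes $p_i$ and $p_j$. First I would show that they compute the same binary-decision vector. Every index $k$ for which some non-faulty process records a proposal has its instance $\lit{BIN-CONSENSUS}[k]$ invoked by every non-faulty process --- with input $1$ in line~\ref{line:genaabc} and, for the remaining indices, with input $0$ in line~\ref{line:genaabc0} once at least $h(d_r)$ instances have decided $1$ (line~\ref{line:gendec1}). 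By AABC-Termination (Lemma~\ref{lem:aabc-ter}) each such instance eventually terminates at every non-faulty process, and by AABC-Agreement (Lemma~\ref{lem:aabc-agr}) the two processes decide the same bit in each instance; the bound needed there is preserved under the runtime removal of $d_r$ detected deceitful processes, since $d+t-d_r\le 2(h_0-d_r)-(n-d_r)=2h_0-n-d_r$ is equivalent to the standing assumption $d+t\le 2h_0-n$. Hence $p_i$ and $p_j$ agree on the set $K=\{k:\ms{bin-decisions}[k]=1\}$, which is finite (at most $n$ indices) and non-empty (a non-faulty process leaves the loop of line~\ref{line:gendec1} only after at least $h(d_r)\ge 1$ instances have decided $1$), so they select the same index $j^\star=\min K$ in line~\ref{line:genuni1}.

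Next I would show that $p_i$ and $p_j$ AARB-deliver the same value at index $j^\star$. Since $\ms{bin-decisions}[j^\star]=1$, AABC-Strong Validity (Lemma~\ref{lem:aabc-strongva}) yields that some non-faulty process proposed $1$ to $\lit{BIN-CONSENSUS}[j^\star]$; but a non-faulty process proposes $1$ to instance $k$ only after AARB-delivering a proposal at index $k$ (line~\ref{line:genaabc}). Thus some non-faulty process AARB-delivered a value $v$ at index $j^\star$. By AARB-Receive (Lemma~\ref{lem:aarb-rec}) every non-faulty process eventually AARB-delivers $v$ at index $j^\star$, and by AARB-Unicity (Lemma~\ref{lem:aarb-uni}) no non-faulty process AARB-delivers a different value at that index. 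Therefore the wait condition preceding line~\ref{line:genuni} is eventually met at both $p_i$ and $p_j$ with the same value $v$, and both decide $v$. As $p_i,p_j$ were arbitrary, no two non-faulty processes decide differently.

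The genuinely delicate point --- and the one I expect to be the main obstacle --- is the first step's implicit reliance on the fact that $p_i$ and $p_j$ may be running with different committees and thresholds at the moment they decide, because they may have removed different sets of deceitful processes so far. The resolution is that AABC-Agreement is established for the AABC protocol \emph{together with} its own runtime-removal mechanism, so it holds under the single global assumption $d+t<2h_0-n$ regardless of the transient committee each process sees; moreover, since every non-faulty process re-broadcasts every PoF it learns (accountability, Lemma~\ref{lem:aac}), the committees and hence the binary-decision vectors of all non-faulty processes ultimately coincide. The remaining ingredients --- that the reduction never blocks and that every invoked AABC instance terminates --- are already covered by Lemma~\ref{lem:aac} and Lemma~\ref{lem:aabc-ter}, so no further work is needed there.
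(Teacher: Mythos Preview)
Your proposal is correct and follows essentially the same approach as the paper, just spelled out in far more detail: the paper's proof is a one-liner that merely cites AABC-Agreement (Lemma~\ref{lem:aabc-agr}) and AARB-Receive (Lemma~\ref{lem:aarb-rec}), while you make the reduction explicit and additionally invoke AABC-Strong Validity and AARB-Unicity to justify that the selected index carries a unique AARB-delivered value. Your discussion of the committee-mismatch subtlety and the threshold invariance under removal of $d_r$ deceitful processes is extra care that the paper does not include in this lemma but is consistent with how AABC-Agreement is stated and proved.
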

\begin{proof}
  The proof is immediate having Lemmas~\ref{lem:aabc-agr} and~\ref{lem:aarb-rec}.
\end{proof}

\begin{lemma}[Accountability]
  \label{lem:acc}
  If two non-faulty processes output
disagreeing decision values, then all non-faulty processes eventually
identify at least $2h-n$ faulty processes responsible for that
disagreement.
\end{lemma}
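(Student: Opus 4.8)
The plan is to reduce this to the accountability of the two building blocks, namely AABC-Accountability (Lemma~\ref{lem:aabc-acc}) and AARB-Accountability (Lemma~\ref{lem:aarb-acc}), by exploiting the structure of the multi-valued reduction in Algorithm~\ref{alg:gen}. Suppose two non-faulty processes $p_i$ and $p_j$ decide $v\neq v'$. By construction, each of them decides the AARB-delivered value at the smallest index $k$ for which $\lit{BIN-CONSENSUS}[k]$ output $1$; call these indices $k_i$ and $k_j$. Since both processes terminate every binary instance (no entry of $\ms{bin-decisions}$ remains $\bot$) and, by the argument in the proof of Lemma~\ref{lem:aac}, an index with binary output $1$ is eventually AARB-delivered, these minimum indices and the associated values are well defined. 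A disagreement can then arise in only one of two ways, which I would treat as two cases.

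\emph{Case A: $k_i\neq k_j$.} Without loss of generality $k_i<k_j$, so $p_i$ observed $\ms{bin-decisions}[k_i]=1$ while $p_j$ observed $\ms{bin-decisions}[k_i]=0$. Hence the AABC instance with index $k_i$ produced disagreeing decisions at two non-faulty processes. (By AABC-Agreement, Lemma~\ref{lem:aabc-agr}, this requires $d+t>2h-n$, so such a faulty coalition exists.) AABC-Accountability (Lemma~\ref{lem:aabc-acc}) applied to that instance then gives that every non-faulty process eventually identifies at least $2h-n$ faulty processes responsible for the disagreement.

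\emph{Case B: $k_i=k_j=k$.} Then $p_i$ and $p_j$ decided different values only because they AARB-delivered different values $v\neq v'$ for index $k$. By AARB-Receive (Lemma~\ref{lem:aarb-rec}) this cannot happen unless the faulty set is large enough to break AARB-agreement, in which case AARB-Accountability (Lemma~\ref{lem:aarb-acc}) applied to AARB instance $k$ guarantees that every non-faulty process eventually receives PoFs of the deceitful behaviour of at least $2h(d_r)-n$ processes, including the source of that instance. Since the two cases are exhaustive, in either case a set of at least $2h-n$ PoFs pinning the culprits of the disagreement exists and is eventually constructible from the conflicting certificates that $p_i$ and $p_j$ broadcast alongside their decisions; $\lit{update-committee}$ re-broadcasts every newly discovered PoF, and channels are reliable, so all non-faulty processes eventually hold them.

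I expect the main obstacle to be the bookkeeping of the voting threshold and committee size under runtime removal of deceitful processes. After $d_r$ removals, both $h$ and $n$ shrink, so a fresh disagreement exposes only $2h(d_r)-n$ \emph{new} culprits; one must then observe that the $d_r$ already-removed processes were themselves identified via broadcast PoFs, and that summing these counts recovers the claimed $2h-n$ bound relative to the original committee. This is the same accounting carried out in the proof of Lemma~\ref{lem:aarb-acc} (and implicit in Corollary~\ref{cor:impagr}), so I would simply invoke it rather than redo it. Modulo this reconciliation, the lemma follows by composing the accountability guarantees of AABC and AARB through the reduction.
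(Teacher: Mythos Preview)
Your proposal is correct and follows essentially the same approach as the paper: reduce accountability of the multi-valued protocol to the accountability of its two building blocks, AABC and AARB. The paper's own proof is a one-liner (``immediate from Lemmas~\ref{lem:aabc-aac} and~\ref{lem:aarb-aac}''), whereas you spell out the two-case analysis (disagreement on the minimum index vs.\ disagreement on the AARB-delivered value at a common index) and the threshold bookkeeping; note also that the paper's citation appears to be a typo---it points to the \emph{active}-accountability lemmas rather than the accountability lemmas (Lemmas~\ref{lem:aabc-acc} and~\ref{lem:aarb-acc}) that your argument, and the statement itself, actually require.
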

\begin{proof}
  The proof is immediate from Lemmas~\ref{lem:aabc-aac} and~\ref{lem:aarb-aac}.
\end{proof}
%   \begin{theorem}[\Mypropertyadj Consensus ($f=0$)]
%     \label{thm:cons}
%   The \protocol protocol satisfies \mypropertyadj consensus for $n>2q+d$.
% \end{theorem}
% \begin{proof}
%   \ARP{redo here below:}
%   The proof is immediate when solving the system resulting from the equations of Lemmas~\ref{lem:agr} and~\ref{lem:ter}.
% \end{proof}

\begin{theorem}[Theorem~\ref{thm:consf1}]
  \label{thm:consf}
  The \protocol protocol with initial threshold $h_0\in(n/2,n]$ solves the
\problem problem if $d+t<2h_0-n$ and $q+t\leq n-h_0$.
\end{theorem}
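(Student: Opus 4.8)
The plan is to assemble the five properties required by the \problem problem from the component results already proved for \protocol's sub-protocols (AARB, ABV-broadcast, AABC) and for the reduction in Algorithm~\ref{alg:gen}, and then to verify that the two numeric hypotheses $d+t<2h_0-n$ and $q+t\le n-h_0$ are exactly what those results need. Concretely, Validity is Lemma~\ref{lem:val}; Termination is Corollary~\ref{cor:ter}; Agreement is Lemma~\ref{lem:agr} (AABC-Agreement, Lemma~\ref{lem:aabc-agr}, together with AARB-Receive, Lemma~\ref{lem:aarb-rec}); Accountability is Lemma~\ref{lem:acc} (AABC-Accountability and AARB-Accountability); and \Myproperty is Lemma~\ref{lem:aac}. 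So the body of the proof is largely bookkeeping: the DBFT/Polygraph-style reduction lifts each sub-protocol property to the multi-valued protocol --- in particular, once $\ge h(d_r)$ AARB proposals are delivered and the corresponding AABC instances decide $1$, the remaining AABC instances decide $0$ by AABC-Termination, the lowest decided-$1$ index must have had its proposal AARB-delivered by at least one non-faulty process (otherwise all non-faulty would propose $0$ and AABC-Validity would force the decision $0$), and AARB-Receive then propagates that proposal to everyone so that all non-faulty processes output the same minimum-index value.

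Next I would connect the hypotheses to the voting threshold. For liveness, $q+t\le n-h_0$ rewrites as $h_0\le n-q-t$, so at most $q+t$ processes can withhold a signature and the threshold $h_0$ stays attainable; this is precisely what ABV-Termination (Lemma~\ref{lem:aabv-ter}) and AABC-Termination (Lemma~\ref{lem:aabc-ter}) require, and combined with \Myproperty it guarantees every phase eventually completes. For safety, $d+t<2h_0-n$ is, via Corollary~\ref{cor:impagr}, exactly the condition under which two certificates of $h_0$ distinct signatures cannot be assembled for conflicting values without an overlap of deceitful or Byzantine processes; this is what the agreement arguments of Lemmas~\ref{lem:aabc-agr} and~\ref{lem:aarb-rec} invoke, and Accountability uses the same overlap, of size $\ge 2h_0-n$, to exhibit the culprits. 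Validity does not depend on these bounds beyond what the sub-lemmas already assume.

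The one place the argument is not purely mechanical --- and hence the main obstacle --- is the runtime-updated threshold $h(d_r)=h_0-d_r$ after $d_r$ detected deceitful processes are removed from the committee. I would argue that removal decrements the committee size, the threshold, and the number of still-active deceitful processes all by the same $d_r$, so the overlap available for a disagreement, $2h(d_r)-(n-d_r)=2h_0-n-d_r$, still strictly exceeds the remaining $d+t-d_r$ deceitful-or-Byzantine processes exactly when $d+t<2h_0-n$, and symmetrically $h(d_r)\le (n-d_r)-q-t$ follows from $h_0\le n-q-t$; thus both the safety and the liveness conditions are invariant under committee updates. Consequently, in every execution either some non-faulty process terminates --- and then, by the certificate-reconstruction argument of Lemmas~\ref{lem:aabc-aac} and~\ref{lem:aarb-aac}, every other non-faulty process can terminate using the (possibly signature-filtered) certificate --- or all deceitful processes are eventually detected and removed, reaching $d_r=d$, after which termination of every phase is immediate. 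Collecting these observations establishes Termination, Validity, Agreement, Accountability and \Myproperty under $d+t<2h_0-n$ and $q+t\le n-h_0$, which is the claim.
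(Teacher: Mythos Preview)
Your proposal is correct and follows essentially the same approach as the paper: the paper's proof is a one-liner that simply cites Corollary~\ref{cor:ter} and Lemmas~\ref{lem:aac},~\ref{lem:val},~\ref{lem:agr},~\ref{lem:acc} for termination, \myproperty, validity, agreement, and accountability respectively. Your additional discussion of the reduction mechanics, the role of the two numeric hypotheses, and the invariance of the safety/liveness inequalities under the update $h(d_r)=h_0-d_r$ is accurate exposition of material that the paper has already packaged into those lemmas, so it does not constitute a different route.
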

\begin{proof}
  Corollary~\ref{cor:ter} and Lemmas~\ref{lem:aac},~\ref{lem:val},~\ref{lem:agr}, and~\ref{lem:aac} satisfy termination, \myproperty, validity,
agreement, and accountability, respectively.
%
%   Lemma~\ref{lem:val} shows validity.  The proof of agreement is
% analogous to that of Lemmas~\ref{lem:agr} and~\ref{lem:impagr}, but
% taking into account that the $t$ Byzantine processes can behave like
% deceitful processes (and be removed as a result), obtaining
% $th(d_r)>\frac{n+d+t}{2}-d_r$. The proof of termination and
% \myproperty is analogous to those of Lemmas~\ref{lem:ter}
% and~\ref{lem:impter}, but taking into account that the $t$ Byzantine
% processes can behave just like the $q$ benign processes, obtaining
% $th(d_r)\leq n-q-t-d_r$. Analogously to Theorem~\ref{thm:imp}, we
% combine both bounds together, obtaining $n>3t+d+2q$.
\end{proof}
\begin{corollary}[Corollary~\ref{cor:consf1}]
  \label{cor:consf}
  The \protocol class of protocols solves the \problem problem if $n>3t+d+2q$.
\end{corollary}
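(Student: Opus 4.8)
The plan is to obtain Corollary~\ref{cor:consf} as a direct consequence of Theorem~\ref{thm:consf} by eliminating the voting threshold. Recall that a protocol in the \protocol class is uniquely determined by its initial threshold $h_0\in(n/2,n]$, and that Theorem~\ref{thm:consf} guarantees that the protocol with threshold $h_0$ solves the \problem problem whenever $d+t<2h_0-n$ (safety) and $q+t\le n-h_0$ (liveness). So it suffices to show that, for every fault profile $(t,d,q)$ with $n>3t+d+2q$, there is some legitimate threshold $h_0$ satisfying both of these inequalities.

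Concretely, I would pick $h_0=n-q-t$. This is an integer, hence a valid threshold, and it is in fact the largest threshold compatible with liveness. First I would check membership in the allowed range $(n/2,n]$: the bound $h_0\le n$ is immediate, and since $n>3t+d+2q\ge 2t+2q$ we get $n>2(t+q)$, i.e.\ $n/2>t+q$, hence $h_0=n-q-t>n/2$.

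Next I would verify the two hypotheses of Theorem~\ref{thm:consf} for this choice. The liveness condition is tight: $n-h_0=n-(n-q-t)=q+t$, so $q+t\le n-h_0$ holds with equality. For safety, compute $2h_0-n=2(n-q-t)-n=n-2q-2t$; thus $d+t<2h_0-n$ is equivalent to $d+t<n-2q-2t$, i.e.\ to $n>3t+d+2q$, which is precisely the hypothesis of the corollary. Therefore both conditions of Theorem~\ref{thm:consf} hold, and that theorem yields that the corresponding member of the \protocol class solves the \problem problem, completing the proof.

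I do not expect a real obstacle here, since the statement is just a reformulation of Theorem~\ref{thm:consf} with $h_0$ quantified away. The only mild subtlety is ensuring the chosen threshold is simultaneously an integer and strictly above $n/2$; taking $h_0=n-q-t$ settles both at once. Alternatively one could observe that any integer in the interval $\big((n+d+t)/2,\ n-q-t\big]$ works, and that this interval is nonempty exactly when $n>3t+d+2q$, so the two formulations of the bound coincide.
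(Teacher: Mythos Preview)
Your proposal is correct and follows essentially the same approach as the paper: the paper's proof simply says the result is immediate from Theorem~\ref{thm:consf} after eliminating $h_0$ from the system $d+t<2h_0-n$ and $q+t\le n-h_0$, and your argument carries out this elimination explicitly by exhibiting the extremal choice $h_0=n-q-t$. Your version is more detailed (in particular you verify $h_0\in(n/2,n]$, which the paper leaves implicit), but the underlying idea is identical.
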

\begin{proof}
  The proof is immediate from Theorem~\ref{thm:consf1} after removing $h_0$ from the system of two
inequations defined by $d+t<2h_0-n$ and $q+t\leq n-h_0$.
\end{proof}
\begin{corollary}[Corollary~\ref{cor:dec1}]
  The \protocol class of protocols solves $\Diamond$-consensus if $n>2t+d+q$.
\end{corollary}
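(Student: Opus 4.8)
The plan is to reduce the corollary to Theorem~\ref{thm:ec} by eliminating the initial threshold $h_0$ from the two-inequality system it requires, exactly as the proof of Corollary~\ref{cor:consf} does for plain consensus. Recall that ``the $\Diamond$-\protocol class solves $\Diamond$-consensus'' means that \emph{some} member of the class --- i.e. some admissible choice of the initial threshold $h_0$ --- solves the problem; so it suffices to show that under the hypothesis $n>2t+d+q$ there is an integer $h_0$ simultaneously satisfying the safety bound $d+t<h_0$ and the liveness bound $q+t\leq n-h_0$ of Theorem~\ref{thm:ec} (the liveness bound read as in the ``Our Results'' statement and Corollary~\ref{cor:impter}).

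First I would exhibit the witness $h_0=d+t+1$. The safety condition $d+t<h_0$ then holds by construction. For liveness, $h_0\leq n-q-t$ is equivalent to $d+t+1\leq n-q-t$, i.e. $n\geq 2t+d+q+1$, which is precisely $n>2t+d+q$ since all quantities are integers. I would also note that $h_0$ is a legitimate threshold: $h_0=d+t+1\geq 1$ and $h_0\leq n-q-t\leq n$. Unlike the plain-consensus case of Corollary~\ref{cor:consf1}, no lower bound $h_0>n/2$ is needed here, because $\Diamond$-\protocol tolerates and recovers from the disagreements that a small threshold would permit, requiring only $d+t<h_0$ so that every valid certificate contains a non-faulty signer and hence the number of disagreements is finite.

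Given such an $h_0$, Theorem~\ref{thm:ec} immediately yields that $\Diamond$-\protocol with initial threshold $h_0$ solves $\Diamond$-consensus against $t$ Byzantine, $d$ deceitful and $q$ benign processes, which is the claim. The same argument shows that in fact any integer in the interval $[\,d+t+1,\ n-q-t\,]$ works, so it also characterizes the full range of admissible thresholds for fixed $(t,d,q)$.

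The only point of care I anticipate is the bookkeeping of the threshold elimination: one must track whether the liveness bound is strict or non-strict and keep the integrality of $h_0$ in mind, since an off-by-one there would change the final condition from $n>2t+d+q$ to $n>2t+d+q+1$. Beyond that, the proof is a one-line consequence of Theorem~\ref{thm:ec} and contains no substantive new obstacle.
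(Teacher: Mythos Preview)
Your proposal is correct and follows essentially the same approach as the paper: the paper's proof is the one-liner ``immediate from Theorem~\ref{thm:ec} after removing $h_0$ from the system of two inequations defined by $d+t<h_0$ and $q+t<n-h_0$,'' and your explicit witness $h_0=d+t+1$ together with the discussion of strictness and of the $h_0>n/2$ constraint simply fleshes out that elimination. Your extra care about integrality and the threshold's lower bound goes beyond what the paper records, but the underlying argument is the same.
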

\begin{proof}
    The proof is immediate from Theorem~\ref{thm:ec} after removing $h_0$ from the system of two inequations defined by $d+t<h_0$ and $q+t<n-h_0$.
\end{proof}
\subsubsection{Impossibility of consensus without \myproperty}In the proofs of Corollary~\ref{cor:imp} and Theorem~\ref{thm:imp} we
considered that deceitful faults do not prevent termination, that is,
that the protocol satisfies \myproperty. We show in
Corollary~\ref{cor:imp} the analogous result in the case where
deceitful processes can actually prevent termination, that is, if the
protocol does not satisfy \myproperty. In this case, since deceitful
can have the same impact as Byzantine (in that they can prevent either
agreement or termination), then the bounds decrease to
$n>3(t+d)+2q$. Note that other protocols that use authentication may
also be subject to this bound if they do not satisfy \myproperty, as
it is the case for Polygraph~\cite{civit2021}.

\begin{corollary}
  \label{cor:imp}
        It is impossible for a protocol that solves consensus without satisfying \myproperty to tolerate $t$ Byzantine, $d$ deceitful and $q$ benign processes if $n\leq 3(t+d)+2q$.
%   Let $\sigma$ be a protocol that solves consensus without satisfying \myproperty. Then $\sigma$
% tolerates at most $n>3(t+d)+2q$ faults.
\end{corollary}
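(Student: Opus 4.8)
The plan is to re-run the three-scenario impossibility argument of Theorem~\ref{thm:imp} (which itself refines Lemma~\ref{lem:imp}), exploiting the fact that, once \myproperty is dropped, a deceitful process is as harmful as a Byzantine one. Concretely, since $\sigma$ solves consensus but not \myproperty, deceitful behavior can prevent liveness: an adversary may let a set of deceitful processes send conflicting messages perpetually, and $\sigma$ has no mechanism to discard or remove them, so the remaining processes must be able to terminate (and preserve agreement and validity) without ever counting on those processes' good participation. Thus a deceitful process can be used in exactly the two capacities in which a Byzantine process is used in the proof of Theorem~\ref{thm:imp}: it can equivocate towards two disjoint sets of non-faulty processes (threatening agreement), and it can withhold useful progress from the rest (threatening termination, just like a silent benign process). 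In reduction form, for the purposes of the impossibility construction $E_\sigma(t,d,q)$ contains the scenarios used to prove impossibility for a budget of $t+d$ Byzantine, $0$ deceitful and $q$ benign processes, so Theorem~\ref{thm:imp} applied with $t$ replaced by $t+d$ and $d$ by $0$ already yields the bound $n \le 3(t+d)+2q$.

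Carrying this out explicitly, I would partition $N = P \sqcup Q \sqcup R$ and reuse scenarios A, B, C of Lemma~\ref{lem:imp} verbatim. In scenarios~A and~B the processes of $Q$ (resp.\ $P$) are made non-contributing, so they may be any mix of benign, deceitful, or Byzantine processes (a deceitful process here equivocates perpetually rather than crashing, which by the lack of \myproperty is just as unhelpful), giving $|P|,|Q| \le q+d+t$; validity and termination force $P\cup R$ to decide $0$ by some time $T_A$ and $R\cup Q$ to decide $1$ by some time $T_B$, exactly as in Lemma~\ref{lem:imp}. In scenario~C the processes of $R$ equivocate towards $P$ and towards $Q$ and the $P$--$Q$ channels are delayed past $\max(T_A,T_B)$, so $R$ may be any mix of deceitful and Byzantine processes, giving $|R| \le d+t$. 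Then $P$ cannot distinguish C from A and decides $0$, while $Q$ cannot distinguish C from B and decides $1$: a disagreement. Covering all of $N$ requires $|P|+|Q|+|R| \ge n$, which is feasible precisely when $n \le 2(q+d+t)+(d+t) = 3(t+d)+2q$, so no $(t,d,q)$-fault-tolerant consensus protocol lacking \myproperty can exist in that range.

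The main obstacle is the indistinguishability bookkeeping in scenario~C once deceitful processes populate the ``non-contributing'' partitions $P$ and $Q$: I must argue that a deceitful process that never closes a phase produces, from the viewpoint of each deciding partition, a run indistinguishable from one in which that process is simply slow (delayed) or benign, so that the validity-based forcing argument of Lemma~\ref{lem:imp} still pins the decisions $0$ and $1$ in scenario~C; this is exactly the place where \myproperty would otherwise intervene by removing the culprits and thereby restoring the weaker bound of Theorem~\ref{thm:imp}. The remaining ingredients — that $\sigma$, being fault-tolerant, must terminate in each of A, B, C, and that delaying the $P$--$Q$ messages keeps A and B indistinguishable from C for the respective partitions — are identical to the corresponding steps in the proofs of Lemma~\ref{lem:imp} and Theorem~\ref{thm:imp}.
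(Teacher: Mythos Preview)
Your proposal is correct and takes essentially the same approach as the paper: both argue that without \myproperty a deceitful process is, for the purposes of the lower-bound construction, interchangeable with a Byzantine one, so that the arithmetic of Theorem~\ref{thm:imp} is rerun with the substitution $t\mapsto t+d$, $d\mapsto 0$. The paper's own proof is even terser than yours---it simply notes that deceitful processes can now prevent termination and rewrites the inequality of Theorem~\ref{thm:imp} with an extra $-2d$ on the right-hand side, obtaining $n+t+d\leq 2n-2q-2t-2d$ and hence $n>3(t+d)+2q$. Your explicit three-scenario reconstruction and the indistinguishability concern you flag go beyond the level of detail in the paper's proof; the paper treats that step at the same informal level as Theorem~\ref{thm:imp} itself.
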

\begin{proof}
  The proof is analogous to Theorem~\ref{thm:imp} with the difference
that deceitful processes can actually prevent termination by sending
conflicting messages. Thus, we have $n+t+d\leq 2n-2q-2t-2d$,
which means $n>3(t+d)+2q$.
\end{proof}

\subsection{Extended complexities of \protocol}
\label{sec:extcomp}
\subsubsection{Complexities Before GST}
Before GST and in the presence of an adversary controlling $t$
Byzantine, $d$ deceitful, and $q$ benign processes, let $a$ be the number
of times the timer is reached before GST (i.e. $a\geq
\ceil{\frac{GST}{\Delta}}$), then the message and bit complexities of
AABC increase by a factor of $a\cdot n$, thus to $\mathcal{O}(an^3)$
and $\mathcal{O}(\lambda a n^4)$, respectively. The same occurs with
AARB's complexities. The time complexities are also affected by the
time $a$ to reach GST thus to $\mathcal{O}(an)$ for AABC and the general
\protocol, and $\mathcal{O}(a)$ for AARB.

% Finally, assuming that verifying signatures takes
% $\mathcal{O}(1)$, the complexity is $\mathcal{O}(n)$\ARP{could be $\mathcal{O}(n\cdot a)$?},
% because all processes verify at most two signatures from the same
% process for the same type of message (in this case these are
% conflicting messages and thus a PoF), and for the same round.

Since there are $n$ pairs of reliable broadcasts and binary consensus
instances in the \protocol general protocol, % the worst-case computational
% complexity is $\mathcal{O}(n^2)$,
the time complexity is
$\mathcal{O}(t+q+d)$, message complexity $\mathcal{O}(an^3)$ and bit
complexity $\mathcal{O}(\lambda a n^4)$. We show in
Table~\ref{tab:nogstcom} the worst-case complexities of the three protocols.

\begin{table}[htp]\centering
%\hspace{-1.2em}
  \setlength{\tabcolsep}{11pt}
\begin{tabular}{lrrrr}
\toprule
  Complexity &
AARB &
AABC &
Basilic \\\midrule
  Time & $\mathcal{O}(a)$ & $\mathcal{O}(b)$ & $\mathcal{O}(b)$ \\
  % Computational & $\mathcal{O}(n)$ & $\mathcal{O}(n)$ & $\mathcal{O}(n^2)$ \\
  Message & $\mathcal{O}(an^2)$ & $\mathcal{O}(an^3)$ & $\mathcal{O}(an^4)$ \\
  Bit& $\mathcal{O}(\lambda an^3)$& $\mathcal{O}(\lambda an^4)$& $\mathcal{O}(\lambda an^5)$\\
  \bottomrule
\end{tabular}

\caption{Time, message and bit complexities of Basilic's
AARB, AABC and the general \protocol protocol, before GST.}
\label{tab:nogstcom}
\end{table}
\subsubsection{Proofs} We prove in this section the complexities of \protocol, and of \protocol's AARB and AABC, which we presented in Section~\ref{sec:comps}.
\begin{lemma}[\protocol's AARB Complexity]
  \label{lem:compaarb}
  After GST and if the source is non-faulty, \protocol's AARB protocol has time complexity
$\mathcal{O}(1)$, message complexity $\mathcal{O}(n^2)$ and bit
complexity $\mathcal{O}(\lambda\cdot n^3)$.
\end{lemma}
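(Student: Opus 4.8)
The plan is to obtain all three bounds by a direct accounting of the messages that Algorithm~\ref{alg:arb} produces in a post-GST execution whose source $p_s$ is non-faulty, the one extra ingredient being the observation that, in such an execution, the timeout-triggered rebroadcast rule of Algorithm~\ref{alg:arb} fires only a constant number of times before every non-faulty process has AARB-delivered.

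First I would pin down the $\mathcal{O}(1)$ time bound. After GST every message sent by a non-faulty process is received within $\Delta$. The source broadcasts $\INIT$ at line~\ref{line:aarbecho1}, so within $\Delta$ every non-faulty process has received it and broadcast a signed $\ECHO$ for the single value $v$ at line~\ref{line:aarbecho2}. The standing bound $h(d_r)\le n-q-t$ on the adversary guarantees that at least $h(d_r)$ processes echo $v$ correctly, so within $2\Delta$ every non-faulty process holds $h(d_r)$ matching signed $\ECHO$ messages, broadcasts a $\READY$ carrying the corresponding certificate (lines~\ref{line:aarbconstructcert1}--\ref{line:aarb-broadcastReady1}), and AARB-delivers $v$. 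Hence delivery occurs within $2\Delta$, i.e. $\mathcal{O}(1)$ message delays, and in particular within a constant number of timeout periods; consequently the rebroadcast rule contributes only $\mathcal{O}(1)$ additional communication rounds, and the committee-update path (lines~\ref{line:crosscheck1}--\ref{line:aarbupdcom2}) is likewise exercised at most $\mathcal{O}(1)$ times and can only lower $h(d_r)$, which only speeds up delivery.

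Next I would count messages and bits. The $\INIT$ broadcast costs $\mathcal{O}(n)$ messages; the $\ECHO$ round costs $\mathcal{O}(n^2)$ (each of $n$ processes broadcasts to all); the $\READY$ round costs $\mathcal{O}(n^2)$; and each of the $\mathcal{O}(1)$ rebroadcast rounds of delivered $\INIT$/$\ECHO$ messages again costs $\mathcal{O}(n^2)$, so the message complexity is $\mathcal{O}(n^2)$. For bits: $\INIT$ and $\ECHO$ messages are $\mathcal{O}(\lambda)$ bits each (a value plus a signature), contributing $\mathcal{O}(\lambda n^2)$ in total; each $\READY$ message piggybacks a certificate of $h(d_r)=\mathcal{O}(n)$ signed $\ECHO$ messages, hence is $\mathcal{O}(\lambda n)$ bits, so the $\mathcal{O}(n^2)$ $\READY$ messages cost $\mathcal{O}(\lambda n^3)$ bits; the $\mathcal{O}(1)$ rebroadcast rounds, in which each process forwards its $\mathcal{O}(n)$ delivered signed messages to all $n$ peers, also cost $\mathcal{O}(\lambda n^3)$ bits. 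Thus the bit complexity is dominated by the certificate-carrying $\READY$ (and rebroadcast) traffic, giving $\mathcal{O}(\lambda n^3)$.

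The main obstacle — and the only step I would spell out in detail rather than leave as bookkeeping — is justifying that a non-faulty source together with the post-GST assumption really does cap this instance at $\mathcal{O}(1)$ rounds despite deceitful and benign processes: the crucial point is that a non-faulty process reaches the $h(d_r)$ $\ECHO$ threshold \emph{for a single value} using only the $\ge h(d_r)$ correct echoes of the value broadcast by $p_s$, so conflicting echoes from deceitful processes are pure noise that, at worst, triggers one timeout period of PoF cross-checking and committee update before the (now smaller) threshold is met; no benign or deceitful behavior can stall the $\INIT$--$\ECHO$--$\READY$ pipeline beyond a constant number of $\Delta$-periods. Everything else is a mechanical tally over Algorithm~\ref{alg:arb}.
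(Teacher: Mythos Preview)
Your proposal is correct and follows essentially the same approach as the paper's own proof: both argue $\mathcal{O}(1)$ time by observing that after GST either the $h(d_r)$ threshold is met directly or one timeout-triggered rebroadcast suffices to expose and remove the deceitful echoers, then obtain $\mathcal{O}(n^2)$ messages from the all-to-all $\ECHO$/$\READY$ broadcasts and $\mathcal{O}(\lambda n^3)$ bits from the $\mathcal{O}(n)$-signature certificates carried in $\READY$ (and in the rebroadcast batches). Your write-up is simply more explicit in tracing the $\INIT$--$\ECHO$--$\READY$ pipeline and in itemizing the per-message bit counts.
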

\begin{proof}
  After GST, all non-faulty processes will have received a message
from each non-faulty process and from each deceitful processes by the time the
timer reaches $0$. Thus, either non-faulty processes can terminate, or they broadcast
their current list of \ECHO and \INIT messages, after which they
remove the detected deceitful processes, and they can terminate
too. Thus, the time complexity is $\mathcal{O}(1)$. Then, the message
complexity is $\mathcal{O}(n^2)$, as each non-faulty process
broadcasts at least one \ECHO and \READY message, and, in some
executions, a list of \ECHO messages that they delivered by the time
the timer reaches $0$. Since both this list and \READY messages
contain $\mathcal{O}(n)$ signatures, or $\mathcal{O}(\lambda n)$ bits,
the bit complexity of \protocol's AARB is $\mathcal{O}(\lambda n^3)$.
\end{proof}
\begin{lemma}[\protocol's AABC Complexity]
  \label{lem:compaabc}
  After GST, \protocol's AABC protocol has time complexity
$\mathcal{O}(n)$, message complexity $\mathcal{O}(n^3)$ and bit
complexity $\mathcal{O}(\lambda\cdot n^4)$.
\end{lemma}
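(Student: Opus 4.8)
The plan is to bound the number of rounds that AABC runs after GST and then charge the communication cost to each round.

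\textbf{Bounding the number of rounds.} After GST every message between non-faulty processes is delivered within $\Delta$, which is exactly the value to which every phase timer is (re)set (Rule~\ref{item:timer}). Hence, once the committee has stopped changing, each of the constantly many phases of a round finishes within a single timer cycle and no process ever needs to rebroadcast its delivered messages. It therefore suffices to bound how long the committee keeps shrinking and how many rounds are needed once it is stable. For the first part, a deceitful process that sends conflicting messages preventing some phase from finishing is detected the first time that phase's timer expires: at that moment all non-faulty processes have exchanged their delivered signed messages, extract the corresponding PoFs, and remove the offender (this is the mechanism used in the proof of Lemma~\ref{lem:aabc-aac}); since there are at most $d<n$ deceitful processes this costs $\mathcal{O}(n)$ extra timer cycles, i.e. $\mathcal{O}(n)$ rounds. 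For the second part, the rotating coordinator of line~\ref{line:rotcoord} visits every process once per $n$ rounds, so within $n$ rounds a round $r$ is reached whose coordinator is non-faulty and whose committee no longer changes; arguing as in the proof of Lemma~\ref{lem:aabc-aac}, all non-faulty processes then share a common estimate and, by Lemma~\ref{lem:aabc-aux}, decide by round $r+1$. Summing, AABC runs for $\mathcal{O}(n)$ rounds after GST, and since every phase lasts $\mathcal{O}(\Delta)=\mathcal{O}(1)$ time, the time complexity is $\mathcal{O}(n)$.

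\textbf{Message complexity.} Each round uses a constant number of all-to-all steps: the \BVALECHO/\BVALREADY exchanges of the Phase~1 ABV-broadcast (including the one extra \BVALECHO relay of line~\ref{line:bvechobroad2}), the single \COORD broadcast, the Phase~2 \ECHO exchange, the at-most-one decision/certificate broadcast per process in the decision phase, and -- only during the $\mathcal{O}(n)$ cleanup rounds -- one rebroadcast of delivered messages per phase when a timer expires. Each such step costs $\mathcal{O}(n^2)$ messages, so a round contributes $\mathcal{O}(n^2)$ messages and the $\mathcal{O}(n)$ rounds give $\mathcal{O}(n^3)$ in total; the \POF broadcasts add at most one message per non-faulty process per removed deceitful process, i.e.\ $\mathcal{O}(n^2)$ messages overall, which does not affect the bound.

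\textbf{Bit complexity.} The heaviest messages are those piggybacking a certificate or a $\ms{bv\_cert}$ (a list of $h(d_r)\le n$ signed messages, hence $\mathcal{O}(\lambda n)$ bits), the sets of signed messages rebroadcast on timer expiry ($\mathcal{O}(n)$ signed messages, $\mathcal{O}(\lambda n)$ bits), and the \POF s (a constant number of signed messages together with their certificates, $\mathcal{O}(\lambda n)$ bits). Every message is thus $\mathcal{O}(\lambda n)$ bits, and multiplying by the $\mathcal{O}(n^3)$ messages gives bit complexity $\mathcal{O}(\lambda n^4)$.

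The delicate step, and the one that goes beyond Polygraph's analysis, is the round bound: one must argue simultaneously that after GST each uncorrupted phase terminates within a single $\Delta$-cycle (so no spurious rebroadcasts inflate the per-round cost) and that the interleaving of deceitful detection -- which keeps lowering $h(d_r)$ and shrinking $N$ -- with the coordinator rotation still stabilises within $\mathcal{O}(n)$ rounds; once the committee is stable the execution coincides with a clean Polygraph/DBFT run, for which the $\mathcal{O}(n)$-round bound is classical.
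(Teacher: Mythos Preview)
Your proof is correct and follows essentially the same approach as the paper: bound the number of rounds by $\mathcal{O}(n)$ via the twin observations that (i) at most $d<n$ deceitful processes can each stall at most one phase before being detected and removed, and (ii) the rotating coordinator guarantees a non-faulty leader within $n$ rounds, after which Lemma~\ref{lem:aabc-aux} gives termination; then charge $\mathcal{O}(n^2)$ messages per round, each of size $\mathcal{O}(\lambda n)$. Your write-up is somewhat more explicit than the paper's about the timer-expiry rebroadcasts and the \POF\ traffic not inflating the per-round cost, but the decomposition, the key lemmas invoked, and the arithmetic are identical.
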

\begin{proof}
  After GST, the \protocol protocol terminates in the first round (i)
whose leader is a non-faulty process and (ii) after having removed
enough deceitful faults so that they cannot prevent termination. Since
$t+d+q<n$, we have that (i) holds in $\mathcal{O}(n)$. As for every
added round in which deceitful faults prevent termination, a non-zero
number of deceitful faults are removed, we have that (ii) holds in
$\mathcal{O}(n)$ as well. This means that \protocol terminates in
$\mathcal{O}(n)$ rounds. In each round during phase $1$ of AABC,
non-faulty processes execute an ABV-broadcast of $\mathcal{O}(n^2)$,
obtaining $\mathcal{O}(n^3)$ messages. The bit complexity is
$\mathcal{O}(\lambda n^4)$ as each message may contain up to two
ledgers of $\mathcal{O}(n)$ signatures, or
$\mathcal{O}(\lambda n)$ bits. The complexities of phase 2 are
equivalent and obtained analogously to those of phase 1, as non-faulty
processes may broadcast $\mathcal{O}(n)$ signatures if
deceitful faults prevent termination of phase 2, or a
certificate if they decide in this round.
\end{proof}
\begin{theorem}
  \label{thm:compgen}
  The \protocol protocol has time complexity $\mathcal{O}(n)$, message complexity $\mathcal{O}(n^3)$ and bit complexity $\mathcal{O}(\lambda\cdot n^4)$.
\end{theorem}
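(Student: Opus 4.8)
The plan is to obtain the three bounds by composing the per-component complexities of Lemmas~\ref{lem:compaarb} and~\ref{lem:compaabc}, using the fact that in Algorithm~\ref{alg:gen} the $n$ AARB instances (one per source, line~\ref{line:genaarb}) and the $n$ AABC instances (lines~\ref{line:genaabc} and~\ref{line:genaabc0}) all run concurrently, and that the selection step of lines~\ref{line:genuni1}--\ref{line:genuni} is purely local and adds no communication.

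For the time bound, I would first observe that after GST the $n$ AABC instances can be run in lockstep: all non-faulty processes align the round counter across instances, so round $r$ of every instance falls in the same $\mathcal{O}(1)$-delay window. By Lemma~\ref{lem:compaabc} each AABC instance terminates within $\mathcal{O}(n)$ rounds after GST, hence so do all $n$ of them jointly; and by Lemma~\ref{lem:compaarb} every AARB instance with a non-faulty source terminates within $\mathcal{O}(1)$ rounds, while AARB instances with a faulty source cannot stall non-faulty processes, which simply propose $0$ to the corresponding AABC instances without waiting (line~\ref{line:genaabc0}). Since lines~\ref{line:genuni1}--\ref{line:genuni} add no rounds, the overall time complexity is $\mathcal{O}(n)+\mathcal{O}(1)=\mathcal{O}(n)$. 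For the message and bit bounds, because the rounds of the $n$ AABC instances are aligned, a non-faulty process packs, in each of the $\mathcal{O}(n)$ rounds, its phase-$1$ and phase-$2$ contributions to all $n$ instances into a single message per recipient, and the $n$ AARB instances are likewise carried within a single round of batched \ECHO/\READY exchanges; this gives $\mathcal{O}(n^{2})$ messages per round over $\mathcal{O}(n)$ rounds, i.e.\ $\mathcal{O}(n^{3})$ messages in total, the $\mathcal{O}(n^{3})$ messages of the AARB instances being subsumed. Each batched round message carries $\mathcal{O}(n)$ binary estimates plus $\mathcal{O}(\lambda n)$ bits of certificate and signature material, hence $\mathcal{O}(\lambda n)$ bits, so $\mathcal{O}(\lambda n)\cdot\mathcal{O}(n^{2})\cdot\mathcal{O}(n)=\mathcal{O}(\lambda n^{4})$ bits in total, again subsuming the $\mathcal{O}(\lambda n^{4})$ bits contributed by the AARB instances. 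Combining, \protocol has time complexity $\mathcal{O}(n)$, message complexity $\mathcal{O}(n^{3})$, and bit complexity $\mathcal{O}(\lambda\cdot n^{4})$.

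The delicate point, and the one I would spend most care on, is the batching/concurrency argument underlying the message and bit counts: one must check that interleaving the $n$ AABC instances does not multiply the round count (it does not, since the instances are independent and share a common round counter after GST) and, more importantly, that a process's round-$r$ message aggregated over all $n$ instances does not blow up the certificate material by an extra factor of $n$. Here I would use that the signed-message store $\ms{sig\_msgs}$ and the set of detected deceitful processes are maintained globally across all AARB and AABC instances rather than per instance (as stressed in Section~\ref{sec:aarb}), so that a single shared ledger of $\mathcal{O}(\lambda n)$ bits serves every instance; everything else is the routine arithmetic of adding up Lemmas~\ref{lem:compaarb} and~\ref{lem:compaabc} over the $2n$ component instances.
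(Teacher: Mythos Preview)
The paper's proof is a one-line composition: since \protocol runs $n$ AARB instances and $n$ AABC instances, one multiplies the bounds of Lemmas~\ref{lem:compaarb} and~\ref{lem:compaabc} by $n$. Read literally, this yields message complexity $n\cdot\mathcal{O}(n^{3})=\mathcal{O}(n^{4})$ and bit complexity $n\cdot\mathcal{O}(\lambda n^{4})=\mathcal{O}(\lambda n^{5})$, which is exactly what appears in Tables~\ref{tab:gstcom} and~\ref{tab:bigtab} for the naive \protocol; the $\mathcal{O}(n^{3})$ and $\mathcal{O}(\lambda n^{4})$ in the theorem statement are thus almost certainly a typo. Your batching argument is an attempt to prove the statement \emph{as written}, and it goes well beyond anything the paper does.

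The genuine gap is in your bit count. You assert that a batched round message across all $n$ AABC instances carries only $\mathcal{O}(\lambda n)$ bits of certificate material because ``a single shared ledger of $\mathcal{O}(\lambda n)$ bits serves every instance''. That is not true: the certificates attached to \BVALECHO, \BVALREADY, and \ECHO messages are \emph{instance-specific}---a certificate for instance $k$ consists of $h(d_r)$ signed messages from round $r$ of instance $k$, and these signatures cannot stand in for those of instance $k'\neq k$. What is shared across instances is only the set $\ms{local\_pofs}$ of detected deceitful processes and the resulting threshold $h(d_r)$, not the per-instance certificates. Consequently a batched message for $n$ instances carries $n$ certificates of $\mathcal{O}(\lambda n)$ bits each, i.e.\ $\mathcal{O}(\lambda n^{2})$ bits, and your arithmetic becomes $\mathcal{O}(\lambda n^{2})\cdot\mathcal{O}(n^{2})\cdot\mathcal{O}(n)=\mathcal{O}(\lambda n^{5})$, matching the paper's tables. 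Your message-count reduction via batching can be made to work if one counts physical messages, but the paper does not invoke it, and even granting it the bit complexity cannot drop below $\mathcal{O}(\lambda n^{5})$ for the unoptimized Algorithm~\ref{alg:gen}.
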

\begin{proof}
  The proof is immediate from Lemma~\ref{lem:compaabc} and
Lemma~\ref{lem:compaarb} since \protocol executes $n$ instances of
AARB and after $n$ instances of AABC.
\end{proof}
% \begin{corollary}
% The \protocol protocol solves \mypropertyadj consensus.
% \end{corollary}
 \addcontentsline{toc}{section}{Appendices}
\renewcommand{\thesubsection}{\Alph{subsection}}

%%%%%%%%%%%%%%%%%%%%%%%%%%%%%%%%%%%%%%%%%%%%%%%%%%%%%%%%%%%%%%%%%%%%%%%%%%%%%%%%
\end{document}
\endinput
